\keywords{conditional transition systems, coalgebras, behavioural
  equivalence, Kleisli categories}
\tikzset{
  gnode/.style={draw,shape=circle,inner sep=0,minimum height=.2cm,
    minimum width=.2cm},
  hyperedge/.style={shape=rectangle,draw,inner sep=0,minimum width=.6cm,
    minimum height=.4cm}
}
\newcommand{\descto}[3][]{\arrow[phantom]{#2}[#1]{\text{\footnotesize{}\begin{tabular}{c}#3\end{tabular}}}}
\tikzstyle{shiftarr}=[
\newsavebox{\kleislidot}
\savebox{\kleislidot}{%
\begin{tikzpicture}[baseline=0pt,outer sep=0pt]
    \draw[fill=white] (0,0) circle (2pt);
  \end{tikzpicture}}
\tikzstyle{kleisli}=[
\def\enspace{}
\definecolor{ctsgreen}{HTML}{00CE00}
\newsavebox{\kleisliarrow}
\savebox{\kleisliarrow}{%
\begin{tikzpicture}[
      baseline=(arrow.base),
      inner sep=8mm,
      outer sep=0mm,
      ]
      \node[draw=none,
      anchor=base,
      inner sep=0,
      outer sep=0,
      ] (arrow) {$\longrightarrow$};
    \draw[fill=white] ($ (arrow.south) !.68! (arrow.north)$) circle (0.15em);
  \end{tikzpicture}}
\newcommand{\kleislito}{\ensuremath{\mathbin{\usebox{\kleisliarrow}}}}
\newcommand{\todo}[1]{{\color{blue}\footnote{\textcolor{blue}{#1}}}}
\renewcommand*\showkeyslabelformat[1]{%
\@ifundefined{hideNextShowKeysLabel}{%
\noexpandarg%
\StrSubstitute{#1}{ }{\textvisiblespace}[\TEMP]%
\parbox[t]{\marginparwidth}{\raggedright\normalfont\small\ttfamily\{{\color{red!50!black}\expandafter\seqsplit\expandafter{\TEMP}}\}}%
}{}
}
\tikzset{every fit/.style={shape=rectangle,inner sep=5pt}}
\tikzset{
  mono/.style={>->},
  ontop/.style={preaction={draw,-,line width=3pt,white}},
  arlab/.style={circle,inner sep=1pt,font=\scriptsize}
}
\renewcommand{\phi}{\varphi}
\newcommand{\Kl}{\mathsf{Kl}}
\newcommand{\coKl}{\mathsf{coKl}}
\newcommand{\pos}{\mathsf{Poset}}
\renewcommand{\S}{\mathcal{S}}
\newcommand{\A}{\mathcal{A}}
\newcommand{\C}{\mathcal{C}}
\newcommand{\D}{\mathcal{D}}
\newcommand{\E}{\mathcal{E}}
\newcommand{\M}{\mathcal{M}}
\newcommand{\J}{\mathcal{J}}
\renewcommand{\L}{\mathbb{L}}
\newcommand{\Set}{\mathsf{Set}}
\newcommand{\obj}{\mathbf{obj}\,}
\newcommand{\op}{\mathsf{op}}
\newcommand{\FDL}{\mathsf{FDL}}
\newcommand{\twochain}{\mathbbm{2}}
\newcommand{\Dual}{\mathcal{D}}
\newcommand{\pow}{\mathcal{P}}
\newcommand{\power}[1]{\mathcal P(#1)}
\newcommand{\lattice}{\mathbb L}
\newcommand{\inv}[1]{{#1}^{-1}}
\newcommand{\smin}{\mathsf{min}}
\renewcommand{\min}{\smin}
\newcommand{\jid}{\mathsf{JID}}
\newcommand{\id}{\ensuremath{\mathrm{id}}}
\newcommand{\Id}{\ensuremath{\mathrm{Id}}}
\newcommand{\eval}{\ensuremath{\mathrm{ev}}}
\renewcommand{\arg}{\ensuremath{\_\!\_}}
\newcommand{\fpair}[1]{\langle #1 \rangle}
\newcommand{\filter}[1]{\ensuremath{\mathord{|}^{#1}}}
\newcommand{\klpos}{\mathsf{Kl}(\arg^\Phi)}
\newenvironment{lemma_for}[1]{\noindent{\bf Lemma~\ref{#1}.}\it}
{\vspace{0.2cm}}
\begin{document}

\title[A coalgebraic treatment of conditional transition systems with upgrades]{A coalgebraic treatment of\\ conditional transition systems with upgrades}

\author[H.~Beohar \and B.~K\"{o}nig \and S.~K\"{u}pper]{Harsh Beohar\rsuper{a} \and Barbara K\"{o}nig\rsuper{a,*}}
\address{\lsuper{a}Universit\"{a}t Duisburg-Essen, Germany}
\email{\{harsh.beohar,barbara\_koenig\}@uni-due.de}
\thanks{\lsuper{*}Supported by the DFG-funded project BEMEGA (KO 2185/7-1).}

\author[S.~K\"{u}pper]{Sebastian K\"{u}pper\rsuper{b}}
\address{\lsuper{b}FernUniversit\"{a}t Hagen, Germany}
\email{sebastian.kuepper@fernuni-hagen.de}

\author[A.~Silva]{Alexandra Silva\rsuper{c,\star}}
\address{\lsuper{c}University College London, United Kingdom}
\email{alexandra.silva@ucl.ac.uk}
\thanks{\lsuper{\star}Supported by the ERC starting grant ProFoundNet (679127).}

\author[T.~Wissmann]{Thorsten Wi{\ss}mann\rsuper{d,\dagger}}
\address{\lsuper{d}Friedrich-Alexander-Universit\"{a}t Erlangen-N\"{u}rnberg, Germany}
\email{thorsten.wissmann@fau.de}
\thanks{\lsuper{\dagger}Supported by the DFG-funded project COAX (MI
  717/5-1).}


%
%





\dedicatory{Dedicated to Ji\v{r}\'\i~Ad\'amek on the occasion of his \nth{70} birthday}

\begin{abstract}
  We consider conditional transition systems, that model
  software product lines with upgrades, in a coalgebraic setting. 
  By using Birkhoff's duality for distributive
  lattices,
  we derive two equivalent Kleisli categories in which these coalgebras live: Kleisli categories based on the reader and on the  so-called lattice monad over $\pos$. We study two
  different functors describing the branching type of the coalgebra
  and investigate the resulting behavioural equivalence. Furthermore
  we show how an existing algorithm for coalgebra minimisation can be
  instantiated to derive behavioural equivalences in this setting.
\end{abstract}

\maketitle



\section{Introduction}
\label{sec:intro}


Ji\v{r}{\'\i}
Ad\'amek 
has made many important contributions to category theory to the theory
of coalgebras. The final (or terminal) chain to construct the final
coalgebra \cite{ak:fixed-point-set-functor} will play a key role in
this paper.  In addition Ji\v{r}{\'\i} Ad\'amek wrote, jointly with
Horst Herrlich and George E.~Strecker, the well-known textbook ``Abstract and Concrete
Categories -- The Joy of Cats'' \cite{joyofcats}, which has served as
an invaluable guide to us when learning and looking up results on
category theory, also for this paper.

It is a continuation of the work that two of the
co-authors did jointly with Ji\v{r}{\'\i} Ad\'amek \cite{ABHKMS12}. In
that paper we studied generic versions of minimisation and
determinisation algorithms in the context of coalgebras, especially in
Kleisli categories. Here we are studying a novel type of transition
system, called conditional transition systems, and show how they fit
into this framework.

This example is interesting for several reasons: first, it gives a
non-trivial case study in coalgebra which demonstrates the generality
of the approach. Second, it studies coalgebras in the category of
partially ordered sets, respectively in Kleisli categories over this
base category. We use the Birkhoff duality for distributive lattices
to show the equivalence of two Kleisli categories over two monads: the
reader monad and the so-called lattice monad. This result can be of
interest, independently of the coalgebraic theory. Third, we introduce
a notion of upgrade into coalgebraic modelling.

The theory of coalgebras \cite{r:universal-coalgebra} allows uniform modelling and reasoning for a variety of state-based systems.
For instance, (non)deterministic finite automata and weighted automata are classical examples often studied in this context (see \cite{r:universal-coalgebra} for more examples).
Furthermore, coalgebraic modelling comes with the benefit of offering
generic algorithms, capturing the core of algorithms that are similar
across different types of automata. In particular, the
final-chain based algorithm \cite{ak:fixed-point-set-functor}
computes quotients on automata up to a chosen notion of behavioural
equivalence (such as strong bisimilarity or trace
equivalence).

A conditional transition system (CTS) \cite{ABHKMS12,CTS:Tase2017} is
an extension of a labelled transition system that is well suited to model software product lines \cite{2001:SPL:501065}, an emergent topic of research in the field of software engineering.
In contrast to the commonly used featured transition systems \cite{Classen:2013:FTS}, CTSs are not primarily concerned with the individual features of a software product, but mainly with the individual versions that may arise from the given feature combinations.

In CTSs \cite{CTS:Tase2017} transitions are labelled with the elements of a
partially ordered set of conditions $(\Phi,\leq_\Phi)$, which can be
viewed as software products in the terminology of software product lines.  This
gives us a compact representation which merges the transition systems
for many different versions into one single structure. A transition
labelled $\phi\in\Phi$ can only be taken in version
$\phi$. Furthermore, with $\phi'\le_\Phi \phi$ we denote that --
during execution -- version $\phi$ can be upgraded to $\phi'$.

Intuitively CTSs evolve in two steps: first, a condition
$\phi \in \Phi$ is chosen at a given state; second, a
transition is fired which is guarded by the chosen condition.
Over the course of the run of a CTS, it can perform an operation called \emph{upgrade} in which the system changes from a greater condition $\phi$ to a smaller condition $\phi'\leq_\Phi\phi$.
This in turn activates additional transitions that may be taken in
future steps. Originally, CTSs in \cite{ABHKMS12} were defined without upgrades,
i.e., $\leq_\Phi$ was fixed to be equality.

CTS have `monotonous' upgrading in the sense that one can only go down on the hierarchy of conditions, but not up.
As a consequence, CTSs have a special notion of bisimulation consisting of a family of traditional bisimulations $\sim_\phi$ (one for each condition $\phi\in \Phi$) such that $\mathord\sim_\phi\subseteq \mathord\sim_{\phi'}$, whenever $\phi'\leq_\Phi \phi$. Roughly, two states are behaviourally equivalent under a condition $\phi$ if and only if they are bisimilar (in the traditional sense) for every upgrade $\phi'\leq_\Phi \phi$.
An interesting fact about a CTS is that there exists an equivalent model, called \emph{lattice transition system} (LaTS), which allows for a more compact representation of a CTS using the lattice of downward closed subsets of $\Phi$ (see \cite{CTS:Tase2017} for more details). In essence, this can be viewed as a lifting of the well-known Birkhoff's representation theorem to the case of transition systems. 

This paper aims at characterising CTS and LaTS coalgebraically. To this end, we define two monads, the reader monad and the lattice monad, which allow for modelling CTS and LaTS respectively -- provided a matching functor is chosen -- in their corresponding Kleisli categories.
We will show that these two categories are equivalent.

Our next aim is to characterise conditional bisimilarity using the
notion of behavioural equivalence, a concept stemming from the theory
of coalgebras. Roughly, two states of a system (modelled as a
coalgebra) are \emph{behaviourally equivalent} if and only if they are
mapped to a common point by a coalgebra homomorphism.

In this regard, capturing the right notion of behavioural equivalence
(conditional bisimilarity in our case) depends on making the right
choice of functor modelling CTSs. By working in a Kleisli category, we
are interested in establishing a functor via an extension of a functor
on the base category $\pos$. The usual powerset functor $\pow$ proves
to be a viable choice for CTSs without any upgrades, but we will
provide a counterexample which shows that this functor does not yield
conditional bisimulation in the presence of upgrades, no matter how the
extension is chosen. However, for an adaptation of the powerset
functor, namely $\pow(\arg\times \Phi)$, behavioural
equivalence indeed captures conditional bisimilarity in the presence
of upgrades. Our approach is not restricted to the treatment of those
two specific functors: we introduce so-called \emph{version filters}
that add conditions/versions to any $\pos$ functor and also develop an
abstract machinery to capture conditional bisimilarity
coalgebraically.

To conclude, we show that the minimisation algorithm based on the
final chain construction plus factorisation structures \cite{ABHKMS12} is
applicable to the category under investigation and specify how it can
be applied to CTSs.  CTSs without upgrades have already been
considered in \cite{ABHKMS12}, but applicability to CTSs with upgrades
is novel.


\bigskip

This paper is structured as follows: in
Section~\ref{sec:preliminaries} (\emph{Preliminaries}), we will define
coalgebras with their notion of behavioural equivalence. In the
coalgebraic treatment of conditional transition systems we view the
currently chosen software product (also called condition) $\phi\in\Phi$ as a
form of side effect and we will work with Kleisli categories for the
reader monad ($\arg^\Phi$) in order to capture this phenomenon. Hence,
we will derive the reader monad on $\pos$ (working in $\pos$ is
necessary in order to capture upgrades) via the product comonad in
$\pos$. Furthermore we discuss the (known) relationship between
distributive laws and extensions of a functor to a Kleisli
category.

Then, in Section~\ref{sec:cts-lattice-ts} (\emph{Conditional and
  Lattice Transition Systems}), we introduce conditional and
lattice transition systems and the associated notion of conditional
bisimulation from \cite{CTS:Tase2017}. The duality between these two
variants depends on the Birkhoff duality from lattice theory, which is
also reviewed in this section.

While conditional transition systems will be modelled in the Kleisli
category for the reader monad, it is not so obvious in which category
lattice transition systems should live. In order to solve this
question we introduce in Section~\ref{sec:latmon} (\emph{The Lattice
  Monad}) the lattice monad, which characterises a monotone
function $f\colon \Phi\to X$ as a mapping from $X$ into the downsets
of $\Phi$ (which form a lattice $\mathbb{L}$). However,
simply taking the monad $\mathbb{L}^{(\arg)}$ would not be equivalent to
$\arg^\Phi$. Hence we impose suitable restrictions on mappings
$\mathbb{L}^X$ and obtain a monad isomorphic to the reader
monad. As a result, the two corresponding Kleisli categories are also
isomorphic. Not surprisingly, in the case of a finite set $\Phi$ of
conditions this isomorphism between the two monads is related to the
Birkhoff duality.

In Section~\ref{sec:modelling-cts} (\emph{Modelling Conditional
  Transition Systems as Coalgebras}) we (first) model conditional
transition systems, where the upgrade order is discrete, i.e., the
corresponding lattice of downsets is a Boolean algebra. The
corresponding coalgebras are Kleisli arrows of the form
$X\kleislito (\mathcal{P}X)^A$ in $\Kl(\arg^\Phi)$, where
$\mathcal{P}$ is the powerset functor and $A$ is the label
alphabet. (Note that Kleisli arrows are denoted by $\kleislito$.) In
order to be able to define such coalgebras, we have to extend the
functor $(\mathcal{P}\arg)^A$ (defined on $\pos$) to
$\Kl(\arg^\Phi)$ via a distributive law. Furthermore we also consider
extensions of the functor $(\mathcal{P}(\arg\times \Phi))^A$, which is
required to capture upgrades. Since distributive laws are easier to
derive for the comonad $\Phi\times\arg$, we consider distributive laws
in the general setting of monad-comonad adjunctions. This gives us
suitable functor extensions (see
Section~\ref{sec:functor-extensions}).

After suitably extending the functors for both cases (without and with
upgrades), we study coalgebraic behavioural equivalences (see
Section~\ref{sec:coalg-beheq}). The aim is to eventually show that we
capture bisimulation for conditional transition systems in a
coalgebraic setting. We go further than that and define conditional
bisimulation and congruence for more general behavioural functors. In
particular, we introduce version filters that add conditions (from
$\Phi$) to any $\pos$ functor. Then we can prove general results for
so-called upgrade-preserving coalgebras that allow us to state the
main theorem, namely that we correctly characterize the notion of
conditional bisimulation in our abstract setting.

Afterwards, in Section~\ref{sec:computing-beheq} (\emph{Computing
  Behavioural Equivalence}), we consider an application of this
result. In particular, we use a generalized partition refinement
algorithm from \cite{ABHKMS12} to minimise a given coalgebra and to
answer questions concerning behavioural equivalence based on this
minimisation. This minimisation procedure is based on
pseudo-factorisations, i.e., factorisations that are obtained by
mapping an arrow into a reflective subcategory, following by
factorisation. We show that we have such a reflective subcategory,
resulting in suitable pseudo-factorisations and that the algorithm can
hence be applied. We work out an example and we compare with the
matrix multiplication algorithm in \cite{CTS:Tase2017}.

Finally, we wrap up the paper and give directions for future work in
Section~\ref{sec:conclusion} (\emph{Conclusion}).

\section{Preliminaries}
\label{sec:preliminaries}


We assume a basic knowledge of category theory. The primary objects of interest in this work are
coalgebras, which we use to model conditional transition systems.

\smallskip
\noindent
\begin{minipage}[c]{.8\textwidth}
\begin{defi}[Coalgebra]
	Let $H\colon \C\rightarrow\C$ be an endofunctor on a category $\C$. Then an
  \emph{$H$-coalgebra} is a pair $(X,\alpha)$, where $X$ is an object of $\C$
  and $\alpha\colon X\rightarrow HX$ is an arrow in $\C$. An
  \emph{$H$-coalgebra homomorphism} between two coalgebras $(X,\alpha)$ and
  $(Y,\beta)$ is an arrow $f\colon X\rightarrow Y$ in $\C$ such that
  $\beta\cdot
  f=Hf\cdot\alpha$. 
\end{defi}
\end{minipage}
\hfill
  \begin{tikzcd}[baseline=-7pt]
    |[alias=X]|
    X
    \arrow{r}{\alpha}
    \arrow{d}[swap]{f}
    &|[inner xsep=0cm,draw=none]|\ HX
    \arrow{d}[overlay]{Hf}
    \\
    Y
    \arrow{r}{\beta}
    & |[inner xsep=0cm,draw=none]|\ HY
  \end{tikzcd}
\smallskip

The $H$-coalgebras and their homomorphisms form a category. In the sequel, we drop the prefix `$H$-' whenever it is clear from the context.

In the theory of coalgebras, bisimulation
\cite{park81:bisimulation} is captured in more than one way, namely:
coalgebraic bisimulation or via an arrow into any coalgebra (so-called
cocongruences). At this stage, we fix the notion of behavioural equivalence in a
category $\C$ structured over the category of sets $\Set$ using a concretisation
functor $U\colon\C \to \Set$.



\begin{defi}[Behavioural Equivalence]
	Let $F$ be an endofunctor on a concrete category $\C$ with a faithful
  functor $U\colon\C \rightarrow \Set$ to the category of sets
  $\Set$. Then, two states $x\in UX$ and $x'\in UX$ of a coalgebra
  $(X,\alpha)$ are \emph{behaviourally equivalent} if there exists a coalgebra
  homomorphism $f\colon X\rightarrow Y$ such that $Uf(x)=Uf(x')$.
\end{defi}
\begin{exa}
	In the sequel, we work with the concrete category of
  partially ordered sets (a.k.a. posets), denoted $\pos$, as our base category $\C$. Formally, the objects of $\pos$ are
  pairs $(X,\leq_X)$ of a set $X$ and a partial order $\mathord\leq_X \subseteq
  X \times X$; while its arrows are all the order preserving functions between
  any two posets. If the order relation is just the equality, then we call the
  poset \emph{discrete}.
\end{exa}

\begin{nota} 
  A functor $F\colon\C \to \D$ is \emph{left adjoint} to a functor $U\colon\D \to \C$ (or $U$ is \emph{right adjoint} to $F$), denoted $F\dashv U$, when for any two objects $X$ from $\C$ and $Y$ from $\D$ there is a natural bijection between morphisms
  \[
    \inferrule{f\colon X\to UY}{g\colon FX \to Y}
  \]
  in the sense that each morphism $f$ (displayed above) uniquely determines a morphism $g$ and conversely. More formally, $F\dashv U$ when there exists a family of isomorphisms
  \[\Psi_{X,Y}\colon \C(X,UY) \cong \D(FX,Y) \]
  natural in $X$ and $Y$. 
  Lastly, given an adjunction $F \dashv U$, we note that the unit $\rho$ and the counit $\epsilon$ of this adjunction is given by:
  \[\rho_X:= \Psi_{X,FX}^{-1}(\id_{FX}) \quad \text{and} \quad \epsilon_Y:= \Psi_{UY,Y}(\id_{UY})\]
\end{nota}
Recall that a monad on $\C$ is a functor $T\colon \C\to \C$ with natural
transformations $\eta\colon \Id \to T$ (called \emph{unit}), $\mu\colon TT\to T$
(called \emph{multiplication}) such that
$\mu\cdot T\eta = \Id = \mu\cdot \eta T$ and $\mu\cdot T\mu = \mu\cdot \mu T$.
Dually, a comonad on $\C$ is a monad on $C^\op$, i.e.~a functor $T \colon \C\to\C$ with
counit $T\to \Id$ and comultiplication $T\to TT$ fulfilling the corresponding laws.

\begin{prop}\label{prop:comonad-induces-monad}
  Given a comonad $(L,\vartheta,\delta)$ on a category $\C$ and a functor $T\colon\C \to \C$ such that $L \dashv T$ with unit and counit $\rho$ and $\epsilon$, respectively. Then, this adjunction induces a monad structure on $T$ as follows:
  \[
  \inferrule{LX \xrightarrow{\vartheta_X} X}{X \xrightarrow{\eta_X} TX} \qquad
  \inferrule{
    LTTX \xrightarrow{\delta_{TTX}} LLTTX \xrightarrow{L\epsilon_{TX}}
    LTX \xrightarrow{\epsilon_X} X
  }
  {TTX \xrightarrow {\mu_X} TX}
  \]
\end{prop}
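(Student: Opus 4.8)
The plan is to read the two inference rules as transposition across the adjunction $L \dashv T$, extract explicit formulas for the monad data, and then verify the three monad laws by a diagram chase. Writing the unit and counit of $L\dashv T$ as $\rho\colon\Id\to TL$ and $\epsilon\colon LT\to\Id$, the transpose of $f\colon LX\to Y$ is $Tf\cdot\rho_X$, and the transpose of $g\colon X\to TY$ is recovered by $\epsilon_Y\cdot Lg=f$. Thus the monad unit is $\eta_X=T\vartheta_X\cdot\rho_X$, and the monad multiplication $\mu_X$ is the transpose of $\bar\mu_X:=\epsilon_X\cdot L\epsilon_{TX}\cdot\delta_{TTX}\colon LTTX\to X$, so that $\epsilon_X\cdot L\mu_X=\bar\mu_X$. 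First I would record the data I am allowed to use: naturality of $\vartheta,\delta,\rho,\epsilon$, the triangle identities $\epsilon L\cdot L\rho=\id_L$ and $T\epsilon\cdot\rho T=\id_T$, and the comonad laws $\vartheta L\cdot\delta=\id_L=L\vartheta\cdot\delta$ and $L\delta\cdot\delta=\delta L\cdot\delta$. Naturality of $\eta$ and $\mu$ is routine, following from naturality of their constituents and of $\Psi$, so the substance is in the three equational laws.

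Since $\Psi_{X,Y}$ is a bijection, to prove an equality of morphisms with codomain $TY$ it suffices to prove the equality of their transposes with codomain $Y$; I would use this to turn every monad law into an equation among morphisms into $X$ and discharge it by naturality. For the left unit law $\mu\cdot T\eta=\id$, transposing gives $\epsilon_X\cdot L\mu_X\cdot LT\eta_X=\epsilon_X$; substituting $\epsilon_X\cdot L\mu_X=\bar\mu_X$ and pushing $LT\eta_X$ rightward first through $\delta_{TTX}$ and then through $\epsilon_{TX}$ by naturality, the mate identity $\epsilon_X\cdot L\eta_X=\vartheta_X$ (a triangle identity together with naturality of $\epsilon$) collapses $\eta$ against $\epsilon$, and the comonad counit law $\vartheta L\cdot\delta=\id$ reduces the expression back to $\epsilon_X$. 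The right unit law $\mu\cdot\eta T=\id$ is entirely analogous, using instead the other counit law $L\vartheta\cdot\delta=\id$.

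The main obstacle is the associativity law $\mu\cdot T\mu=\mu\cdot\mu T$. Here I would transpose both sides to morphisms $LTTTX\to X$ and expand each occurrence of $\mu$ through its mate $\bar\mu$, producing two nested composites built from $\delta$ evaluated at various objects, iterated $L^k\epsilon$'s, and $\epsilon$'s. The delicate part is the bookkeeping: one must repeatedly apply naturality of $\delta$ and $\epsilon$ to migrate the two expressions into a common shape, and the single spot where the two sides genuinely differ is reconciled by the comonad coassociativity $L\delta\cdot\delta=\delta L\cdot\delta$. I expect essentially all of the real work, and all of the risk of an index slip, to live in this computation, whereas the unit laws consume only the counit laws and the triangle identities.

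Conceptually, this proposition is the mate correspondence for the fixed adjunction $L\dashv T$: taking mates is a bijection between transformations $L^m\to L^n$ and $T^n\to T^m$ that reverses the order of composition, and therefore sends the comonoid (comonad) structure on $L$ to a monoid (monad) structure on $T$, the order reversal explaining why the comultiplication $\delta$ yields the multiplication $\mu$. One could appeal to this abstractly, but since the order-reversal itself requires justification I would present the concrete verification above, which is self-contained and uses only the data already named in the statement.
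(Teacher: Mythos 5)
Your proposal is correct. Note that the paper states Proposition~\ref{prop:comonad-induces-monad} without proof --- it is the standard mate correspondence for the adjunction $L\dashv T$, cited via the reader-monad instance and exploited again in Lemma~\ref{lem:natToDist} --- so there is no in-paper argument to compare against; your write-up supplies the missing verification. The data you extract are the right ones, namely $\eta_X=T\vartheta_X\cdot\rho_X$ and $\mu_X$ the transpose of $\bar\mu_X=\epsilon_X\cdot L\epsilon_{TX}\cdot\delta_{TTX}$, and since transposition along $L\dashv T$ is a bijection it is legitimate to check all three monad laws on mates. The unit laws reduce, exactly as you say, to the identity $\epsilon_X\cdot L\eta_X=\vartheta_X$ (one triangle identity plus naturality of $\epsilon$) followed by the two comonad counit laws $\vartheta L\cdot\delta=\id_L=L\vartheta\cdot\delta$. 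For associativity, both sides transpose to $\epsilon_X\cdot L\bar\mu_{TX}\cdot\delta_{TTTX}$: the side $\mu_X\cdot\mu_{TX}$ reaches this form using only naturality of $\delta$ and the defining equation $\epsilon_X\cdot L\mu_X=\bar\mu_X$, while the side $\mu_X\cdot T\mu_X$ additionally needs naturality of $\epsilon$ at $\mu_X$ and one application of coassociativity $\delta L\cdot\delta=L\delta\cdot\delta$ --- confirming your prediction of where the genuine work (and the only use of coassociativity) sits.
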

For instance, the reader monad is defined in terms of a comonad (see
e.g.~\cite[Example~3.10]{PIROG2014:readermonad}).

\begin{defi}[Reader monad]
\label{def:readermonad}
  We have a comonad on $\arg \times \Phi$ with counit $\pi_1\colon X\times \Phi
  \to X$ and comultiplication $\id_X \times \Delta_\Phi \colon
  X \times \Phi\rightarrow X\times \Phi\times\Phi$ where $\Delta_\Phi$ is the diagonal $\Delta_\Phi\colon \Phi\xrightarrow{\fpair{\id_\Phi,\id_\Phi}} \Phi\times\Phi$.
  Using $\arg\times\Phi \dashv \arg^\Phi$
  with the counit $\eval_X\colon X^\Phi\times \Phi \to X$ on $\pos$, Proposition~\ref{prop:comonad-induces-monad}
  provides a monad structure $(\arg^\Phi,\nu, \zeta)$.
Explicitly, we have:
\[
      X^\Phi=(\pos(\Phi, X),\leq_{X^\Phi}) \quad (f\colon X\to Y)^\Phi= \pos(\Phi,f) = (C \mapsto f\cdot C),
\]
where $C\leq_{X^\Phi}C'$ if $\forall_{\phi\in\Phi}\ C(\phi)\leq_X C'(\phi)$ and
\[
  \nu_X(x)(\varphi) = x \quad\text{for $x\in X,\varphi\in\Phi$}
  \qquad
  \zeta_X(D)(\varphi) = D(\varphi)(\varphi)\quad \text{for $D\in {X^\Phi}^\Phi,\varphi\in\Phi$}.
\]
\end{defi}
\begin{nota}
  Given an arrow $f\colon X\times \Phi\to Y$, $\bar{f}\colon X\to Y^\Phi$
  denotes its curried version. Furthermore, given an arrow $g\colon X\to
  Y^\Phi$, $\check{g}\colon X\times \Phi\to Y$ denotes its uncurried version.
  Lastly, given a monotone map $f \colon X \to Y^\Phi$ then we fix one argument
  $\varphi \in \Phi$ by writing $f_\phi\colon X \to Y$ defined as
  $f_\phi(x)=f(x)(\phi)$.
\end{nota}

From the seminal work of Moggi \cite{MOGGI1991:monads}, it is common to model computations with side-effects by a monad.
Generally, such
a computation with side-effects in $T$ is treated as an arrow in the Kleisli category of $T$.

\begin{defi}Let $(T,\eta,\mu)$ be a monad on $\C$. Then its
  \emph{Kleisli category} $\Kl(T)$ has the same objects as $\C$ and
  the arrows
  $f\colon X\kleislito Y$ in $\Kl(T)$
  are the arrows $f\colon X\rightarrow TY$ in $\C$. The identity on $X$ in
  $\mathsf{Kl}(T)$ is given by $\eta_X\colon X\kleislito X$ and the
  composition of two arrows $f\colon X\kleislito Y$,
  $g\colon Y\kleislito Z$ in $\mathsf{Kl}(T)$ is given by the following composition in $\C$
  \[
    g\circ f :=
    \big(
    \begin{tikzcd}
      X
      \arrow{r}{f}
      & TY
      \arrow{r}{Tg}
      & TTZ
      \arrow{r}{\mu_Z}
      & TZ
    \end{tikzcd}
    \big).
  \]
  Throughout the paper, we reserve $\circ$ for Kleisli composition, whereas
  $\cdot$ denotes the composition in the base category $\C$.
  The base category sits in $\Kl(T)$ witnessed by the functor $I\colon
  \C \to \Kl(T)$ defined as follows:
  $I(X)=X$, for each object $X$; $I(f) = \eta_Y\cdot f$, for each arrow $f\colon
  X \to Y$ in $\C$. If $\eta$ has monic components, then this functor is
  faithful, i.e., $\C$ is a subcategory of $\Kl(T)$. A
  Kleisli arrow $f\colon X\kleislito Y$ is called \emph{pure}, if $f\colon X\to TY$
  factors through $\eta_Y\colon Y\to TY$ in $\C$, i.e., if there is some arrow $f'\colon X\to Y$
  with $If' = f$. Inuitively speaking, pure arrows have no side-effects. The
  subcategory of pure Kleisli arrows is precisely~$\C$. The Kleisli composition
  of $f\colon X\kleislito Y$ with pure maps boils down to the composition in $\C$:
  \begin{align*}
    \begin{array}{rl@{\hspace{2cm}}rl}
    f\circ Ip &= f\cdot p
               &&\text{for }p\colon P\to X
                  \\
    Ip\circ f &= Tp\cdot f
               &&\text{for }p\colon X\to P.
    \end{array}
  \end{align*}
\end{defi}

When considering coalgebras on a Kleisli category, one can distinguish
the visible effects of transitions in a system from the side-effects. For instance,
when checking the language equivalence of two states in a nondeterministic
automaton, one only cares about the final states and the consumed input word,
but not about the non-deterministic branching.

While determining the behavioural equivalence of interest, the intended observable effects of a transition are encoded in an endofunctor $F$ on the Kleisli category; whereas, the side effects are encoded via a monad $T$. This is motivated by the
previous works in \cite{hjs:generic-trace-coinduction,PowerTuri99}, where behavioural equivalence in Kleisli categories were used to characterise (trace) language equivalence (rather than bisimulation).

Notwithstanding, the endofunctor $F$ and the monad $T$ of interest are often defined on the base category. Thus, one needs a mechanism to \emph{extend} the given functor $F$ as an endofunctor $\hat F$ on the Kleisli category $\Kl(T)$.

\noindent
\begin{minipage}[t]{.75\textwidth}
\begin{defi}
An \emph{extension} of a functor $F\colon \C \rightarrow \C$ to $\Kl(T)$ is a
functor $\hat F\colon \Kl(T)\to \Kl(T)$ such that $IF = \hat FI$.
A distributive law $\lambda\colon FT\rightarrow TF$ is a natural transformation
$FT\rightarrow TF$ that preserves the monad structure of $T$ in the obvious way.
\end{defi}
\end{minipage}
\hfill
  \begin{tikzcd}[baseline=(Kl.base)]
    |[alias=Kl]|
    \Kl(T)
    \arrow{r}{\hat F}
    &|[inner xsep=0cm,draw=none]|\ \Kl(T)
    \\
    \C
    \arrow{u}{I}
    \arrow{r}{F}
    & \C
    \arrow{u}[swap]{I}
  \end{tikzcd}
  \medskip

\noindent
We end this section by recalling a standard result on distributive laws from \cite{hjs:generic-trace-coinduction,Mulry1994}.
\begin{thm} \label{remExtension}
  For a functor $F\colon \C\to \C$ there is a one-to-one correspondence between:
  \begin{enumerate}
  \item Extensions $\hat F\colon \Kl(T)\to\Kl(T)$ of $F$.
  \item Distributive law $\lambda\colon FT\to TF$ for $F$.
  \end{enumerate}
  Given an extension, the corresponding distributive law is $\hat F(\id_X\colon
  TX\kleislito X )\colon FTX\to TFX$ and conversely a
  a distributive law defines an extension by
    \[
      \hat F(f\colon X\kleislito Y) =\big(
      FX \xrightarrow{Ff} FTY
      \xrightarrow{\lambda_Y} TFY
      \big).
    \]
\end{thm}
\begin{prop}
  Given a monad $T$ whose unit $\eta$ has monic components, then
  a functor $\hat F\colon \Kl(T) \to \Kl(T)$ is an extension of some functor $F:\C \to \C$ iff $\hat F$
  preserves pure morphisms.
\end{prop}
\begin{proof} \leavevmode
  \begin{itemize}[leftmargin=8mm,itemsep=1ex]
    \item[$(\Rightarrow)$]
      The square of $\hat F$ being an extension of $F$ directly says that $\hat F$
      maps any pure morphism $If$ to the pure morphism $IFf$.
    \item[$(\Leftarrow)$]
      On objects we put $FX := \hat FX$. 
      Let $f\colon X\to Y$. Then $\hat F$ maps $If\colon X\kleislito Y$ to the pure
      $\hat F If\colon \hat FX\kleislito \hat FY$, so there is some $g\colon FX\to FY$
      with $Ig = \hat FIf$. Since $\eta$ has monic components, $I$ is faithful
      and there is a unique such $g$. Hence we can put $Ff := g$ and thus have
      $IFf = Ig = \hat FIf$. Since $\eta_X$ is monic, $\id_{FX}$ is the only
      morphism $g\colon X\to X$ with $Ig = I\id_X = \eta_X$. Using the faithfulness
      of $I$, $F$ preserves composition.
      \qedhere
  \end{itemize}
\end{proof}
\section{Conditional and lattice transition systems}
\label{sec:cts-lattice-ts}

Here we recall the definitions of a CTS, a LaTS, and conditional bisimilarity from \cite{CTS:Tase2017}.
\begin{defi}
  \label{def:cts}
  A \emph{conditional transition system} (CTS) is a tuple $(X,A,\Phi,f)$
  consisting of a set of states $X$, a set of actions $A$, a finite set of
  conditions $\Phi$, and a transition function $f\colon X\times A\rightarrow (\pow X,\supseteq)^{(\Phi,\leq_\Phi)}$ that maps every pair $(x,a)\in X\times A$ to a
  monotone function of type $(\Phi,\leq_\Phi)\rightarrow(\pow X,\supseteq)$. We write $x\xrightarrow{a,\phi}x'$, whenever $x'\in f(x,a)(\phi)$. In case $|A|=1$, we omit the action label from a transition.
\end{defi}
Intuitively, a CTS evolves as follows: In the beginning, a version of
the system $\phi\in\Phi$ is chosen and the CTS is instantiated to the
version $\phi$ as the traditional labelled transition system that has
a transition $x\xrightarrow{a}x'$ if and only if the CTS has a
transition $x\xrightarrow{a,\phi}x'$. At any point of the execution of
this labelled transition system, an upgrade may be performed, i.e., a
new version $\phi'$ with $\phi'\leq_\Phi\phi$ of the system may be chosen. The
system remains in the state reached up to that point and additional
transitions get activated, since now all transitions
$x\xrightarrow{a,\phi'}x'$ give rise to a transition
$x\xrightarrow{a}x'$. Note that due to the monotonicity of the
transition function $f$ in a CTS, an upgrade will always retain all
previous transitions, but may add additional
transitions. Symbolically, if $x \xrightarrow{a,\phi} x'$ and
$\phi'\leq_\Phi \phi$ then $x \xrightarrow{a,\phi'} x'$.

The notion of behavioural equivalence we are interested in is conditional bisimulation:
\begin{defi} \label{def:condbisim1}
Let $(X,A,\Phi,f)$ be a CTS. Let $f_\phi(x,a)=f(x,a)(\phi)$ (for every $\phi\in \Phi$) denote the labelled transition system induced upon choosing the condition $\phi$. A \emph{conditional bisimulation} on the given CTS $(X,A,\Phi,f)$ is a family of relations $(R_\phi)_{\phi\in\Phi}$ satisfying the following conditions:
\begin{itemize}
  \item Each $R_\phi$ is a traditional bisimulation relation on the LTS $f_\phi\colon
    X\times A \to \pow X$.
  \item For every $\phi,\phi'\in \Phi$ we have $\phi'\leq_\Phi\phi \implies R_{\phi}\subseteq R_{\phi'}$.
\end{itemize}
For $x,y\in X$ we say that $x\sim_\phi y$ if there exists a
conditional bisimulation such that $x\,R_\phi\,y$.
\end{defi}

Originally, CTSs were introduced without a notion of upgrades, these
systems can be reobtained by setting the order $\leq_\Phi$ on the
conditions to be the trivial order.

There is a game characterising conditional bisimulation
\cite{CTS:Tase2017}, in which the upgrades are chosen by the attacker,
whose aim it is to show that two states are not bisimilar.  This also
explains Definition~\ref{def:condbisim1}, where we require that
$R_\phi\subseteq R_{\phi'}$ whenever $\phi'\leq_\Phi\phi$. This means
that the defender still has a winning strategy after the attacker
chooses to make an upgrade.

To get a better feeling of CTSs, consider the following example:

\begin{exa}
\label{exa:CTS-cbisim}
Consider a CTS $(X,\{a\},\Phi,f)$ as depicted below, where
$X=\{x,y,z,x',y',z'\}$ and $\Phi=\{\phi',\phi\}$ with
$\phi'\leq_\Phi\phi$. Since the set of actions is singleton, we leave
out the action labels in the visual representation.

\begin{center}
\begin{tikzpicture}[x={(1.2cm,-.7cm)},y={(0,2cm)},z={(1.8cm,.7cm)},every state/.style={draw,circle,minimum size=1.5em,inner sep=1}]
\node[state] (q1) {$x$} ;
\node[right=of q1] (anker) {} ;
\node[state,above= of anker] (q3) {$z$} ;
\node[state,below= of anker] (q2) {$y$} ;
\node[state,right= of anker] (q4) {$x'$} ;
\node[right=of q4] (anker2) {} ;
\node[state,above= of anker2] (q6) {$z'$} ;
\node[state,below= of anker2] (q5) {$y'$} ;

\begin{scope}[->]
\draw[bend right] (q1) edge node [left]{$\phi,\phi'$} (q2) ;
\draw (q1) edge node [left]{$\phi,\phi'$} (q3) ;
\draw[bend right] (q2) edge node [right] {$\phi'$} (q1) ;
\draw[bend right] (q4) edge node [left]{$\phi'$} (q5) ;
\draw (q4) edge node [left]{$\phi,\phi'$} (q6) ;
\draw[bend right] (q5) edge node [right] {$\phi'$} (q4) ;
\end{scope}
\end{tikzpicture}
\end{center}
We will now detail how the above behavioural description can be represented by a transition function. For instance, the equation $f(x,a)(\phi)=\{y,z\}$ specifies that the system under the condition $\phi$ may move nondeterministically from the state $x$ to $y$ or $z$, additionally, it can also upgrade to the condition $\phi'$.
\begin{center}
 \begin{tikzpicture}[x={(1.2cm,-.7cm)},y={(0,2cm)},z={(1.8cm,.7cm)},every state/.style={draw,circle,minimum size=1.5em,inner sep=1}]
\node[state] (q1) {$x$} ;
\node[right=of q1] (anker) {} ;
\node[state,above= of anker] (q3) {$z$} ;
\node[state,below= of anker] (q2) {$y$} ;
\node[state,right= of anker] (q4) {$x'$} ;
\node[right=of q4] (anker2) {} ;
\node[state,above= of anker2] (q6) {$z'$} ;
\node[state,below= of anker2] (q5) {$y'$} ;

\begin{scope}[->]
\draw (q1) edge (q2) ;
\draw (q1) edge (q3) ;
\draw (q4) edge (q6) ;
\end{scope}

\path[-,dotted]
(q1) edge node [above]{$R_\phi$} (q4)
(q3) edge (q6)
(q2) edge[bend right] (q6);
\end{tikzpicture}
\qquad\qquad\qquad		
 \begin{tikzpicture}[x={(1.2cm,-.7cm)},y={(0,2cm)},z={(1.8cm,.7cm)},every state/.style={draw,circle,minimum size=1.5em,inner sep=1}]
\node[state] (q1) {$x$} ;
\node[right=of q1] (anker) {} ;
\node[state,above= of anker] (q3) {$z$} ;
\node[state,below= of anker] (q2) {$y$} ;
\node[state,right= of anker] (q4) {$x'$} ;
\node[right=of q4] (anker2) {} ;
\node[state,above= of anker2] (q6) {$z'$} ;
\node[state,below= of anker2] (q5) {$y'$} ;

\begin{scope}[->]
\draw[bend right] (q1) edge node [left]{} (q2) ;
\draw (q1) edge node [left]{} (q3) ;
\draw[bend right] (q2) edge node [right] {} (q1) ;
\draw[bend right] (q4) edge node [left]{} (q5) ;
\draw (q4) edge node [left]{} (q6) ;
\draw[bend right] (q5) edge node [right] {} (q4) ;
\end{scope}

\path[-,dotted]
(q1) edge node [above]{$R_{\phi'}$} (q4)
(q3) edge (q6)
(q2) edge (q5);
\end{tikzpicture}
\end{center}

Consider the labelled transition systems $f_\phi$ and $f_{\phi'}$ as
depicted above in the left and right, respectively. Notice that the
states $x$ and $x'$ are bisimilar in both the instantiations with the
relations $R_\phi$ and $R_{\phi'}$ depicted as dotted lines. However,
we find that $x$ and $x'$ are not conditionally bisimilar, because
$y\, R_\phi\, z'$, but $(y,z')\not\in R_{\phi'}$ and there is no other
conditional bisimulation relating $x,x'$. Moreover, the states $y$ and
$z'$ in the instantiation $\phi'$ can never be related by any
bisimulation.

The corresponding strategy for the attacker is as follows: start with
condition $\phi$ and make a move from $x$ to $y$. The defender is then
forced to take the transition from $x'$ to $z'$. Then the attacker can
upgrade to $\phi'$ and make a move, starting from $y$, which the
defender can not mimic in $z'$.
\end{exa}
Next, we recall an equivalent, but more compact representation of a
CTS which we call lattice transition system (LaTS). In
\cite{CTS:Tase2017} we showed that behavioural equivalence checks
can be performed more efficiently in the lattice setting, by encoding
lattice elements into binary decision diagrams.
\begin{defi}[Complete Lattice, Frame]\label{def:lattice}
A poset $(\L,\le_\L)$ is a \emph{join-complete lattice} if for any subset
$L\subseteq\L$ the supremum $\bigsqcup L$ and for any finite subset $L\subseteq
\L$, the infimum $\bigsqcap L$ exist.

A \emph{frame} (see e.g.~\cite{MM:sheafbook}) is a join-complete lattice satisfying the \emph{join-infinite distributive} law:
\begin{equation}\label{eq:jid}
  \ell \sqcap \bigsqcup L = \bigsqcup \{\ell \sqcap \ell' \mid \ell'\in L\},\quad (\text{for any}\ L\subseteq \lattice). \tag{$\mathsf{JID}$}
\end{equation}
\end{defi}

\begin{defi}
  A \emph{lattice transition system} (LaTS) over a finite
  frame $\mathbb L$ is a tuple
  $(X,A,\mathbb L,f)$ consisting of a set of states $X$, a set of
  actions $A$, and a transition function
  $f\colon X\times A\times X\rightarrow\mathbb L$.
\end{defi}
Even though the frame of a LaTS is required to be finite, and thus is nothing
but a finite lattice, the results in the following Section~\ref{sec:latmon} hold
for arbitrary frames.
\begin{rem}
LaTS can also serve as an explanation why in a CTS, upgrading means going downwards in the partial order. One special case of LaTS arises when choosing $\mathbb L$ as the binary Boolean algebra, yielding standard LTS. Using the order and Birkhoff duality as we have done here, the matrix representation of a LaTS over $\{0,1\}$ has the same interpretation as the standard way of writing LTS, i.e., a $1$ indicates that a transition is possible, whereas a $0$ indicates that no transition is possible. If one were to turn the order around, such that an upgrade means going up in the order, this correspondence gets turned around as well. So in this sense, when LaTS are considered as generalisations of LTS, it is more natural to go down in the order to upgrade, rather than to go up.
\end{rem}

\begin{defi}
  Given a poset $\Phi$, then a subset $\Phi'\subseteq \Phi$ is \emph{downward closed} if
  \[
    \text{ for all }\varphi\in\Phi'\text{ and }\psi \le \varphi, \text{ we
      have }\psi \in \Phi'.
  \]
  Given a lattice $\L$ with arbitrary joins, $\ell\in \L$ is called \emph{(complete) join
  irreducible} if $\ell = \bigsqcup L$ for $L\subseteq \L$ implies $\ell \in L$.
\end{defi}
\begin{nota}
We write $\mathcal O (\Phi)$ and $\mathcal J(\L)$ to denote the set of downward closed subsets of $\Phi$ and the set of join irreducible elements of $\L$, respectively.
\end{nota}
As worked out in \cite{CTS:Tase2017}, a CTS $(X,A,\Phi,f)$ corresponds to a LaTS $(X,A,\lattice,g)$ where
$\lattice = \mathcal{O}(\Phi)$ and
$g\colon X\times A\times X\to \lattice$ with
$g(x,a,x') = \{\phi\in\Phi \mid x'\in f(x,a)(\phi)\}$ for $x,x'\in X$, $a\in A$. Similarly, a LaTS can be converted into a CTS by using the Birkhoff duality and by taking the join irreducibles as conditions.
\begin{rem}
  $\mathcal O$ can be defined equivalently as the contravariant hom functor
  $\mathcal O := \pos(\arg, \twochain)\colon \pos^\op \to \mathsf{Frames}$, where $\twochain$ is the poset/lattice
  on $\{0,1\}$ with $0\le 1$. Similarly, $\mathcal J$ is the
  contravariant hom functor $\mathsf{Frames}(\arg, \twochain)\colon
  \mathsf{Frames}^\op\to \pos$. Taking the respective subcategories of finite
  posets, resp.~frames, the functors $\mathcal{O}$ and $\J$ form an equivalence
  of categories, known as Birkhoff's theorem:
\end{rem}

\begin{thm}[\rm Birkhoff's representation theorem, {\cite[5.12]{dp:lattices-order},\cite{b:rings-of-sets}}]
  \label{th:birkhoff}\index{Birkhoff's representation theorem}
  Let $\mathbb L$ be a finite frame, then
  $(\mathbb L,\sqcup,\sqcap)\cong(\mathcal O(\mathcal J(\mathbb
  L)),\cup,\cap)$
  via the isomorphism
  $\eta\colon \mathbb L\rightarrow\mathcal O(\mathcal J(\mathbb L))$,
  defined as
  $\eta(\ell)=\{\ell'\in\mathcal J(\mathbb L)\mid \ell'\sqsubseteq \ell\}$.
  Furthermore, given a finite poset $(\Phi,\leq_\Phi)$, the
  downward-closed subsets of $\Phi$, $(\mathcal O(\Phi),\cup,\cap)$ form a
  frame, with inclusion ($\subseteq$) as the partial
  order. The irreducibles of this frame are all sets
  of the form $\mathord\downarrow{\phi}=\{\phi'\mid\phi'\leq_\Phi\phi\}$ for $\phi\in \Phi$.

  Going from $\mathbb L$ to the isomorphic
$\mathcal O(\mathcal J(\mathbb L))$, each frame element
$\ell\in\mathbb L$ is mapped to the set of all irreducible elements
that are smaller than $\ell$, i.e.
$\{\ell'\in\mathcal J(\mathbb L)\mid \ell'\sqsubseteq\ell\}$.
\end{thm}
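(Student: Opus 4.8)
The plan is to establish the isomorphism $\eta$ first and then dispatch the two ``Furthermore'' claims about $\mathcal O(\Phi)$. The whole argument rests on two structural facts about a finite frame $\mathbb L$, which I would isolate up front. The first is purely a finiteness fact: every element is the join of the join-irreducibles below it, i.e. $\ell = \bigsqcup\{j\in\mathcal J(\mathbb L)\mid j\sqsubseteq\ell\} = \bigsqcup\eta(\ell)$. I would prove this by well-founded induction on $\mathbb L$: if $\ell$ is itself join-irreducible the claim is immediate, and otherwise $\ell=\bigsqcup L$ for some $L$ not containing $\ell$, so every $s\in L$ satisfies $s\sqsubset\ell$ and the induction hypothesis applies to each $s$. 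The second fact is where distributivity enters: in a frame every join-irreducible $j$ is \emph{join-prime}, i.e. $j\sqsubseteq\ell_1\sqcup\ell_2$ forces $j\sqsubseteq\ell_1$ or $j\sqsubseteq\ell_2$. This follows by applying the distributive law~\eqref{eq:jid} to $j = j\sqcap(\ell_1\sqcup\ell_2) = (j\sqcap\ell_1)\sqcup(j\sqcap\ell_2)$ and invoking irreducibility of $j$; a routine induction extends it to arbitrary finite joins.

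With these two facts in hand, showing $\eta$ is an isomorphism is mostly bookkeeping. Well-definedness (that $\eta(\ell)$ is downward closed in $\mathcal J(\mathbb L)$) and monotonicity are immediate from transitivity of $\sqsubseteq$. Preservation of meets, $\eta(\ell_1\sqcap\ell_2)=\eta(\ell_1)\cap\eta(\ell_2)$, is immediate since $j\sqsubseteq\ell_1\sqcap\ell_2$ iff $j\sqsubseteq\ell_1$ and $j\sqsubseteq\ell_2$. Preservation of joins, $\eta(\ell_1\sqcup\ell_2)=\eta(\ell_1)\cup\eta(\ell_2)$, is exactly join-primeness (the nontrivial inclusion is $\subseteq$). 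For bijectivity I would exhibit the inverse $D\mapsto\bigsqcup D$: injectivity of $\eta$ is the first fact ($\ell=\bigsqcup\eta(\ell)$ recovers $\ell$), and for surjectivity I would verify $\eta(\bigsqcup D)=D$ for every downward-closed $D\subseteq\mathcal J(\mathbb L)$, where the inclusion $D\subseteq\eta(\bigsqcup D)$ is trivial and the reverse uses join-primeness together with the downward-closure of $D$.

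For the first ``Furthermore'' claim, that $(\mathcal O(\Phi),\cup,\cap)$ is a frame, I would note that arbitrary unions and finite intersections of downward-closed subsets of $\Phi$ remain downward closed, so $\mathcal O(\Phi)$ is closed under the operations and joins/meets are computed as in $\mathcal P(\Phi)$. The law~\eqref{eq:jid} then holds in $\mathcal O(\Phi)$ because it already holds set-theoretically in $\mathcal P(\Phi)$, namely $M\cap\bigcup_i N_i=\bigcup_i(M\cap N_i)$, and both sides stay inside $\mathcal O(\Phi)$. For the final claim identifying the irreducibles, I would observe that every $M\in\mathcal O(\Phi)$ satisfies $M=\bigcup_{\phi\in M}\mathord\downarrow\phi$, so all candidate irreducibles are principal downsets; that each $\mathord\downarrow\phi$ is join-irreducible follows since if $\mathord\downarrow\phi=\bigcup L$ then $\phi$ lies in some $M\in L$, and downward-closure forces $\mathord\downarrow\phi\subseteq M\subseteq\mathord\downarrow\phi$, whence $M=\mathord\downarrow\phi\in L$; conversely any join-irreducible $M$ equals $\bigcup_{\phi\in M}\mathord\downarrow\phi$ and so coincides with one of the $\mathord\downarrow\phi$.

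The main obstacle is join-preservation of $\eta$, since this is the only step where the frame hypothesis~\eqref{eq:jid} is genuinely used; everything else goes through for an arbitrary finite lattice. Concretely, the work concentrates in proving join-primeness of irreducibles and propagating it through the surjectivity check. I anticipate no set-theoretic subtleties, but I would be careful that the notion of join-irreducible in force (allowing $L=\emptyset$, so that $0=\bigsqcup\emptyset$ is excluded) matches the intended value $\eta(0)=\emptyset$.
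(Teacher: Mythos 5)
Your proof is correct. Note that the paper itself gives no proof of this statement: it is quoted as a classical result with a pointer to Davey--Priestley, so there is no in-paper argument to compare against. What you have written is essentially the standard textbook proof from that reference, resting on exactly the two right lemmas: (i) in a finite lattice every element is the join of the join-irreducibles below it (your well-founded induction is fine, and correctly handles $0=\bigsqcup\emptyset$ given the paper's ``complete join-irreducible'' convention, which excludes $0$ and makes $\eta(0)=\emptyset$ come out right), and (ii) join-irreducibles are join-prime, which is where \eqref{eq:jid} is used. One small simplification: since the paper's \eqref{eq:jid} is stated for \emph{arbitrary} $L\subseteq\L$ and its irreducibles are \emph{completely} join-irreducible, the computation $j=j\sqcap\bigsqcup L=\bigsqcup\{j\sqcap\ell'\mid\ell'\in L\}$ gives complete join-primeness in one step, so the ``routine induction to finite joins'' you mention is not needed; this also feeds directly into your surjectivity check $\eta(\bigsqcup D)=D$. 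The treatment of the two ``Furthermore'' claims (that $\mathcal O(\Phi)$ inherits the frame laws from $\mathcal P(\Phi)$, and that the irreducibles are precisely the principal downsets via $M=\bigcup_{\phi\in M}\mathord\downarrow\phi$) is likewise correct and complete.
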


Consequently, a LaTS evolves just like a CTS for $\Phi := \mathcal{J}(\L)$. At a
state and in a version $\ell \in \J(\L)$, all
the transitions that carry a label of at least $\ell$ remain active,
whereas all other transitions are deactivated. At any point of the
execution, an upgrade to a smaller join-irreducible element $\ell'$
may be performed, activating additional transitions accordingly. A CTS
and a LaTS can be transformed into one another by going from the lattice
to its dual partial order and vice-versa
(see Section~\ref{sec:latmon}). More instructively, the CTS defined in
Example~\ref{exa:CTS-cbisim} can be turned into a LaTS by simply writing the
conditions inside curly braces and considering those as elements of $\mathcal{O}(\Phi)$.

A benefit of LaTS over CTS is that now bisimulation can be stated in
more traditional terms. In addition, this view is also helpful in
computing the largest conditional bisimilarity via matrix
multiplication (see \cite{CTS:Tase2017} for more details).
\begin{defi}
  Let $(X,A,\mathbb L,f)$ be a LaTS and let $\mathcal J(\mathbb L)$ denote the set of all join-irreducible elements of $\mathbb L$. A function $R\colon X \times X \rightarrow \mathbb L$ is a \emph{lattice bisimulation} if and only if the following transfer properties are satisfied.
  \begin{enumerate}
    \item For all $x,x',y\in X$ $a\in A$,
    $\ell\in \mathcal J(\mathbb L)$ whenever $x \xrightarrow{a,\ell} x'$ and
    $\ell \leq_\lattice R(x,y)$, there exists $y'\in X$ such that
    $y\xrightarrow{a,\ell} y'$ and $\ell \leq_\lattice R(x',y')$.
    \item Symmetric to (1) with the roles of $x$ and $y$ interchanged.
  \end{enumerate}
  Here, we write $x \xrightarrow {a,\ell} x'$, whenever
  $\ell \leq_\lattice f(x,a,x')$.
\end{defi}

\begin{thmC}[\cite{CTS:Tase2017}]
  Two states are conditionally bisimilar under condition $\phi$ if and only if they are related by a lattice bisimulation $R$ with $\phi\in R(x,y)$.
\end{thmC}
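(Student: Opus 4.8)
The plan is to exploit the explicit CTS--LaTS correspondence recalled just above, together with Birkhoff's theorem (Theorem~\ref{th:birkhoff}), to translate a lattice bisimulation on the LaTS $(X,A,\mathbb{L},g)$ associated to $(X,A,\Phi,f)$ into a conditional bisimulation and back. Here $\mathbb{L}=\mathcal{O}(\Phi)$, $g(x,a,x')=\{\phi\in\Phi\mid x'\in f(x,a)(\phi)\}$, and $\leq_{\mathbb{L}}$ is inclusion. The key dictionary entries I would establish first are: (i) by Birkhoff, the join-irreducibles $\mathcal{J}(\mathbb{L})$ are exactly the principal downsets $\mathord\downarrow\phi$ for $\phi\in\Phi$, so quantifying over join-irreducibles is the same as quantifying over conditions; (ii) since each $g(x,a,x')$ is downward closed, $\mathord\downarrow\phi\subseteq g(x,a,x')$ holds iff $\phi\in g(x,a,x')$ iff $x'\in f(x,a)(\phi)$, so the LaTS transition $x\xrightarrow{a,\mathord\downarrow\phi}x'$ coincides with the CTS transition $x\xrightarrow{a,\phi}x'$, i.e.\ with a transition of the instantiated LTS $f_\phi$; and (iii) for the same reason, $\mathord\downarrow\phi\leq_{\mathbb{L}}R(x,y)$ is equivalent to $\phi\in R(x,y)$ whenever $R(x,y)$ is downward closed.

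For the direction from lattice to conditional bisimulation, given a lattice bisimulation $R\colon X\times X\to\mathbb{L}$ I would define the family $R_\phi=\{(x,y)\mid \phi\in R(x,y)\}$ for each $\phi\in\Phi$. Monotonicity $\phi'\leq_\Phi\phi\implies R_\phi\subseteq R_{\phi'}$ is immediate from downward-closure of $R(x,y)$. To see that each $R_\phi$ is a bisimulation on $f_\phi$, I would take $(x,y)\in R_\phi$ and a transition $x\xrightarrow{a}x'$ of $f_\phi$; by the dictionary this is the LaTS transition $x\xrightarrow{a,\mathord\downarrow\phi}x'$, and since $\mathord\downarrow\phi$ is join-irreducible with $\mathord\downarrow\phi\leq_{\mathbb{L}}R(x,y)$, transfer property~(1) supplies $y'$ with $y\xrightarrow{a,\mathord\downarrow\phi}y'$ and $\mathord\downarrow\phi\leq_{\mathbb{L}}R(x',y')$, i.e.\ $y\xrightarrow{a}y'$ in $f_\phi$ and $(x',y')\in R_\phi$; the matching condition in the opposite direction comes from transfer property~(2). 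Thus $(R_\phi)_{\phi\in\Phi}$ is a conditional bisimulation, and $\phi\in R(x,y)$ yields $x\,R_\phi\,y$, hence $x\sim_\phi y$.

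For the converse, given a conditional bisimulation $(R_\phi)_{\phi\in\Phi}$ I would set $R(x,y)=\{\phi\in\Phi\mid (x,y)\in R_\phi\}$. The monotonicity clause $R_\phi\subseteq R_{\phi'}$ for $\phi'\leq_\Phi\phi$ is exactly what makes $R(x,y)$ downward closed, so $R$ is a well-typed map into $\mathbb{L}=\mathcal{O}(\Phi)$. For the transfer property it suffices, by (i), to check join-irreducibles $\ell=\mathord\downarrow\phi$: if $x\xrightarrow{a,\mathord\downarrow\phi}x'$ and $\mathord\downarrow\phi\leq_{\mathbb{L}}R(x,y)$, then by the dictionary $x\xrightarrow{a}x'$ in $f_\phi$ and $(x,y)\in R_\phi$; since $R_\phi$ is a bisimulation on $f_\phi$ this produces $y'$ with $y\xrightarrow{a}y'$ in $f_\phi$ and $(x',y')\in R_\phi$, which translates back to $y\xrightarrow{a,\mathord\downarrow\phi}y'$ and $\mathord\downarrow\phi\leq_{\mathbb{L}}R(x',y')$, giving property~(1); property~(2) follows symmetrically. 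Hence $x\,R_\phi\,y$ yields a lattice bisimulation $R$ with $\phi\in R(x,y)$, completing the equivalence.

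The bookkeeping is light; the only genuinely load-bearing step is the dictionary of the first paragraph, and in particular the observation that restricting the lattice-bisimulation transfer to join-irreducibles is neither too weak nor too strong. Birkhoff's theorem guarantees that the join-irreducibles of $\mathcal{O}(\Phi)$ are precisely the principal downsets $\mathord\downarrow\phi$, so this restricted quantification is in exact bijection with the per-condition families $R_\phi$; verifying that $\mathord\downarrow\phi\leq_{\mathbb{L}}R(x,y)$ collapses to the membership $\phi\in R(x,y)$, via downward-closure, is what makes the two transfer conditions line up with the forward and backward clauses of an ordinary bisimulation. I expect this alignment to be the main thing to get exactly right; everything else is a direct unfolding of the definitions.
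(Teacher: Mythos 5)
Your argument is correct. Note, however, that the paper does not prove this statement at all: it is imported verbatim from \cite{CTS:Tase2017} (the theorem environment carries that citation), so there is no in-paper proof to compare against. Your route --- translating between the CTS and its LaTS $(X,A,\mathcal{O}(\Phi),g)$, using that the join-irreducibles of $\mathcal{O}(\Phi)$ are exactly the principal downsets $\mathord\downarrow\phi$ (Theorem~\ref{th:birkhoff}), and observing that $\mathord\downarrow\phi\leq_{\lattice} S$ collapses to $\phi\in S$ for downward-closed $S$ --- is the natural one, and the two load-bearing facts you rely on do hold: $g(x,a,x')$ is downward closed precisely because $f(x,a)$ is monotone into $(\pow X,\supseteq)$, and the lattice-bisimulation transfer property is by definition only quantified over join-irreducibles, so your restriction to $\ell=\mathord\downarrow\phi$ loses nothing.
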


\section{The Lattice Monad}
\label{sec:latmon}


When modelling a LaTS as a coalgebra in the Kleisli category of a
monad, the choice of monad is not obvious. One could try to simply use
the monad mapping sets to arbitrary lattice-valued functions defined
on objects as $TX = \mathbb{L}^X$ and on arrows as
$Tf(b)(y)=\bigsqcup_{f(x)\leq_Yy}b(x)$, however, this would not be
equivalent to the reader monad. Given a monotone function
$f\colon \Phi\to X$, one would like to define a corresponding mapping
$\bar{f}\colon X\to \mathbb{L}$ with $\mathbb{L} = \mathcal{O}(\Phi)$
and $\bar{f}(x) = \bigsqcup \{\phi\in\Phi \mid f(\phi)\le x\}$.
However, this does not result in a bijection, since some arrows
$\bar{f}\colon X\to\mathbb{L}$ do not represent a monotone function
$f\colon \Phi\to X$. Hence, we start by imposing restrictions on
mappings $\mathbb{L}^X$ and defining a suitable endofunctor in our
base category $\pos$.

Throughout this section, we consider $\lattice$ to be an arbitrary frame.
\begin{defi}
For an ordered set $(X,\leq_X)$, define the poset
$TX=(X\rightarrow\mathbb L)^* \subseteq \pos(X,\L)$ as the subset containing
all those monotone maps $b\colon X\rightarrow\L$ such that for any
join-irreducible element $\ell\in\mathcal J(\lattice)$, the minimum of $\{x \mid
\ell \le_\L b(x)\}$ exists. This means:
                \begin{equation}
                \exists_{x\in X}\ \ell\leq_\lattice b(x)\land \forall_{x'\in X}\
                \big(\ell\leq_\lattice b(x') \implies x \leq_X
                x'\big)\enspace.
                \label{TminCondition}
                \end{equation}
    For functions
    $b, c\in (X\rightarrow\mathbb L)^*$ we let
    \[
      b\leq_{TX}c\iff \forall_{x\in X}\ b(x)\geq_\lattice c(x).
    \]
\end{defi}
\noindent
Before stating $T$ as a functor, we canonically relate the function spaces
$(X\to \L)^*$ and~$X^{\J(\L)}$. 
\begin{lem}
  For each $X$ in $\pos$, we have a monotone $\tau_X\colon (X\to \L)^* \to
  X^{\J(\L)}$ defined by
  \begin{equation}
    \tau_X (b)(\ell) = \min \{ x\in X \mid \ell \le_\L b(x)\}.
  \end{equation}
\end{lem}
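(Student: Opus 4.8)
The plan is to verify three things in turn: that each value $\tau_X(b)(\ell)$ is well defined, that $\tau_X(b)$ is a legitimate element of $X^{\J(\L)}=\pos(\J(\L),X)$ (i.e.\ a \emph{monotone} map, since $\J(\L)$ carries the order inherited from $\L$), and finally that $\tau_X$ itself is monotone with respect to the order $\le_{TX}$ on the domain and the pointwise order on $X^{\J(\L)}$.

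Well-definedness is immediate: for $b\in(X\to\L)^*$ and a join-irreducible $\ell\in\J(\L)$, the set $\{x\in X\mid \ell\le_\L b(x)\}$ has a minimum by the defining condition~\eqref{TminCondition}, so $\tau_X(b)(\ell)$ denotes a unique element of $X$. To see that $\tau_X(b)$ is monotone in $\ell$, I would take $\ell\le_\L\ell'$ for $\ell,\ell'\in\J(\L)$ and observe $\{x\mid \ell'\le_\L b(x)\}\subseteq\{x\mid \ell\le_\L b(x)\}$, since $\ell\le_\L\ell'\le_\L b(x)$ for any $x$ in the smaller set. As the minimum of a subset always lies above the minimum of the superset, $\tau_X(b)(\ell)=\min\{x\mid \ell\le_\L b(x)\}\le_X\min\{x\mid \ell'\le_\L b(x)\}=\tau_X(b)(\ell')$. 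Note this step uses only the existence of the two minima, not the monotonicity of $b$.

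For monotonicity of $\tau_X$, I would fix $b\le_{TX}c$, which by definition unwinds to $b(x)\ge_\L c(x)$ for all $x\in X$, and show $\tau_X(b)(\ell)\le_X\tau_X(c)(\ell)$ for every $\ell\in\J(\L)$. Writing $x_c=\tau_X(c)(\ell)$, we have $\ell\le_\L c(x_c)\le_\L b(x_c)$, so $x_c$ lies in $\{x\mid \ell\le_\L b(x)\}$; since $\tau_X(b)(\ell)$ is the minimum of that set, $\tau_X(b)(\ell)\le_X x_c=\tau_X(c)(\ell)$. Quantifying over all $\ell$ yields $\tau_X(b)\le\tau_X(c)$ in the pointwise order, as required.

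The only real subtlety—and the main thing to watch—is the order reversal built into $\le_{TX}$: because $b\le_{TX}c$ unwinds to the pointwise \emph{reverse} inequality $b(x)\ge_\L c(x)$, the inclusion of candidate sets runs in the opposite direction to what one might first expect, which is exactly what makes the minima comparable in the right direction. There is no genuine obstacle beyond carefully keeping apart the two opposite orderings (on $\L$-values versus on the function space $(X\to\L)^*$) together with the order on $\J(\L)$.
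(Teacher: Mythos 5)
Your proof is correct and follows essentially the same three-step structure as the paper's: well-definedness of each minimum via \eqref{TminCondition}, monotonicity of $\tau_X(b)$ in $\ell$ using that condition, and monotonicity of $\tau_X$ in $b$ by unwinding the reversed pointwise order $\le_{TX}$. In fact, in the last step you state the conclusion in the correct direction, $\tau_X(b)(\ell)\le_X\tau_X(c)(\ell)$ for $b\le_{TX}c$, which is what monotonicity requires, whereas the paper's displayed chain inadvertently flips the final two inequalities (the superset $\{x\mid\ell\le_\L b(x)\}$ has the \emph{smaller} minimum, so the implication should read $\min\{x\mid\ell\le_\L b(x)\}\le_X\min\{x\mid\ell\le_\L c(x)\}$).
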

\begin{proof}
  Given $b \in (X\to \L)^*$ and $\ell \in \J(\L)$, the minimum $\tau_X(b)(\ell)$
  exists.
  \begin{itemize}
    \item Since the minimum is unique if it exists, $\tau_X(b)$ is a map.

    \item The map $\tau_X(b)\colon \J(\L)\to X$ is monotone, because for $\ell_1
      \le_\L \ell_2 \in \J(\L)$ with $x_1 := \tau_X(b)(\ell_1)$, and $x_2 :=
      \tau_X(b)(\ell_2)$ we have $\ell_1\le_\L\ell_2 \le_\L b(x_2)$ and thus $x_1
      \le x_2$ by \eqref{TminCondition} (for $x=x_1$, $x'=x_2$).
    \item The map $\tau_X$ is monotone in $b\in (X\to \L)^*$, because for $b
      \le_{TX} c$ and $\ell \in \L$ we have:
      \[
      \begin{array}[b]{rrcl}
        &\forall_{x\in X}\quad b(x)&\ge_\L& c(x)
        \\
        \Longrightarrow
        &\{ x\in X\mid \ell \le_\L b(x) \}
        &\supseteq& \{ x\in X\mid \ell \le_\L c(x) \}
        \\
        \Longrightarrow
        & \min\{ x\in X\mid \ell \le_\L c(x) \}
          &\le_X& \min\{ x\in X\mid \ell \le_\L b(x) \}
        \\
        \Longrightarrow
        & \tau_X(c)(\ell)
          &\le_X&\tau_X(b)(\ell)
      \end{array}
      \tag*{\qEd}
      \]
    \end{itemize}
\def\popQED{} 
\end{proof}
\begin{lem}
We have an adjunction-style situation with $b$ and $\tau_X(b)$, namely
\begin{equation}
  \tau_X(b)(\ell) \le_X x
  \quad\text{iff}\quad
  \ell \le_\L b(x)
  \quad
  \text{for all }x\in X, \ell\in \J(\L)
   \label{Tadjunction}
\end{equation}
\end{lem}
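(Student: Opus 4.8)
The plan is to unfold $\tau_X(b)(\ell)$ as $\min S_\ell$, where $S_\ell := \{x'\in X \mid \ell \le_\L b(x')\}$, and then establish the biconditional \eqref{Tadjunction} by proving the two implications separately. Each implication uses exactly one of the two defining properties of a minimum: that $\min S_\ell$ is a \emph{lower bound} of $S_\ell$, and that $\min S_\ell$ is itself a \emph{member} of $S_\ell$. The minimum exists precisely because $b\in (X\to\L)^*$ and $\ell\in\J(\L)$, so that condition~\eqref{TminCondition} applies; this is what makes $\tau_X(b)(\ell)$ well-defined in the first place.

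First I would prove $\ell \le_\L b(x) \Rightarrow \tau_X(b)(\ell) \le_X x$. The hypothesis says exactly that $x\in S_\ell$, and since $\tau_X(b)(\ell)=\min S_\ell$ is a lower bound of $S_\ell$, it follows immediately that $\tau_X(b)(\ell)\le_X x$.

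For the converse, $\tau_X(b)(\ell)\le_X x \Rightarrow \ell \le_\L b(x)$, I would use that $\min S_\ell\in S_\ell$, which unfolds to $\ell \le_\L b(\tau_X(b)(\ell))$. Combining the assumption $\tau_X(b)(\ell)\le_X x$ with the monotonicity of $b$ gives $b(\tau_X(b)(\ell))\le_\L b(x)$, and transitivity then yields $\ell \le_\L b(\tau_X(b)(\ell)) \le_\L b(x)$, as required.

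There is no real obstacle here: the statement is just the standard Galois-connection characterisation of the adjoint pair $(b,\tau_X(b))$, and the only ingredients are the universal property of the minimum together with the monotonicity of $b$. The one point worth double-checking is that the argument itself never needs $\ell$ to be join-irreducible — irreducibility is used only upstream, to guarantee via \eqref{TminCondition} that the minimum defining $\tau_X(b)(\ell)$ exists.
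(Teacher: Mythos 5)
Your proof is correct and matches the paper's argument exactly: one direction uses that $\tau_X(b)(\ell)=\min\{x'\mid \ell\le_\L b(x')\}$ is a lower bound of that set, the other that it is a member of it combined with monotonicity of $b$. Your closing observation that join-irreducibility of $\ell$ is needed only to guarantee (via \eqref{TminCondition}) that the minimum exists is also accurate.
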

\noindent
\begin{proof}
  The direction $\Rightarrow$ holds because by definition of $\tau$, $\ell
  \le_\L b(\tau_X(b)(\ell))$ and so $\ell \le b(x)$ by monotonicity of $b$. For
  $\Leftarrow$, recall that $\tau_X(b)(\ell)$ is the least element
  in $X$ with $\ell \le_\L b(x)$.
\end{proof}
This correspondence is not a proper adjunction (or in $\pos$
equivalently a Galois connection), because $\tau_X(b)$ is only defined
for $\ell \in \J(\L)$ and not for all elements of $\L$.
\begin{lem}
  $\tau_X$ is an isomorphism; its inverse $\tau_X^{-1}\colon X^{\J(\L)}\to (X\to
  \L)^*$ is given by
  \[
    \tau_X^{-1}(B)(x) = \bigsqcup\{\ell \in \J(\L)\mid B(\ell) \le_X x \}
  \]
  and for $B\colon \J(\L)\to X$,
  \begin{equation}
    \ell \le_\L \tau_X^{-1}(B)(x)
    \quad\text{iff}\quad
    B(\ell) \le_X x
    \quad\text{for all }x\in X, \ell \in \J(\L).
    \label{TadjunctionInv}
  \end{equation}
\end{lem}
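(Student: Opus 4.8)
The plan is to verify that the proposed formula for $\tau_X^{-1}$ is well defined — i.e. that $\tau_X^{-1}(B)$ really lands in $(X\to\L)^*$ — then to establish the correspondence \eqref{TadjunctionInv}, and finally to read off both composites $\tau_X\circ\tau_X^{-1}$ and $\tau_X^{-1}\circ\tau_X$ as identities. First I would isolate the standard frame fact that every join-irreducible $\ell$ is \emph{join-prime}: if $\ell\le_\L\bigsqcup L$ then $\ell\le_\L\ell'$ for some $\ell'\in L$. This follows from \eqref{eq:jid}, since $\ell=\ell\sqcap\bigsqcup L=\bigsqcup\{\ell\sqcap\ell'\mid \ell'\in L\}$, so that join-irreducibility forces $\ell=\ell\sqcap\ell'$, i.e.\ $\ell\le_\L\ell'$, for some $\ell'\in L$.

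Next, fix $B\in X^{\J(\L)}$ and write $c:=\tau_X^{-1}(B)$. Monotonicity of $c$ is immediate, since $x\le_X x'$ enlarges the set $\{\ell\in\J(\L)\mid B(\ell)\le_X x\}$ and hence its supremum. For \eqref{TadjunctionInv}, the implication $B(\ell)\le_X x\Rightarrow\ell\le_\L c(x)$ is just the definition of the join, while the converse uses join-primeness: if $\ell\le_\L c(x)=\bigsqcup\{\ell'\in\J(\L)\mid B(\ell')\le_X x\}$ then $\ell\le_\L\ell'$ for some such $\ell'$, and monotonicity of $B$ gives $B(\ell)\le_X B(\ell')\le_X x$. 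With \eqref{TadjunctionInv} in hand, the set $\{x\mid \ell\le_\L c(x)\}$ equals the principal up-set $\{x\mid B(\ell)\le_X x\}$, whose minimum is $B(\ell)$. This simultaneously shows that $c$ satisfies \eqref{TminCondition}, so $c\in(X\to\L)^*$, and that $\tau_X(c)(\ell)=B(\ell)$, i.e.\ $\tau_X\circ\tau_X^{-1}=\id$.

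For the remaining composite, let $b\in(X\to\L)^*$ and $B:=\tau_X(b)$. Using the adjunction \eqref{Tadjunction}, namely $B(\ell)\le_X x\iff\ell\le_\L b(x)$, I compute
\[
\tau_X^{-1}(\tau_X(b))(x)=\bigsqcup\{\ell\in\J(\L)\mid B(\ell)\le_X x\}=\bigsqcup\{\ell\in\J(\L)\mid\ell\le_\L b(x)\}.
\]
Thus the whole statement reduces to the identity $b(x)=\bigsqcup\{\ell\in\J(\L)\mid\ell\le_\L b(x)\}$, that every frame element is the join of the join-irreducibles below it. This is the main obstacle and the only place where the concrete structure of $\L$ is used: it holds because the frames in play are (Birkhoff-)dual to posets, so their join-irreducibles are join-dense — for $\L=\mathcal O(\Phi)$ the irreducibles are the principal downsets $\mathord\downarrow\phi$ and every downset is their union. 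Granting this, both composites are identities and $\tau_X$ is the claimed isomorphism; I would close by observing that \eqref{TadjunctionInv}, already proved en route, is exactly the $\tau_X^{-1}$-counterpart of \eqref{Tadjunction}.
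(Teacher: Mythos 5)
Your proof is correct and follows essentially the same route as the paper's: both derive \eqref{TadjunctionInv} from \ref{eq:jid} via join-primeness of join-irreducibles (you isolate this as a lemma, the paper runs the same computation inline), both use it to identify $\{x\mid \ell\le_\L \tau_X^{-1}(B)(x)\}$ with the up-set of $B(\ell)$, giving \eqref{TminCondition} and $\tau_X\circ\tau_X^{-1}=\id$, and both reduce the other composite via \eqref{Tadjunction} to the join-density of $\J(\L)$ (which the paper uses silently where you make it explicit and correctly trace to $\L\cong\mathcal O(\Phi)$). The only step you omit is the routine check that $\tau_X^{-1}$ is itself monotone — needed for an isomorphism in $\pos$ — which follows at once because $B\le C$ pointwise gives $\{\ell\in\J(\L)\mid B(\ell)\le_X x\}\supseteq\{\ell\in\J(\L)\mid C(\ell)\le_X x\}$, hence $\tau_X^{-1}(B)(x)\ge_\L\tau_X^{-1}(C)(x)$ and so $\tau_X^{-1}(B)\le_{TX}\tau_X^{-1}(C)$ under the reversed order on $(X\to\L)^*$.
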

\begin{proof}~
  \begin{itemize}
  \item First of all $\tau^{-1}_X(B)\colon X\to \L$ is a monotone map, because if $x \le_X
    x'$, then
    \[
      \tau_X^{-1}(B)(x) =\bigsqcup\{\ell \in \J(\L)\mid B(\ell) \le x \} \le_\L
      \bigsqcup\{\ell \in \J(\L)\mid B(\ell) \le x' \} = \tau_X^{-1}(B)(x').
    \]

  \item For $B\colon \J(\L)\to X$, we have \eqref{TadjunctionInv} for all $\ell \in
    \J(\L)$ and $x\in X$, because:
    \[
      \begin{array}{rll}
      \ell \le_\L \tau^{-1}_X(B)(x)
      & \Longleftrightarrow
      \ell \le_\L \bigsqcup\{\ell'\in \J(\L) \mid B(\ell') \le_X x\}
      \\ & \Longleftrightarrow
      \ell = \ell \sqcap \bigsqcup\{\ell'\in \J(\L) \mid B(\ell') \le_X x\}
      \\ & \overset{\text{\ref{eq:jid}}}\Longleftrightarrow
      \ell = \bigsqcup\{\ell \sqcap \ell'\mid \ell' \in \J(\L),\ B(\ell') \le_X x\}
      \\ & \overset{\mathclap{\ell \in \J(\L)}}\Longleftrightarrow
      \ell \in \{\ell \sqcap \ell'\mid \ell' \in \J(\L),\ B(\ell') \le_X x\}
      \\ & \Longleftrightarrow
           \exists_{\ell'\in \J(\L)}\ \ell = \ell\sqcap \ell'\text{ and }B(\ell') \le_X x
      \\ & \Longleftrightarrow
           \exists_{\ell'\in \J(\L)}\ \ell \le_\L \ell'\text{ and }B(\ell') \le_X x
           \\
      &\overset{\mathclap{B\text{ monotone}}}\Longleftrightarrow
           \quad B(\ell) \le x
      \end{array}
    \]

  \item $\tau^{-1}$ is monotone in $B$, because for any $B\le C$ and
    $x\in X$
    \[
      \begin{array}{rrcl}
        & \forall_{\ell\in \L}\quad B(\ell) &\le_X& C(\ell)
        \\
        \Longrightarrow
        & \{ \ell \in \J(\L) \mid B(\ell) \le_X x\}
                                            &\supseteq& \{ \ell \in \J(\L) \mid C(\ell) \le_X x\}
        \\
        \Longrightarrow
        & \bigsqcup\{ \ell \in \J(\L) \mid B(\ell) \le_X x\}
                                            &\ge_\L& \bigsqcup\{ \ell \in \J(\L) \mid C(\ell) \le_X x\}
        \\
        \Longrightarrow&
                         \tau^{-1}_X(B)(x)&\ge_\L& \tau^{-1}_X(C)(x)
      \end{array}
    \]
    and so $\tau^{-1}_X(B) \le_{TX} \tau^{-1}_X(C)$.

  \item For $B\colon \J(\L)\to X$, $b := \tau_X^{-1}(B)$, and $\ell\in \J(\L)$ the witness for
    \eqref{TminCondition} is $B(\ell) \in X$:
    \begin{equation}
      \min\{x\in X \mid \ell \le_\L b(x)\}
      \overset{\smash{\text{\eqref{TadjunctionInv}}}}= \min\{x\in X \mid B(\ell) \le_X x\}
      = B(\ell)
      \label{Binverse}
    \end{equation}
    So $B(\ell)$ is the desired witness for \eqref{TminCondition}.

  \item We have $\tau_X(\tau^{-1}_X(B)) = B$ by \eqref{Binverse}.
  \item For the converse, if $b \in (X\to \L)^*$ then
    we have for all $x\in X$:
    \begin{align*}
      \tau^{-1}_X(\tau_X(b))(x)
      \quad&=\quad \bigsqcup\{\ell \in \J(\L)\mid \tau_X(b)(\ell) \le_X x \}
      \\
      &\overset{\mathclap{\text{\eqref{Tadjunction}}}}=\quad
      \bigsqcup\{\ell \in \J(\L)\mid \ell \le_\L b(x) \}
      \ =\ b(x)
    \tag*{\qEd}
    \end{align*}
  \end{itemize}
\def\popQED{} 
\end{proof}
So we now have an object mapping $T\colon \obj\pos\to \obj\pos$ and a family of
isomorphisms $\tau_X\colon TX \cong X^{\J(\L)}$. Since $\arg^{\J(\L)}$ is already a
functor, this induces a mapping on monotone maps for $T$:

\noindent
\begin{minipage}{.7\textwidth}
\begin{defi}
  \label{def:T-funk}
  Define $T$ on monotone maps $f\colon X\to Y$ by
  \[
    Tf(b)
    :=
    \tau^{-1}_Y\cdot f^{\J(\L)}\cdot \tau_X(b)
    \quad\text{for }b\in (X\to \L)^*
  \]
  making $T$ a functor.
\end{defi}
\end{minipage}\hfill%
    \begin{tikzcd}[row sep=5mm,baseline=-2mm]
      (X \to \L)^*
      \arrow{r}{\tau_X}
      \arrow[dashed]{d}{Tf :=}
      & X^{\J \L}
      \arrow{d}{f^{\J \L}}
      \\
      (Y \to \L)^*
      \arrow[shift right=1]{r}[swap]{\tau_Y}
      & Y^{\J \L}
      \arrow[shift right=1]{l}[swap]{\tau_Y^{-1}}
    \end{tikzcd}

\begin{rem}
  Using that $\arg^{\J(\L)}$ is a functor, $T$
  automatically preserves identities and composition. So by definition,
  $T\colon\pos\to\pos$ is a functor and $\tau\colon T\to \arg^{\J(\L)}$ is a natural
  isomorphism.
\end{rem}

\begin{prop}
  For $f\colon X\to Y$, $b\in (X\to \L)^*$, $y\in Y$ we have
  \[
    Tf(b)(y) = \bigsqcup_{\mathclap{f(x)\leq_Y y}}\,b(x)
  \]
\end{prop}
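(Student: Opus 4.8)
The plan is to unfold the definition of $Tf$ through the explicit formula for $\tau_Y^{-1}$ and then establish the claimed equality as two inequalities. Writing $B := f^{\J(\L)}(\tau_X(b)) = f\cdot \tau_X(b)$, Definition~\ref{def:T-funk} gives $Tf(b) = \tau_Y^{-1}(B)$, and hence $Tf(b)(y) = \tau_Y^{-1}(B)(y) = \bigsqcup\{\ell\in\J(\L)\mid f(\tau_X(b)(\ell))\le_Y y\}$. I would denote this index set by $S$, so that the left-hand side is $\bigsqcup S$, i.e.\ a join of join-irreducibles, and abbreviate the right-hand side as $R := \bigsqcup_{f(x)\le_Y y} b(x)$.

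For $\bigsqcup S \le_\L R$, fix $\ell \in S$ and put $x := \tau_X(b)(\ell)$. By construction of $\tau_X$ one always has $\ell \le_\L b(\tau_X(b)(\ell)) = b(x)$ (this is the $\Rightarrow$-part of \eqref{Tadjunction}), while $\ell \in S$ means precisely $f(x)\le_Y y$, so $x$ is among the elements indexing $R$ and therefore $\ell \le_\L b(x) \le_\L R$. Since this holds for every $\ell\in S$, taking the join yields $\bigsqcup S\le_\L R$.

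For the reverse inequality $R\le_\L \bigsqcup S$ it suffices to show $b(x)\le_\L \bigsqcup S$ for each $x$ with $f(x)\le_Y y$. Here I would invoke the key fact, already implicit in the proof that $\tau_X^{-1}\cdot\tau_X = \id$, that each such value is the join of the join-irreducibles below it: $b(x) = \bigsqcup\{\ell\in\J(\L)\mid \ell\le_\L b(x)\}$. For any join-irreducible $\ell\le_\L b(x)$, the adjunction \eqref{Tadjunction} gives $\tau_X(b)(\ell)\le_X x$, and monotonicity of $f$ together with $f(x)\le_Y y$ yields $f(\tau_X(b)(\ell))\le_Y f(x)\le_Y y$, i.e.\ $\ell\in S$. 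Thus every join-irreducible below $b(x)$ lies in $S$, so $b(x)\le_\L\bigsqcup S$; joining over all relevant $x$ gives $R\le_\L\bigsqcup S$, and the two inequalities together prove $Tf(b)(y) = \bigsqcup S = R$.

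The only non-routine point is the representation $b(x)=\bigsqcup\{\ell\in\J(\L)\mid\ell\le_\L b(x)\}$, which is \emph{not} a property of arbitrary frames but is exactly what the established identity $\tau_X^{-1}\cdot\tau_X=\id$ delivers for the values of maps $b\in(X\to\L)^*$. I expect the main (minor) obstacle to be recognising that this representation is already available, rather than attempting to re-derive it from scratch via \eqref{eq:jid}; once that is in hand, both inequalities reduce to bookkeeping with \eqref{Tadjunction} and the monotonicity of $f$.
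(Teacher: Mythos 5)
Your proof is correct and follows essentially the same route as the paper's: unfold $Tf(b)(y)$ via $\tau_Y^{-1}$ to the join $\bigsqcup\{\ell\in\J(\L)\mid f(\tau_X(b)(\ell))\le_Y y\}$, then use \eqref{Tadjunction}, monotonicity of $f$, and the representation of $b(x)$ as the join of the join-irreducibles below it. The only difference is presentational — you split the identity into two inequalities where the paper writes a single chain of equalities of index sets — and your observation that the representation $b(x)=\bigsqcup\{\ell\in\J(\L)\mid\ell\le_\L b(x)\}$ is already secured by $\tau_X^{-1}\cdot\tau_X=\id$ together with \eqref{Tadjunction} is a legitimate (indeed slightly more careful) way to justify the final step that the paper simply asserts.
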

\begin{proof}
  \ %
  \[
    \hspace{-18mm}
    \begin{array}[b]{r@{\ }c@{\ }l}
      Tf(b)(y) =
      \big((\tau^{-1}_Y\cdot f^{\J(\L)}\cdot \tau_X)(b)\mathrlap{\big)(y)}
      &=& \bigsqcup\{\ell \in \J(\L)\mid
          \smash{\overbrace{f(\tau_X(b)(\ell))}^{\mathclap{((f^{\J(\L)}\cdot \tau_X)(b))(\ell)}}}\le y\}
      \\ &\overset{f\text{ monotone}}=&
          \bigsqcup\{\ell \in \J(\L)\mid
             x\in X,
             \tau_X(b)(\ell) \le x, f(x) \le y
          \}
      \\ &\overset{\text{\eqref{Tadjunction}}}=&
          \bigsqcup\{\ell \in \J(\L)\mid
             x\in X,
             \ell \le b(x), f(x) \le y
          \}
      \\ &\overset{\sqcup\text{ associative}}=&
          \bigsqcup\big\{ \bigsqcup\{\ell \in \J(\L)\mid
             \ell \le b(x)\}\,\mid  x\in X, f(x) \le y
          \big\}
      \\[1mm] &=&
          \bigsqcup\big\{ b(x)\mid x\in X, f(x) \le y
          \big\}
    \end{array}
    \hspace{-2cm}
             \tag*{\qEd}
  \]
\def\popQED{} 
\end{proof}

Using the same pattern as in Definition~\ref{def:T-funk}, $T$ carries a canonical
monad structure:
\begin{defi}
  Define the monad structure $\eta\colon \Id\to T$, $\mu\colon TT\to T$ on $T\colon \pos\to \pos$
  by
  \[
    \begin{tikzcd}
      X
      \arrow[bend left=20]{dr}{\nu_X}
      \arrow[dashed]{d}[swap]{\eta_X}{:=}
      \\
      TX
      \arrow[shift right=1]{r}[swap]{\tau_X}
      & X^{\J(\L)}
      \arrow[shift right=1]{l}[swap]{\tau_X^{-1}}
    \end{tikzcd}
    \qquad
    \begin{tikzcd}[column sep = 12mm]
      TTX
      \arrow[dashed]{d}[swap]{\mu_X}{:=}
      \arrow{r}{(\tau*\tau)_X}
      & (X^{\J(\L)})^{\J(\L)}
      \arrow{d}{\zeta_X}
      \\
      TX
      \arrow[shift right=1]{r}[swap]{\tau_X}
      & X^{\J(\L)}
      \arrow[shift right=1]{l}[swap]{\tau_X^{-1}}
    \end{tikzcd}
  \]
  Here $\tau*\tau\colon TT \to (\arg^{\J(\L)})^{\J(\L)}$ is the Godement product (or
  star product, or horizontal composition), defined by
  $(\tau*\tau)_X := (\tau_X)^{\J(\L)}\cdot \tau_{TX}
  = \tau_{X^{\J(\L)}}\cdot T\tau_X$ (naturally equivalent).
\end{defi}
Again trivially, $\eta$ and $\mu$ are natural transformations because $\tau$,
$\nu$, and $\zeta$ are, and furthermore fulfill the monad laws, because $\nu$
and $\zeta$ do. By definition, $\tau$ is a monad isomorphism.
\begin{prop}
  Explicitly speaking, the monad structure on $T$ is defined as follows:
  \begin{align*}
    \eta_X(x)(x')=&\
        \begin{cases}
        \top&\text{if\ }x\leq x'\\
        \bot&\text{otherwise}
        \end{cases}\\
    \mu_X(h)(x)=&~
    \bigsqcup_{\mathclap{b\in(X\rightarrow\mathbb L)^*}}\ (h(b)\sqcap b(x)),
    \qquad \text{where}\ h\in((X\rightarrow\mathbb L)^*\rightarrow\mathbb L)^*.
  \end{align*}
\end{prop}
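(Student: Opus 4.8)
The plan is to unfold the two definitions $\eta_X := \tau_X^{-1}\cdot\nu_X$ and $\mu_X := \tau_X^{-1}\cdot\zeta_X\cdot(\tau*\tau)_X$ and then simplify, using the explicit description of $\tau_X^{-1}$, the adjunction-style equivalence \eqref{Tadjunction}, and the join-infinite distributive law \eqref{eq:jid}. Throughout I take $\Phi = \J(\L)$ in the reader-monad data, so that $\nu_X(x)(\ell)=x$ and $\zeta_X(D)(\ell)=D(\ell)(\ell)$.

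For $\eta$, first I would observe that $\nu_X(x)$ is the constant map $\ell\mapsto x$ on $\J(\L)$. Feeding it into $\tau_X^{-1}(B)(x')=\bigsqcup\{\ell\in\J(\L)\mid B(\ell)\le_X x'\}$ gives $\eta_X(x)(x')=\bigsqcup\{\ell\in\J(\L)\mid x\le_X x'\}$. When $x\le_X x'$ this index set is all of $\J(\L)$, whose join is $\top$ since Theorem~\ref{th:birkhoff} yields $a=\bigsqcup\{\ell\in\J(\L)\mid\ell\le_\L a\}$ for every $a$ (apply it to $a=\top$); when $x\not\le_X x'$ the set is empty and the join is $\bot$. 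This is exactly the stated case distinction.

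For $\mu$, the argument rests on three intermediate computations. Write $b_\ell:=\tau_{TX}(h)(\ell)$, the $\le_{TX}$-minimum of $\{b\in(X\to\L)^*\mid\ell\le_\L h(b)\}$, which exists because $h\in((X\to\L)^*\to\L)^*$. Since $\le_{TX}$ is the reverse pointwise order, being this minimum forces $b_\ell(x)=\bigsqcup\{b(x)\mid b\in(X\to\L)^*,\ \ell\le_\L h(b)\}$. Next, by definition $(\tau*\tau)_X(h)(\ell)=\tau_X(b_\ell)$ and $\zeta_X$ extracts the diagonal, so $B(\ell):=\zeta_X\big((\tau*\tau)_X(h)\big)(\ell)=\tau_X(b_\ell)(\ell)$. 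Finally, applying $\tau_X^{-1}$ and invoking \eqref{Tadjunction} in the form $\tau_X(b_\ell)(\ell)\le_X x\iff\ell\le_\L b_\ell(x)$ gives $\mu_X(h)(x)=\bigsqcup\{\ell\in\J(\L)\mid\ell\le_\L b_\ell(x)\}$.

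The main obstacle, and the only genuinely non-routine step, is reducing this to $\bigsqcup_b\,(h(b)\sqcap b(x))$. Here I would exploit that each $\ell\in\J(\L)$ is completely join-irreducible in the frame: combined with \eqref{eq:jid}, this turns $\ell\le_\L\bigsqcup S$ into $\ell\le_\L s$ for some $s\in S$. Applied to $S=\{b(x)\mid\ell\le_\L h(b)\}$, it shows $\ell\le_\L b_\ell(x)$ iff there is some $b$ with $\ell\le_\L h(b)$ and $\ell\le_\L b(x)$, i.e.\ $\ell\le_\L h(b)\sqcap b(x)$; a second application of the same principle shows this is equivalent to $\ell\le_\L\bigsqcup_b\,(h(b)\sqcap b(x))$. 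As this equivalence holds for every join-irreducible $\ell$, Theorem~\ref{th:birkhoff} (every element is the join of the join-irreducibles below it) lets me conclude $\mu_X(h)(x)=\bigsqcup_b\,(h(b)\sqcap b(x))$. The care required is in tracking the two roles of the index $\ell$ — once selecting $b_\ell$ and once as the diagonal argument passed through $\zeta_X$ — and in checking that complete join-irreducibility together with \eqref{eq:jid} genuinely licenses commuting the join past the comparison with $\ell$.
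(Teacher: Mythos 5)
Your proposal is correct and follows essentially the same route as the paper: unfold $\tau_X^{-1}\cdot\nu_X$ and $\tau_X^{-1}\cdot\zeta_X\cdot(\tau*\tau)_X$, reduce via \eqref{Tadjunction} to $\bigsqcup\{\ell\in\J(\L)\mid \ell\le_\L \tau_{TX}(h)(\ell)(x)\}$, and establish the equivalence $\ell\le_\L\tau_{TX}(h)(\ell)(x)\iff\ell\le_\L\bigsqcup_b(h(b)\sqcap b(x))$ by combining \eqref{eq:jid} with complete join-irreducibility (the paper isolates this as equation \eqref{tauTX}). Your only cosmetic deviation is routing the argument through the identity $\tau_{TX}(h)(\ell)(x)=\bigsqcup\{b(x)\mid\ell\le_\L h(b)\}$ rather than through the one-sided use of \eqref{Tadjunction} and the minimality of $\tau_{TX}(h)(\ell)$, which is a valid equivalent formulation.
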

\begin{proof}
  For the unit $\eta_X \colon X \to (X\to \L)^*$ and $x,x'\in X$ we have directly:
  \begin{align*}
  \eta_X(x)(x')
  = (\tau^{-1}_X\cdot \nu_X(x))(x')
  &= \bigsqcup\{ \ell\in \J(\L)\mid \nu_X(x)(\ell) \le x'\}
  = \bigsqcup\{ \ell\in \J(\L)\mid x \le x'\}
    \\
  &= \begin{cases}
    \bigsqcup_{\ell\in \J(\L)} \ell & \text{if }x \le x'
    \\
    \bigsqcup\emptyset & \text{otherwise}
    \\
    \end{cases}
    \quad
    = \begin{cases}
    \top & \text{if }x \le x'
    \\
    \bot & \text{otherwise}
  \end{cases}
  \end{align*}
  Before characterising $\mu$, we first prove that for all
  $h\in((X\rightarrow\mathbb L)^*\rightarrow\mathbb L)^*$ and $x\in X$,
  \begin{align}
    \ell \le_\L \ \bigsqcup_{\mathclap{b\in (X\to \L)^*}}\ h(b)\sqcap b(x)
    \quad\Longleftrightarrow\quad
    \ell \le_\L \tau_{TX}(h)(\ell)(x)
    \qquad\text{for all }\ell\in \J(\L).
    \label{tauTX}
  \end{align}
  \begin{itemize}[leftmargin=8mm,itemsep=1ex]
  \item[$(\Rightarrow)$] Note that for any $L\subseteq \L$, if $\ell \le_\L
    \bigsqcup L$, then $\ell = \ell \sqcap \bigsqcup L = \bigsqcup_{\ell'\in L}
    \ell\sqcap \ell'$ (using \ref{eq:jid}), and since $\ell$ is join-irreducible, there is some $\ell' \in L$ with $\ell \le_\L \ell'$.
    Hence for the current assumption, there is some $b\in (X\to \L)^*$ with
    $\ell \le_\L h(b) \sqcap b(x)$. Since in particular $\ell \le_\L h(b)$, we
    have $\tau_{TX}(h)(\ell) \le_{TX} b$ by \eqref{Tadjunction} and finally
    $\ell \le_\L b(x) \le_\L
    \tau_{TX}(h)(\ell)(x)$ by the definition of $\le_{TX}$.
  \item[$(\Leftarrow)$]
    For $b:=\tau_{TX}(h)(\ell) = \min\{c\in TX \mid \ell \le_\L h(c)\}$ we have
    $\ell \le_\L h(b)$ by the definition of $\tau$ and $\ell \le_\L b(x)$ by the
    current assumption; hence $\ell \le_\L h(b)\sqcap b(x)$.
  \end{itemize}
  Now for $h\in TTX,x\in X$, $\mu_X\colon ((X\to \L)^* \to \L)^* \to (X\to
  \L)^*$ is characterised as desired:
  \begin{align*}
    \begin{array}[b]{rcl}
    (\mu_X(h))(x)
    &=& \big((\tau_X^{-1}\cdot \zeta_X\cdot (\tau*\tau)_X)(h)\big)(x)
    \\
     &=& \big((\tau_X^{-1}\cdot \zeta_X\cdot\tau_X^{\ \J(\L)}\cdot \tau_{TX})(h)\big)(x)
    \\[1mm]
    &\overset{\text{Def}}=&
     \bigsqcup\{\ell \in \J(\L)\mid \big((\zeta_X\cdot\tau_X^{\ \J(\L)}\cdot \tau_{TX})(h)\big)(\ell) \le_X x\}
    \\[2mm]
    &=&
        \bigsqcup\{\ell \in \J(\L)\mid
        \tau_X(\underbrace{\tau_{TX}(h)(\ell)}_{\in TX})(\ell) \le_X x\}
    \\
    &\overset{\text{\eqref{Tadjunction}}}=&
        \bigsqcup\{\ell \in \J(\L)\mid
        \ell \le_\L \tau_{TX}(h)(\ell)(x)\}
    \\
    &\overset{\text{\eqref{tauTX}}}=&
        \bigsqcup\{\ell \in \J(\L)\mid
        \ell \le_\L \displaystyle\bigsqcup_{b\in TX}\!(h(b)\sqcap b(x))\}
    =
        \displaystyle
        \ \bigsqcup_{\mathclap{b\in (X\to \L)^*}}\ (h(b)\sqcap b(x)).
    \end{array}
  \tag*{\qEd}
  \end{align*}
\def\popQED{} 
\end{proof}

It is a standard exercise to see that there is a one-to-one
correspondence between monad morphisms and functors between their
Kleisli categories~\cite[Prop. 4.0.10]{moggi1989}. So $\tau$ induces
an isomorphism between categories
$\mathsf{Kl}(T)\xrightarrow{\cong} \mathsf{Kl}(\_^{\J(\L)})$, defined as
\[
  (f\colon X\to TY) \mapsto (\tau_X\cdot f\colon X\to Y^\Phi).
\]
Now when fixing a \emph{finite} partially ordered set $\Phi$ and putting $\L :=
\mathcal{O}(\Phi)$, Birkhoff's theorem (cf. Theorem~\ref{th:birkhoff}) provides $\Phi \cong \J(\L)$ and so
$
  T \cong \arg^{\J(\L)} \cong \arg^\Phi.
$
\section{Modelling Conditional Transition Systems as
  Coalgebras}
\label{sec:modelling-cts}

Recall that once a condition is fixed by a CTS then it behaves like a
traditional transition system (until another upgrade). Thus, it is natural to
consider the powerset functor to model the set of successor states when the
upgrade order is discrete. This way of modelling CTS adapts the approach
in \cite{ABHKMS12}, where the set of actions $A$ was fixed to be
singleton.


\begin{defi}\label{def:P-funk}
  The powerset functor $\pow$ on $\pos$ maps posets $(X,\le)$ to $(\pow
  X,\subseteq)$, the subsets of $X$ ordered by inclusion. For $f\colon X\to Y$, $\pow
  f(S) = f[S]$ is the forward image.
\end{defi}

\begin{rem}
In other words, $\mathcal P$ on $\pos$ is
the composition of the forgetful functor $\pos \to \Set$ with the
ordinary powerset $\Set\to \pos$. Sometimes, the dual functor $\Dual\colon \pos \overset{\cong}{\to}\pos$ is required which sends each
poset $(X,\le_X)$ to its dual $(X,\ge_X)$. Then, the
composition $\Dual\pow(X,\le_X)$ contains the subsets of $X$
ordered by inverse inclusion.
\end{rem}

Next, we define two functors $H\colon \pos\to\pos$ -- based on $\pow$ -- for modelling CTS as
coalgebras for the extension of $H$ to the Kleisli category
$\Kl(\arg^\Phi)$. The first one closely follows the concrete
Definition~\ref{def:cts} with the reversed inclusion order. However,
it turns out that this functor can not be extended to the Kleisli
category for non-discrete $\Phi$ (cf. Example~\ref{exa:impossible-dualpow}). Hence, we also consider a second
functor, which not only records all the successors for a given
condition $\phi$, but also all possible successors for conditions
$\phi'\le \phi$ via pairs of the form $(x,\phi)$. In order to
faithfully model CTSs, we here need to consider the usual inclusion
order (if a condition is larger we have more potential upgrades).

\begin{rem} \label{exaCTSCoalgebra}
  A CTS $(X,A,\Phi,f)$ defines the following Kleisli morphisms:
  \begin{enumerate}
  \item Considering the sets $X$ and $A$ as discrete posets $(X,=)$
    and $(A,=)$, the map $f\colon X\times A\to \pos(\Phi,\Dual\pow X)$ is a
    morphism
          \begin{align}
            f&\colon X\times A \longrightarrow (\Dual\pow X)^\Phi&&\text{ in }\pos.
                                                   \tag*{}
          \intertext{Up to exponential laws, this corresponds to}
            \alpha&\colon X\longrightarrow ((\Dual\pow X)^A)^\Phi&&\text{ in }\pos,
                                                   \tag*{}
          \intertext{in other words a Kleisli morphism}
            \alpha&\colon X\kleislito (\Dual\pow X)^A&&\text{ in }\Kl(\arg^\Phi).
                                                   \tag*{}
          \intertext{However, this is not necessarily a coalgebra, since we do
          not have an
          endofunctor on $\Kl(\arg^\Phi)$ yet. In the following, an extension
          of $\Dual\pow(\arg)^A$ is provided for discrete $\Phi$. Furthermore,
          it is shown that there is no meaningful extension for non-discrete
          $\Phi$. For discrete $\Phi$, the order does not make a difference, so
          we can model CTS as $\pow$-coalgebras}
            \alpha&\colon X\kleislito (\pow X)^A&&\text{ in }\Kl(\arg^\Phi)\text{ for discrete $\Phi$}.
                                                   \label{ctsDiscrete}
          \end{align}
    \item Another way is to encode the possible upgrades explicitly in the
      morphism. Therefore, define the monotone map $\alpha\colon X \to (\pow(X\times
      \Phi)^A)^\Phi$ directly by
      \begin{equation}
        \alpha(x)(\phi)(a) =
           \{(x',\phi')\mid x\xrightarrow{\smash{a,\phi'}}x' \land \phi'\leq\phi\}.
        \label{eq:downclosedCTS}
      \end{equation}
      By the discreteness of $X$ and $A$, $\alpha$ is trivially monotone in $x$
      and $a$. For $\psi\le \varphi$,
      \begin{align*}
        (x',\varphi') \in \alpha(x)(\psi)(a)
        &\Rightarrow x\xrightarrow{\smash{a,\varphi'}}x'\text{ and }\varphi'\le \psi
          \\
        &\Rightarrow x\xrightarrow{\smash{a,\varphi'}}x'\text{ and }\varphi'\le \varphi
        \Rightarrow (x',\varphi') \in \alpha(x)(\varphi)(a)
      \end{align*}
      so $\alpha$ is monotone in $\varphi$. As for the previous functor, we can
      read \eqref{eq:downclosedCTS} as a Kleisli arrow
      \[
        \alpha\colon X\kleislito \pow(X\times \Phi)^A
      \]
      which is a coalgebra as soon as an extension of $\pow(\arg\times \Phi)^A$
      to $\Kl(\arg^\Phi)$ is provided.
  \end{enumerate}
\end{rem}

\subsection{Functor Extensions}
\label{sec:functor-extensions}

Independently from $\pos$, functor extensions to the Kleisli category of the
reader monad are of special shape: it is just a tensorial strength fixing of one
parameter that fulfills two axioms. Actually, we state our characterisation in an
even higher generality by recognising that the monad structure on an endofunctor
$T$ is induced by a comonad $L$ when $L\dashv T$ (cf.
Proposition~\ref{prop:comonad-induces-monad}). 
This characterisation is afterwards used to extend the two functors for CTS to
$\Kl(\arg^\Phi)$.


\begin{lem}\label{lem:natToDist}
Recall from Proposition~\ref{prop:comonad-induces-monad} that a comonad $(L,\vartheta,\delta)$ induces a monad $(T,\eta,\mu)$ when $L \dashv T$ with the unit and the counit of adjunctions as $\rho$ and $\epsilon$, respectively.
%
  Then, there is a one-to-one correspondence between distributive laws
  $\lambda\colon HT \rightarrow TH$ and comonad-over-functor distributive laws
  $\Lambda\colon LH \rightarrow HL$.
\end{lem}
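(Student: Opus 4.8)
The plan is to realise $\lambda$ and $\Lambda$ as \emph{mates} of one another under the adjunction $L \dashv T$, and then to show that this mate bijection carries the monad-distributivity axioms for $\lambda$ exactly onto the comonad-distributivity axioms for $\Lambda$. Concretely, from a comonad distributive law $\Lambda: LH \to HL$ I would define
\[
  \lambda \;:=\; \big(HT \xrightarrow{\rho HT} TLHT \xrightarrow{T\Lambda T} THLT \xrightarrow{TH\epsilon} TH\big),
\]
and conversely, from a distributive law $\lambda: HT \to TH$ over the monad I would define
\[
  \Lambda \;:=\; \big(LH \xrightarrow{LH\rho} LHTL \xrightarrow{L\lambda L} LTHL \xrightarrow{\epsilon HL} HL\big).
\]
Both composites are natural by naturality of $\rho$, $\epsilon$ and of $\Lambda$ (resp.\ $\lambda$).

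First I would check that these two assignments are mutually inverse. This is the standard mate bijection: substituting one formula into the other, one slides $\Lambda$ (resp.\ $\lambda$) past the inserted (co)unit using naturality of $\rho$ and $\epsilon$, and then collapses the remaining $L\rho$--$\epsilon L$ and $\rho T$--$T\epsilon$ pairs by the triangle identities $\epsilon L \cdot L\rho = \id_L$ and $T\epsilon \cdot \rho T = \id_T$ of $L \dashv T$. This step is routine bookkeeping and uses no distributive-law axiom, so it already yields a bijection at the level of natural transformations.

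The substance is the transfer of axioms, and here I would exploit that, by Proposition~\ref{prop:comonad-induces-monad}, the unit and multiplication of $T$ are themselves adjoint transposes (mates) of the comonad data: $\eta = T\vartheta \cdot \rho$ is the mate of the counit $\vartheta: L \to \Id$, and $\mu = T(\epsilon \cdot L\epsilon T \cdot \delta TT)\cdot \rho TT$ is the mate of $\hat\mu := \epsilon \cdot L\epsilon T \cdot \delta TT\colon LTT \to \Id$. Because taking mates is compatible with whiskering and with the relevant vertical pastings, the unit law $\lambda \cdot H\eta = \eta H$ for $\lambda$ becomes, upon taking mates on both sides, precisely the counit law $H\vartheta \cdot \Lambda = \vartheta H$ for $\Lambda$: the mate of $\eta H$ is $\vartheta H$ (one triangle identity), and the mate of $\lambda \cdot H\eta$ unfolds to $H\vartheta \cdot \Lambda$ by naturality of $\lambda$ and of $\epsilon$. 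The multiplication law
\[
  \lambda \cdot H\mu \;=\; \mu H \cdot T\lambda \cdot \lambda T
\]
is handled the same way but is the main obstacle: since $\mu$ is built from the threefold composite $\delta TT$, $L\epsilon T$, $\epsilon$, taking its mate and matching the result against the comultiplication law
\[
  \Lambda L \cdot L\Lambda \cdot \delta H \;=\; H\delta \cdot \Lambda
\]
requires a sizeable pasting diagram -- a pentagon entirely analogous to the one appearing in the reader-monad instance -- whose commutativity rests on naturality of $\delta$, $\epsilon$ and $\Lambda$, on comonad coassociativity, and on both triangle identities.

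I expect essentially all the work to lie in this pentagonal step; once it is in place, the symmetry of the two mate formulas delivers both implications at once, upgrading the bijection of the second paragraph to a bijection between the two classes of distributive laws. Conceptually, the whole statement is an instance of mate calculus -- distributive laws of $H$ over $T$ are liftings of $H$ to the Eilenberg--Moore category $\C^T$, comonad distributive laws of $L$ over $H$ are liftings of $H$ to the Eilenberg--Moore coalgebras of $L$, and the adjunction $L \dashv T$ with the induced structure of Proposition~\ref{prop:comonad-induces-monad} identifies these two categories -- but I would present the explicit formulas above, since they make both the bijection and the axiom transfer verifiable by direct diagram chases in $\C$.
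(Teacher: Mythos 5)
Your proposal is correct, but it takes a genuinely different route from the paper. The paper's proof contains no diagram chase at all: it observes that $\Kl(T)(X,Y)\cong\C(X,TY)\cong\C(LX,Y)\cong\coKl(L)(X,Y)$, so extensions of $H$ to $\Kl(T)$ biject with extensions of $H$ to $\coKl(L)$, and then invokes Theorem~\ref{remExtension} (extensions of $H$ to $\Kl(T)$ $\leftrightarrow$ laws $\lambda: HT\to TH$) together with its dual (extensions to $\coKl(L)$ $\leftrightarrow$ laws $\Lambda: LH\to HL$) to conclude. You instead work at the level of the laws themselves, exhibiting $\lambda$ and $\Lambda$ as mates under $L\dashv T$ and transporting the axioms through the mate calculus. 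Your explicit formulas are right and in fact agree with the ones the paper records in the remark following the lemma ($\lambda = TH\epsilon\cdot\bar\Lambda T$ with $\bar\Lambda = T\Lambda\cdot\rho H$). What your route buys is self-containedness and computability; what it costs is exactly the pentagonal pasting for the $\mu$/$\delta$ axiom that you flag but do not carry out -- this is genuine, nontrivial work (an earlier draft of this argument for the special case $L=\arg\times\Phi$ fills a full-page figure), and it is precisely what the paper's two-line reduction to Theorem~\ref{remExtension} sidesteps. One small caution on your closing conceptual remark: a law $\lambda: HT\to TH$ for a plain endofunctor $H$ corresponds to an \emph{extension of $H$ to the Kleisli category}, not to a lifting to the Eilenberg--Moore category (that would be a law of the shape $TH\to HT$); since you base the actual argument on the explicit mate formulas rather than on this remark, the slip is harmless, but the Kleisli formulation is the one that makes the paper's proof go through.
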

\begin{proof}
  The Kleisli category $\Kl(T)$ is isomorphic to the co-Kleisli
  category $\coKl(L)$ of $L$:
  \[
    \Kl(T)(X,Y)
    \cong \C(X,TY)
    \cong \C(LX,Y)
    \cong \coKl(T)(X,Y)
  \]
  So we
  have a one to one correspondence between extensions $\hat H$ of $H$ to
  $\Kl(T)$ and extensions $\tilde H$ of $H$ to $\coKl(L)$:
  \[
    \begin{tikzcd}
      \Kl(T)
      \arrow{r}{\hat H}
      & \Kl(T)
      \\
      \C \arrow[hook]{u}{I}
      \arrow{r}{H}
      & \C \arrow[hook]{u}[swap]{I}
    \end{tikzcd}
    \Longleftrightarrow
    \begin{tikzcd}
      \coKl(L)
      \arrow{r}{\tilde H}
      & \coKl(L)
      \\
      \C \arrow[hook]{u}{\tilde I}
      \arrow{r}{H}
      & \C \arrow[hook]{u}[swap]{\tilde I}
    \end{tikzcd}
  \]
  where $\tilde I(f\colon X\to Y) = f\cdot \vartheta\colon LX \to Y$ is just the
  uncurrying of $I(f)$. Recall from Theorem~\ref{remExtension} that such
  extensions $\hat H$ are in one-to-one correspondence to distributive laws of
  $H$ over the monad $T$, and dually are such extensions $\tilde H$ in
  one-to-one correspondence to distributive laws $\Lambda\colon LH\rightarrow HL$.
\end{proof}

\begin{rem}
  Concretely, $\Lambda$ defines a distributive law $\lambda$ by composition
  \[
    \lambda_X :\equiv \big(
    HTX
    \xrightarrow{\bar \Lambda_{TX}}
    THLTX
    \xrightarrow{TH\epsilon_X}
    THX
    \big).
  \]
  The functor extension $\hat H\colon \Kl(T) \to \Kl(T)$ is then defined as follows,
  for an $f\colon X\kleislito Y$ and its corresponding $\check{f}\colon LX \to Y$:
  \begin{equation}
    \hat H (f\colon X\kleislito Y)
    :\equiv \big(
      HX
      \xrightarrow{\bar \Lambda_X}
      THLX
      \xrightarrow{TH\check{f}}
      THY
    \big)
    \label{extensionByS}
  \end{equation}
\end{rem}

\noindent
We can now apply this to the comonad $\arg \times \Phi$ and the monad $\arg^\Phi$ on $\pos$.
\begin{defi}
  \label{powersetLifting}
  For a discrete poset $\Phi$, the tensorial strength of $\pow$ on
  $\Set$ defines a family of monotone maps:
\[
  p_X\colon \pow X\times \Phi
  \to\pow (X\times \Phi ),
  \quad
  p_X(C,\varphi) := \{ (x,\varphi) \mid x\in C\}.
\]
By the naturality in $\Set$, $p$ is a natural transformation in $\pos$. And since
$\Phi$ is discrete, $p$ is monotone in $\Phi$.
\end{defi}
\begin{lem}
  The above $p$ is a distributive law of the comonad $\arg\times \Phi$ over $\pow$.
\end{lem}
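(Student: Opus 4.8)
The plan is to verify the two coherence axioms that a natural transformation of the form $\Lambda\colon L\pow \to \pow L$ (here $L = \arg\times\Phi$ and $\Lambda = p$) must satisfy in order to be a distributive law of the comonad $L$ over the functor $\pow$. Naturality of $p$---in $X$ via the $\Set$-naturality of the tensorial strength, and in $\Phi$ by discreteness---has already been recorded in Definition~\ref{powersetLifting}, so only the two laws involving the comonad structure remain. Recall that the comonad $\arg\times\Phi$ carries counit $\vartheta_X = \pi_1\colon X\times\Phi \to X$ and comultiplication $\delta_X = \id_X\times\Delta_\Phi\colon X\times\Phi \to X\times\Phi\times\Phi$, acting as $(x,\varphi)\mapsto(x,\varphi,\varphi)$.

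First I would check the counit law $\pow\vartheta_X\cdot p_X = \vartheta_{\pow X}$. Evaluating both sides on $(C,\varphi)\in\pow X\times\Phi$, the right-hand side is $\pi_1(C,\varphi)=C$, while the left-hand side is the forward image $\pow\pi_1\bigl(\{(x,\varphi)\mid x\in C\}\bigr) = \{x\mid x\in C\} = C$; the two agree. Second I would check the comultiplication law $p_{LX}\cdot Lp_X\cdot\delta_{\pow X} = \pow\delta_X\cdot p_X$, where $LX = X\times\Phi$. Starting again from $(C,\varphi)$: the comultiplication produces $(C,\varphi,\varphi)$, then $Lp_X = p_X\times\id_\Phi$ yields $(\{(x,\varphi)\mid x\in C\},\varphi)$, and a final $p_{LX}$ gives $\{(x,\varphi,\varphi)\mid x\in C\}$. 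On the other branch $p_X(C,\varphi) = \{(x,\varphi)\mid x\in C\}$, and taking the forward image along $\delta_X = \id_X\times\Delta_\Phi$ again produces $\{(x,\varphi,\varphi)\mid x\in C\}$; hence the two branches coincide.

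There is no genuine obstacle here: both verifications are routine element-wise computations. The only points requiring care are correctly recording the comonad structure of $\arg\times\Phi$---in particular that its comultiplication duplicates the $\Phi$-component via $\Delta_\Phi$---and stating the two axioms in the direction appropriate to a \emph{comonad}-over-functor law (rather than the monad-over-functor laws of Theorem~\ref{remExtension}). With these fixed, the defining diagrams commute on the nose, so $p$ is indeed a distributive law of $\arg\times\Phi$ over $\pow$.
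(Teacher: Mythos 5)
Your proof is correct and follows essentially the same route as the paper: both verify the counit law $\pow\pi_1\cdot p_X=\pi_1$ and the comultiplication law $p_{X\times\Phi}\cdot(p_X\times\Phi)\cdot\delta_{\pow X}=\pow\delta_X\cdot p_X$ by the same element-wise computations on $(C,\varphi)$, yielding $\{(x,\varphi,\varphi)\mid x\in C\}$ on both sides. Nothing to add.
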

\begin{proof}
  Using the axioms of the strength $s_{X,Y}$ of $\pow\colon \Set\to\Set$, the following
  diagrams in $\Set$ prove that $p_X = t_{X,\Phi}$ is a distributive law:
  \[
    \begin{tikzcd}[column sep=0mm,row sep=5mm,baseline=(bot.base),
      ]
      \pow X\times \Phi
      \arrow{rr}{s_{X,\Phi}}
      \arrow{d}[swap]{\pow X\times !}
      \arrow[rounded corners,to path={
        -- ([xshift=-3mm]\tikztostart.west)
        |- (\tikztotarget) \tikztonodes
      }]{ddr}[pos=0.2,swap]{\pi_1}
      \descto{drr}{naturality}
      && \pow \smash(X\times \Phi\smash)
      \arrow{d}{\pow (X\times !)}
      \arrow[rounded corners,to path={
        -- ([xshift=3mm]\tikztostart.east)
        |- (\tikztotarget) \tikztonodes
      }]{ddl}[pos=0.2]{\pow \pi_1}
      \\
      \pow X \times 1
      \arrow{rr}{s_{X,1}}
      \arrow[bend right=25]{dr}[near start,swap]{\cong}
      &
      {} \descto{d}{strength}
      & \pow X \times 1
      \arrow[bend left=25]{dl}[near start]{\cong}
      \\
      &
      |[alias=bot]|
      \pow X
    \end{tikzcd}
    \quad
    \begin{tikzcd}[column sep=-1mm,row sep=5mm,baseline=(bot.base)]
      \pow X \times \Phi
      \arrow{rr}{s_{X,\Phi}}
      \arrow{d}[swap]{\pow X\times \Delta_\Phi}
      \descto{drr}{naturality}
      && \pow \smash(X\times \Phi\smash)
      \arrow{d}{\pow\smash(X\times \Delta_\Phi\smash)}
      \\
      \pow X \times \Phi \times \Phi
      \arrow{rr}{s_{X,\Phi\times \Phi}}
      \arrow[bend right=25]{dr}[swap,near start]{s_{X,\Phi}\times \Phi}
      &
      {} \descto{d}{strength}
      & \pow \smash(X \times \Phi \times \Phi\smash)
      \\
      &
      |[alias=bot]|
      \pow \smash(X\times \Phi\smash) \times \Phi
      \arrow[bend right=25]{ur}[swap,near end]{s_{X\times \Phi,\Phi}}
    \end{tikzcd}
    \hspace{-2em}
    \tag*{\qEd}
  \]
  \def\popQED{} 
\end{proof}

Now to model CTS with action labels (the case when $|A|>1$) we use a
distributive law between the functor $\_^A$ and $\_^\Phi$.  Recall
from \cite[Prop 27.8(1)]{joyofcats}, that any cartesian closed
category has the power law $(\arg^\Phi)^A \cong (\arg^A)^\Phi$. Any
natural isomorphism is a distributive law, and so is
$\iota\colon (\arg^\Phi)^A\rightarrow (\arg^A)^\Phi$.

We have now completely defined the functor~$\hat {\pow^A}$ on the Kleisli
category. A Kleisli arrow $f\colon X\kleislito Y$, i.e.,
$f\colon X\to Y^\Phi$ is mapped to the Kleisli arrow
\begin{eqnarray*}
  (\pow X)^A \stackrel{(\pow f)^A}{\longrightarrow}
  \pow (Y^\Phi)^\Phi \stackrel{\lambda_{Y}^A}{\longrightarrow}
  ((\pow Y)^\Phi)^A \stackrel{\iota_{\pow Y}}{\longrightarrow}
  ((\pow Y)^A)^\Phi.
\end{eqnarray*}
As a result, the Kleisli arrow $\alpha\colon X\kleislito \pow(X)^A$
\eqref{ctsDiscrete} induced by a CTS is indeed a coalgebra on
$\Kl(\arg^\Phi)$ for discrete $\Phi$.

In the case of a non-discrete $\Phi$, the above $\bar p$ is not
defined since $p_X$ is not necessarily order preserving (even for discrete poset $X$). But, more generally, it is not possible to extend $\pow$ to $\Kl(\arg^\Phi)$ with the right notion of behavioural equivalence.
\begin{exa}
\label{exa:impossible-dualpow}
  Consider the set of conditions $\Phi=\{\phi,\phi'\}$ with $\phi' \le \phi$ and
  a singleton set of actions $A=\{*\}$. Define the CTS $\alpha\colon X\to (\Dual\pow X)^\Phi$
  on the discrete $X=\{x_1,x_2\}$
  \begin{center}
  \begin{tikzpicture}[node distance=2cm,every state/.style={draw,circle,minimum size=1.5em,inner sep=1}]
    \node[state] (x1) {$x_1$} ;
    \node[anchor=east] at (x1.west) {$\alpha\colon\ $};
    \node[state,right of = x1] (x2) {$x_2$} ;
    \begin{scope}[->]
    \draw[loop right] (x2) edge node [right] {$\phi'$} (x2) ;
    \end{scope}
  \end{tikzpicture}
  \end{center}
  in equations, $\alpha(x_2)(\varphi') = \{x_2\}$ and $\emptyset$ elsewhere.
  Then for any extension $\widehat{\Dual\pow}\colon\Kl(\arg^\Phi)\to \Kl(\arg^\Phi)$
  of $\Dual\pow$, $x_1$ and $x_2$ are identified in $\varphi$ by a
  $\widehat{\Dual\pow}$-coalgebra homomorphism, even though they are not
  conditionally bisimilar in $\varphi$.
\end{exa}
\begin{proof}
  Note that $x_1$ and $x_2$ are not bisimilar under $\varphi'$, because $x_1$
  can do a step whereas $x_2$ can not. So there is no bisimulation
  $R_{\varphi'}$ relating $x_1$ and $x_2$ in $\varphi'$.
  Consequently, there is no conditional bisimulation with $(x_1,x_2)\in
  R_\varphi$, since $R_\varphi \subseteq R_{\varphi'}$. However, we can identify
  $x_1$ and $x_2$ in $\varphi$ by a $\widehat{\Dual\pow}$-coalgebra homomorphism $h\colon
  (X,\alpha) \kleislito (Y,\beta)$, where $Y=\{y_1,y_2\}, y_2\le y_1$
  and where $\beta$ will be defined afterwards:
  \begin{align*}
    h(x_1)(\varphi') &= h(x_1)(\varphi) = h(x_2)(\varphi) = y_1
                      &
    h(x_2)(\varphi') &= y_2
  \end{align*}
  Since $y_1 \ge y_2$, $h$ is monotone. Having $h$, we can define $\beta$:
  \begin{align*}
    \beta(y_1)(\varphi') &= \beta(y_1)(\varphi)= \beta(y_2)(\varphi)
            = \emptyset
    \\
    \beta(y_2)(\varphi') &= (\widehat{\Dual\pow} h\circ \alpha)(x_2)(\varphi')
                           = \widehat{\Dual\pow} h(\alpha(x_2)(\varphi'))(\varphi')
  \end{align*}
  Monotonicity of $\beta$ holds in both arguments:
  \begin{align*}
    \varphi' &\le \varphi
    &\Longrightarrow&&
    \beta(y_2)(\varphi') &\supseteq \beta(y_2)(\varphi) = \emptyset
           \\
    y_2 &\le y_1
    &\Longrightarrow&&
    \beta(y_2)(\varphi') &\supseteq \beta(y_1)(\varphi') = \emptyset
  \end{align*}
  It remains to show that $h$ is a coalgebra homomorphism. Recall that
  in terms of the corresponding distributive law $\lambda$,
  $\widehat{\Dual\pow} h$ is defined as
  \[
    \widehat{\Dual\pow} h \equiv {\Dual\pow} X
    \xrightarrow{{\Dual\pow} h} {\Dual\pow}(Y^\Phi)
    \xrightarrow{\lambda_Y} {\Dual\pow} Y^{\,\Phi}\!.
  \]
  We know that $\Dual\pow h(\emptyset) = \emptyset$, and since $\lambda$ preserves
  the unit $\nu_Y$,
  \[
    \lambda_Y(\emptyset)
     = \lambda_Y(\Dual\pow\nu_Y(\emptyset))
     = \nu_{\Dual\pow Y}(\emptyset)
  \]
  and so we have in total that $\widehat{\Dual\pow} h(\emptyset) = \nu_{PY}(\emptyset)$.
  Hence, $h$ is indeed a homomorphism:
  \[
  \begin{tikzcd}
    x_1
     \arrow[mapsto]{r}{\alpha_\psi}
     \arrow[mapsto]{d}[swap]{h_\psi}
     & \emptyset
     \arrow[mapsto]{d}[swap]{(\widehat{\Dual\pow} h)_\psi}
    \\
    y_1
     \arrow[mapsto]{r}{\beta_\psi}
     & \emptyset
  \end{tikzcd}
  \text{ for all }\psi\in \Phi
  \quad
  \begin{tikzcd}
    x_2
     \arrow[mapsto]{r}{\alpha_{\varphi}}
     \arrow[mapsto]{d}[swap]{h_{\varphi}}
     & \emptyset
     \arrow[mapsto]{d}{(\widehat{\Dual\pow} h)_{\varphi}}
    \\
    y_1
     \arrow[mapsto]{r}{\beta_{\varphi}}
     & \emptyset
  \end{tikzcd}
  \quad
  \begin{tikzcd}
    x_2
     \arrow[mapsto]{r}{\alpha_{\varphi'}}
     \arrow[mapsto]{d}[swap]{h_{\varphi'}}
     & \alpha(x_2)(\varphi')
     \arrow[mapsto]{d}{(\widehat{\Dual\pow} h)_{\varphi'}}
    \\
    y_2
     \arrow[mapsto]{r}{\beta_{\varphi'}}
     & (\hat\pow h\circ \alpha)(x_2)(\varphi')
  \end{tikzcd}
  \]
  \\[-9mm]
\end{proof}
\noindent Hence another functor, namely $\mathcal V=\pow(\arg\times\Phi)$, is necessary to
model upgrades:
\begin{prop} \label{prop:F-funk-up}
  The $\pos$-functor $\pow(\arg\times\Phi)$ extends to $\Kl(\arg^\Phi)$ using
  the comonad distributive law:
  \begin{equation}
    \Lambda_X\colon \pow(X \times \Phi)\times \Phi \xrightarrow{\pi_1}
    \pow(X \times \Phi) \xrightarrow{\pow(\id_X\times\Delta_\Phi)}
     \pow(X \times \Phi \times \Phi)
     \label{eq:F-funk-up}
  \end{equation}
\end{prop}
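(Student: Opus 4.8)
The plan is to invoke Lemma~\ref{lem:natToDist} for the comonad $L = \arg\times\Phi$ and the monad $T = \arg^\Phi$ on $\pos$, whose adjunction $L\dashv T$ was fixed in Definition~\ref{def:readermonad}. By that lemma, extending $H := \pow(\arg\times\Phi)$ to $\Kl(\arg^\Phi)$ is equivalent to exhibiting a comonad-over-functor distributive law $\Lambda: LH\to HL$, so the whole content of the proposition is to check that the stated $\Lambda$ really is such a law. Concretely I must verify three things: that $\Lambda$ is a monotone natural transformation $LH\to HL$, that it respects the counit $\vartheta = \pi_1$, and that it respects the comultiplication $\delta = \id\times\Delta_\Phi$.

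Monotonicity and naturality I would dispatch first, as they are essentially formal. Each component $\Lambda_X$ is $\pi_1$ followed by $\pow$ of the monotone map $\id_X\times\Delta_\Phi$; since $\pow$ takes monotone maps to monotone direct images, $\Lambda_X$ is monotone and hence a genuine $\pos$-arrow. Naturality holds componentwise, because $\pi_1 = \vartheta_{HX}$ is a component of $\vartheta$ and $\pow(\id_X\times\Delta_\Phi)$ is $\pow$ applied to the natural $\delta$, so $\Lambda = (\pow\delta)\cdot(\vartheta H)$ is a composite of natural transformations.

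The two coherence laws form the core, and both reduce to a brief element chase (the forgetful functor $\pos\to\Set$ is faithful, so it suffices to compare underlying functions). Writing a generic element of $\pow(X\times\Phi)\times\Phi$ as $(S,\phi)$, one first records the normal form $\Lambda_X(S,\phi) = \{(x,a,a)\mid (x,a)\in S\}$: the external parameter $\phi$ is discarded by $\pi_1$, while the $\Phi$-component carried inside $S$ is duplicated by $\Delta_\Phi$. The counit law $H\vartheta_X\cdot\Lambda_X = \vartheta_{HX}$ then amounts to the observation that $\pow(\pi_1\times\id_\Phi)$ collapses each $(x,a,a)$ back to $(x,a)$, returning $S$. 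For the comultiplication law I would compute both composites and check they agree: $H\delta_X\cdot\Lambda_X$ and $\Lambda_{LX}\cdot L\Lambda_X\cdot\delta_{HX}$ both send $(S,\phi)$ to $\{(x,a,a,a)\mid (x,a)\in S\}$, the only care being to track, through the product-associativity isomorphisms, which $\Phi$-slot is copied at each stage.

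I expect no deep obstacle, only one conceptual point worth highlighting, namely why this $\Lambda$ works where the naive strength $p_X(C,\varphi) = \{(x,\varphi)\mid x\in C\}$ of Definition~\ref{powersetLifting} failed for non-discrete $\Phi$. Because $H = \pow(\arg\times\Phi)$ already stores a condition inside each pair, $\Lambda$ never injects the external $\varphi$ into the set---the very operation that destroyed monotonicity and, via Example~\ref{exa:impossible-dualpow}, the correct behavioural equivalence; it only copies the condition already present. In particular $\Lambda_X$ does not depend on $\phi$ at all, so monotonicity in $\Phi$ is automatic, and the sole fiddly ingredient in the verification is the bookkeeping of the nested products.
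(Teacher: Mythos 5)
Your proposal is correct and follows essentially the same route as the paper: reduce the claim via Lemma~\ref{lem:natToDist} to checking that $\Lambda$ is a comonad-over-functor distributive law, then verify preservation of the counit $\pi_1$ and the comultiplication $\id_X\times\Delta_\Phi$. The only difference is presentational — you verify the two axioms by an element chase on $(S,\phi)$, whereas the paper argues purely by functoriality of $\pow$ and coassociativity/naturality of $\delta$ — and your closing remark on why $\Lambda$ avoids the monotonicity failure of the strength $p_X$ is a correct (if not strictly necessary) observation.
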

\noindent The corresponding distributive law by
Lemma~\ref{lem:natToDist} is
\[
  \lambda_X\colon \pow(X^\Phi \times \Phi)
  \xrightarrow{\pow\fpair{\eval_X,\pi_2}}
    \pow(X \times \Phi) \xrightarrow{\nu_{\pow(\ldots)}}
  \pow(X\times\Phi)^\Phi
\]
\begin{proof}
  The counit $\pi_1$ of $\arg\times \Phi$ is preserved by $\Lambda$,
  i.e.~$\pow(\pi_1\times \Phi)\cdot \Lambda = \pi_1$, because
  \[
    \pow(\pi_1\times \Phi) \cdot \pow(\id_X\times \Delta_\Phi)
    = \pow(\id_X\times \id_\Phi)
  \]
  The comultiplication $\delta_X= X\times\Delta_\Phi$ of $\arg\times \Phi$ is preserved because
  $\Lambda_X = \pow(\delta_X)\cdot \pi_1$:
  of course $\pow(\delta_{X\times \Phi})\cdot \pow(\delta_X) =
  \pow(\delta_X\times \Phi)\cdot \pow(\delta_X)$, and precomposing this with
  $\pi_1$ and using naturality we have that $\Lambda$ preserves $\delta$.
\end{proof}

Moreover, this extension can be composed with $(\arg^\Phi)^A
\cong (\arg^A)^\Phi$ to obtain an extension of
$\pow(\arg\times\Phi)^A$. So $\alpha\colon X\kleislito \pow(X\times\Phi)^A$ from
\eqref{eq:downclosedCTS} indeed defines a coalgebra on $\Kl(\arg^\Phi)$.

\begin{rem}
The $\Kl(\arg^\Phi)$-extension $\widehat{\mathcal V}$ of the $\pos$-functor $\pow(\arg\times\Phi)^A$ has the explicit form:
for an arrow $f\colon X \rightarrow Y$ in $\klpos$, $\widehat{\mathcal V}f\colon (\power {X\times \Phi})^A\rightarrow (\power {Y\times \Phi})^A$ is a function $\widehat{\mathcal V}f\colon (\power {X\times \Phi})^A\rightarrow ((\power {Y\times \Phi})^A)^\Phi$, where $$\widehat{\mathcal V}f(p)(\phi)(a)=\{(f(x)(\phi'),\phi') \mid (x,\phi')\in p(a) \},$$  for all $p\in (\mathcal P(X\times \Phi))^A$, $\phi\in\Phi$ and $a\in A$.
\end{rem}

\subsection{Coalgebraic Behavioural Equivalence}
\label{sec:coalg-beheq}

Having defined coalgebras of interests on the Kleisli category $\klpos$, our next motive is to characterise conditional bisimilarity using the notion of behavioural equivalence in $\klpos$.
To this end, we first define a coalgebraic
generalisation of conditional bismilarity (Definition~\ref{def:condbisim1}),
requiring additional structure on a general functor $H \colon \pos\to\pos$, and then
prove that this notion coincides with the coalgebraic behavioural equivalence of
coalgebras on $\Kl(\arg^\Phi)$.
\begin{defi}[constant map]
  For a poset $X$, there is a unique monotone map $!\colon X\to 1$, the \emph{final
    morphism}. Also note that for any element $x\in X$, one has a monotone map
  $x\colon 1\to X$, mapping the only element in $1$ to $x\in X$. The
  composition of the above two maps $x!\colon X\to X$ is the constant map sending any
  element of $X$ to $x$.
\end{defi}
\begin{rem}
  Recall that a family of morphisms $(f_i\colon Y\to Z_i)_{i\in I}$ is
  \emph{jointly monic}, if for morphisms $g,h\colon X\to Y$ with
  $f_i\cdot g=f_i\cdot h$ for all $i\in I$ we have $g=h$. If the
  category has products, such a family is jointly monic iff
  $\fpair{f_i}_{i\in I}\colon Y\to \prod_{i\in I}Z_i$ is monic. For
  instance, the limit projections form a jointly-monic family.
\end{rem}
  For a Kleisli arrow $f\colon X\kleislito Y$ and a condition $\varphi \in \Phi$,
  recall the
  notation $f_\varphi\colon X\to Y, f_\varphi(x) = f(x)(\varphi)$. This simplifies
  Kleisli composition in the following arguments because 
  \[
    (g\circ f)_\varphi = g_\varphi \cdot f_\varphi \qquad \text{for any $f\colon
      X \kleislito Y,g\colon Y \kleislito Z$}.
  \]

\begin{defi}
  For a functor $H\colon \pos\to \pos$ with an extension $\hat H\colon \Kl(\arg^\Phi)\to
  \Kl(\arg^\Phi)$, a \emph{version filter} is a $\Phi$-indexed family of natural
  transformations $(\filter{\varphi}\colon \hat H \kleislito \hat H)_{\varphi\in\Phi}$ such that for every $\varphi \in \Phi$ the following restriction holds:
  \begin{equation}
    \begin{tikzcd}[row sep=-3mm,baseline=(mybase.base)]
      &  \hat HX
      \arrow[kleisli,shift left=1]{dr}[near start]{\hat H\overline{\id_X\times (\varphi!)}}
      \\ |[alias=mybase]|
      \hat HX
      \arrow[kleisli]{ur}[]{\filter{\varphi}_X}
      \arrow[kleisli]{dr}[swap]{\filter{\varphi}_X}
      &&
      \hat H(X\times \Phi)
      \\
      & \hat HX
      \arrow[kleisli,shift right=1]{ur}[swap,near start]{\hat H\overline{\id_X \times \id_\Phi}}
    \end{tikzcd}
    \quad\text{and}\quad
    \big((\filter{\psi}_X)_\varphi\colon HX \to HX \big)_{\psi\in \Phi}
    \text{ jointly monic}.
    \label{eq:filterAx}
  \end{equation}
\end{defi}
\noindent
Recall that $\overline{\id_X\times \id_\Phi}\colon X\kleislito X\times \Phi$ is just
the curried version of $\id_X \times \id_\Phi$.


However, it should be noted that
$\filter{\varphi}$ is \emph{not} required to be monotone in $\varphi$. In the
second of our main examples, $\filter{\varphi}$ is neither monotone nor antitone
in~$\varphi$.

Intuitively, the first part of \eqref{eq:filterAx} says that each map $\filter{\varphi}_X$ filters those elements from $\hat HX$ that are associated with the version $\varphi\in\Phi$. For instance,
$C\in \pow(X\times \Phi)$ can contain tuples $(x,\psi)$ holding an
arbitrary versions $\psi\in \Phi$, but after filtering by
$\filter{\varphi}_X\colon \pow(X\times\Phi) \kleislito \pow(X\times \Phi)$
only those terms with $\psi=\varphi$ remain.  The second part of
\eqref{eq:filterAx} expresses that each set of behaviours $b\in HX$ is
fully determined by the restrictions to all possible versions. Here,
it is not enough to require that the
$(\filter{\psi}_X)_{\psi\in \Phi}$ are jointly monic in
$\Kl(\arg^\Phi)$, because there are monos $m\colon X\kleislito Y$ in
$\Kl(\arg^\Phi)$ and $\varphi \in \Phi$ s.t.~$m_\varphi$ is not
monic.\footnote{For instance for $\Phi= \{\varphi' \le \varphi\}$,
  $m\colon 2\kleislito 2_\bot$ is monic, where $2=\{0,1\}$ is discrete and
  $2_\bot = \{0,1,\bot\}$ is~2 with a bottom element; define
  $m_\varphi = \id_2$, $m_{\varphi'} = \bot!$.}
\begin{prop} \label{kleisliNatural}
  For an extension $\hat H\colon \Kl(T)\to \Kl(T)$ of $H$,
  a family of morphisms $\rho_X\colon HX\to THX$ is
  a natural transformation
  $\rho\colon \hat H\kleislito \hat H$ iff
  $\rho_X$ is natural in $X$ with:
  \begin{equation}
    \begin{tikzcd}
      HTX
      \arrow[kleisli]{r}{\rho_{TX}}
      \arrow[kleisli]{d}[swap]{\hat H\id_{TX}}
      & HTX
      \arrow[kleisli]{d}{\hat H\id_{TX}}
      \\
      HX
      \arrow[kleisli]{r}{\rho_{X}}
      & HX
    \end{tikzcd}
     \quad \text{in $\Kl(T)$}
    \label{kleisliNatAxiom}
  \end{equation}
  where $\id_{TX}$ is considered as $\id_{TX}\colon TX\kleislito X$.
\end{prop}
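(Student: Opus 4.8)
The plan is to reduce naturality of $\rho$ in $\Kl(T)$ to testing it only against a \emph{generating} class of Kleisli arrows, namely the pure arrows $If$ together with the special arrows $\id_{TX}\colon TX\kleislito X$. The bridge is the elementary observation that every Kleisli arrow factors through these two kinds of maps. Concretely, for a Kleisli arrow $f\colon X\kleislito Y$, i.e.\ a $\C$-morphism $f\colon X\to TY$, write $If\colon X\kleislito TY$ for the pure arrow on the underlying $\C$-map $f\colon X\to TY$ (its carrier is $\eta_{TY}\cdot f$). Then, unfolding Kleisli composition,
\[
  \id_{TY}\circ If = \mu_Y\cdot T(\id_{TY})\cdot \eta_{TY}\cdot f = \mu_Y\cdot \eta_{TY}\cdot f = f,
\]
using the monad law $\mu_Y\cdot \eta_{TY}=\id_{TY}$. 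Hence $f=\id_{TY}\circ If$ exhibits any Kleisli arrow as a pure arrow followed by the arrow $\id_{TY}\colon TY\kleislito Y$. This factorisation lemma is the crux.

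Next I would record the two translations, recalling that the components $\rho_X$ are the Kleisli arrows $\hat HX\kleislito \hat HX$, i.e.\ $\C$-maps $HX\to THX$. On the one hand, naturality of $\rho$ against pure arrows is exactly naturality of $\rho$ viewed as a transformation $H\Rightarrow TH$ in $\C$: for $p\colon X\to Y$ in $\C$ the extension sends $Ip$ to $I(Hp)$, so the Kleisli naturality square reads $I(Hp)\circ \rho_X = \rho_Y\circ I(Hp)$, which by the pure-composition rules $Iq\circ g = Tq\cdot g$ and $g\circ Iq = g\cdot q$ becomes the $\C$-equation $THp\cdot \rho_X = \rho_Y\cdot Hp$; this is precisely ``$\rho_X$ natural in $X$''. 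On the other hand, naturality of $\rho$ against the arrows $\id_{TX}\colon TX\kleislito X$ is literally the square \eqref{kleisliNatAxiom}.

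With these in hand both implications are formal. For ($\Rightarrow$), if $\rho$ is natural against all Kleisli arrows then in particular it is natural against the pure arrows and against the $\id_{TX}$, which by the two translations yields naturality in $X$ and \eqref{kleisliNatAxiom}. For ($\Leftarrow$), take an arbitrary $f\colon X\kleislito Y$, write $f=\id_{TY}\circ If$, and use functoriality $\hat Hf=\hat H(\id_{TY})\circ \hat H(If)$ to split the naturality square for $f$ vertically into a top square (the pure-arrow square for $If\colon X\kleislito TY$) stacked over a bottom square (the instance of \eqref{kleisliNatAxiom} at $Y$). The top commutes by the assumed $\C$-naturality applied to the $\C$-map $f\colon X\to TY$ (note this uses naturality at the object $TY$, which is why $\C$-naturality at all objects is needed), the bottom by \eqref{kleisliNatAxiom}, and pasting the two recovers the square for $f$.

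The only genuinely delicate points are the factorisation lemma and the bookkeeping of the pure-composition rules, i.e.\ placing $Tq\cdot g$ versus $g\cdot q$ on the correct side; everything else is formal pasting of commuting squares together with the functoriality of $\hat H$. I expect no real obstacle beyond keeping the source and target objects of $\rho_{TY}$, $\hat H\id_{TY}$ and $\hat H(If)$ aligned.
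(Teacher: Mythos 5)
Your proposal is correct and follows essentially the same route as the paper: both reduce Kleisli-naturality to naturality against pure arrows (which translates to $\C$-naturality of $\rho_X\colon HX\to THX$) plus the single square \eqref{kleisliNatAxiom}, using the factorisation $f=\id_{TY}\circ If$ and functoriality of $\hat H$ to paste the two squares back together. No gaps.
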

\noindent
Note that $\hat H \id_{TX}\colon HTX\to THX$ is
the corresponding distributive law w.r.t.~Theorem~\ref{remExtension}.
\begin{proof} \leavevmode
  \begin{itemize}[leftmargin=8mm,itemsep=1ex]
    \item[$(\Rightarrow)$]
      Assume $\rho_X\colon \hat HX \kleislito \hat H X$ is natural in $X$. Then
      \eqref{kleisliNatAxiom} is just the naturality square for $\id_{TX}\colon TX
      \kleislito X$.
      For any
      pure $f\colon X\to Y$, the extension $\hat H$ ensures $\hat H If = IHf$, and so
      we have
      \[
        THf \cdot \rho_X = IHf \circ \rho_X = \rho_Y \circ IHf
        = \rho_Y\cdot Hf,
      \]
      i.e.~$\rho_X\colon HX \to THX$ is natural in $X$.
    \item[$(\Leftarrow)$]
      For $f\colon X\kleislito Y$, $\hat Hf$ can be rewritten as
      \[
      \hat Hf = \hat H(\id_{TY}\cdot f) = \hat H
      (\id_{TY}\circ If)
      = \hat H \id_{TY} \circ \hat HIf
      = \hat H \id_{TY} \circ IHf
      \]
      Using that $\rho_X\colon HX\to THX$ is natural in $X$, we have
      \begin{align*}
        \rho_Y\circ \hat H f
        &=  \rho_Y\circ \hat H\id_{TY}\circ IHf
        \overset{\text{\eqref{kleisliNatAxiom}}}{=}
         \hat H\id_{TY}\circ \rho_{TX}\circ IHf
        = \hat H\id_{TY}\circ (\rho_{TX}\cdot Hf)
          \\
        &\overset{\text{Naturality}}=  \hat H\id_{TY}\circ (THf \cdot \rho_{TX})
        = \hat H\id_{TY}\circ IHf \circ \rho_X
        = \hat Hf\circ \rho_X
          \tag*{\qEd}
      \end{align*}
  \end{itemize}
  \def\popQED{}
\end{proof}
\begin{exa}\leavevmode
  \begin{enumerate}[beginpenalty=99]
    \item For the standard powerset functor $\pow(\arg)$, define
      $\filter{\varphi}_X\colon \pow X\kleislito \pow X$
      \[
        (\filter{\varphi}_X)(C)(\psi) = \begin{cases}
            C & \text{if }\varphi = \psi \\
            \emptyset & \text{otherwise.}
          \end{cases}
      \]
      This is natural in $X$ because for $f\colon X\to Y^\Phi$ we have
      \begin{align*}
        (\hat Hf)_\psi \cdot \filter{\varphi}_X(C)(\psi)
        &= (\hat Hf)(C)(\psi)
        = (\filter{\varphi}_X)_\psi\cdot (\hat Hf)(C)(\psi)
        &\quad\text{if }\psi= \varphi
      \\
        (\hat Hf)_\psi \cdot \filter{\varphi}_X(C)(\psi)
        &= (\hat Hf)(\emptyset)(\psi)
        = \emptyset
        = (\filter{\varphi}_X)_\psi\cdot (\hat Hf)(C)(\psi)
        &\quad\text{if }\psi\neq \varphi
      \end{align*}
      The first axiom of \eqref{eq:filterAx} evaluated for $\psi\in \Phi$ can be
      checked by case distinction on $\psi=\varphi$. If $\psi\neq \varphi$, then
      $(\hat H\bar \id_{X\times \Phi})_\psi\cdot (\filter{\varphi}_X)_\psi$ is
      constantly $\emptyset$, and so is the other part of the diagram. If $\psi
      = \varphi$ then the diagram \eqref{eq:filterAx} commutes by definition of $\hat H$:
      \begin{align*}
        (\hat H \overline{\id_X\times \id_\Phi})_\varphi (C)
        &= \bar p_X (C)(\varphi)
        = \{ (x,\varphi) \mid x\in C\}
          \\ &
        = H\fpair{\id_X,\varphi!}(C)
        = (\hat H\overline{\id_X\times\varphi!})_\varphi (C).
      \end{align*}
      The family $((\filter{\psi}_X)_\varphi)_{\psi\in\Phi}$ is jointly monic,
      because already $(\filter{\varphi}_X)_\varphi$ is monic.
    \item Since the previous filter is defined for an extension for discrete
      $\Phi$, the previous filter function also is a filter for $\Dual\pow$.
    \item For $\pow(\arg\times \Phi)$, first define the family of natural
      transformations $r^\varphi\colon \pow(\arg\times\Phi)\to \pow(\arg\times\Phi)$
      \[
        r^{\varphi}_X(C) = \{(x,\varphi') \in C \mid \varphi' = \varphi \}
      \]
      and then $\filter{\varphi} :=
      \nu\cdot r^\varphi$. That is, the version filter is pure, just like the
      distributive law for this functor (Prop.~\ref{prop:F-funk-up}). For $f\colon
      X\times \Phi\to Y$ we have
      \begin{align*}
        r^\varphi_Y\cdot \pow(\fpair{f, \pi_2})(C)
        &= \{ (y,\varphi') \in \pow(\fpair{f,\pi_2})(C) \mid\varphi' = \varphi \}
          \\ &
        = \{ (f(x,\varphi'),\varphi') \mid (x,\varphi')\in C \land \varphi' = \varphi \}
          \\ &
        = \pow(\fpair{f,\pi_2})\{ (x,\varphi') \in C \mid\varphi' = \varphi \}
        = \pow(\fpair{f,\pi_2})\cdot r^\varphi_X(C).
      \end{align*}
      So in particular $r^\varphi$ is natural in $X$ and $r_X^\varphi \cdot
      \pow\fpair{\eval_X,\pi_2} = \pow\fpair{\eval_X,\pi_2}\cdot
      r_{X^\Phi}^\varphi$. Hence, $\filter{\varphi}\colon \pow(\arg\times
      \Phi)\kleislito \pow(\arg\times \Phi)$ is natural. For the axioms
      \eqref{eq:filterAx}, we have for all $h\colon \Phi\to\Phi$
      \begin{align*}
        (\hat H\overline{\id_X\times h})_\psi  \cdot (\filter{\varphi}_X)_\psi(C)
        &\overset{\mathclap{\eqref{extensionByS}}}{=}
          H(\id_X\times h)\cdot (\bar \Lambda_X)_\psi\cdot r^\varphi_X(C)
          \\
        &\overset{\mathclap{\eqref{eq:F-funk-up}}}{=}
          \pow(\id_X\times h\times \id_\Phi)\cdot \pow(\id_X\times\Delta_\Phi)\cdot r^\varphi_X(C)
        \\ &
             = \pow(\id_X\times \fpair{h,\id_\Phi})(\{(x,\varphi')\in C\mid \varphi' = \varphi\})
        \\ &
             = \{(x,h(\varphi'),\varphi')\in C\mid \varphi' = \varphi\}.
      \end{align*}
      Clearly, $\{(x,\varphi!(\varphi'),\varphi')\in C\mid \varphi' =
      \varphi\} = \{(x,\id_\Phi(\varphi'),\varphi')\in C\mid \varphi' =
      \varphi\}$, so \eqref{eq:filterAx} commutes. The family $(r^\psi_X)_{\psi\in \Phi}$ is jointly monic since, for any $t_1,t_2\in \pow(X\times\Phi)$ with $r^\psi_X(t_1)=
      r^\psi_X(t_2)$, we have
      \[
        t_1 = \bigcup_{\psi\in\Phi}r^\psi_X(t_1) = \bigcup_{\psi\in\Phi}r^\psi_X(t_2)
        = t_2\enspace.
      \]
      So for all $\varphi\in \Phi$, $(\filter{\psi}_{X\,\varphi})_{\psi\in \Phi}$ is
      jointly-monic, because $(\filter{\psi}_{X})_\varphi = r^\psi_X$.

    \item For $\mathcal{V}X=\pow(X\times\Phi)^A$ apply the previous filter component-wise,
      i.e.~$(\filter{\phi})^A$. When considering a CTS as a coalgebra for $\pow(\arg\times\Phi)$
      \eqref{eq:downclosedCTS}, the filter recovers the structure of the
      underlying transition system for a version $\phi\in\Phi$ as follows:
      \begin{equation*}
        \filter{\varphi}_X\cdot\alpha_\varphi(x)(a) = \{ (x',\varphi)\mid
        x\xrightarrow{\varphi,a} x' \}\enspace.
      \end{equation*}
  \end{enumerate}
\end{exa}
\begin{defi}[Coequaliser]
  Recall that for a parallel pair of morphisms $f,g\colon D\rightrightarrows X$, the
  coequaliser of $f$ and $g$ is a morphism $e\colon X\to Y$ such that
  \begin{enumerate}
    \item $e$ merges $f$ and $g$, i.e., $e\cdot f =e\cdot g$.
    \item $e$ is the least such morphism, i.e., for any $e'\colon
  X\to Y$ with $e'\cdot f = e'\cdot g$, there is a unique $u\colon Y\to Y'$ with $e'
  = u\cdot e$ as indicated in the following diagram.
    \[
      \begin{tikzcd}
        D \arrow[shift left=1]{r}{f}
          \arrow[shift right=1]{r}[swap]{g}
        & X
        \arrow{r}{\forall e'}
        \arrow[->>]{d}[swap]{e}
        & Y'
        \\
        & Y
        \arrow[dashed]{ur}[swap]{\exists! u}
      \end{tikzcd}
    \]
  \end{enumerate}
  A morphism is called a \emph{regular epimorphism} if it is the coequaliser for
  a pair of morphisms. In $\Set$ and in preorders, $Y$ is the quotient of $X$ by the
  reflexive, symmetric, transitive closure of the relation $\{ (f(d),g(d)) \mid
  d\in D\}$. In $\pos$, the first step is to construct $Y$ as in preorders, and
  in a second step, additional elements are identified due to antisymmetry.
\end{defi}
\begin{nota}
  Instead of writing a relation $E\subseteq X\times X$, we consider its
  projections $\pi_1,\pi_2\colon E\rightrightarrows X$ as morphisms, usually by
  writing a relation as $E\rightrightarrows X$. Then the quotient of $X$ by $E$,
  denoted by $\nicefrac{X}{E}$ is the coequaliser of the projections
  $E\rightrightarrows X$. If $E$ is already an equivalence relation, then this is
  the usual quotient (with additionally identified elements due to
  antisymmetry in $\pos$).
\end{nota}
Next, we lift the notion of conditional bisimulation to the level of coalgebras over the base category $\pos$. As a result, one can reason with conditional bisimilarity for any systems whose behavioural functor $\hat H\colon\klpos \to \klpos$ comes with a notion of version filter $\filter{\varphi}$.
\begin{defi} \label{def:conditional}
  Given a coalgebra $\alpha\colon X\to HX^\Phi$ for a functor with a version filter.
  \begin{enumerate}
  \item a \emph{conditional bisimulation} for $\alpha$ is a $\Phi$-indexed
    family of relations $R_\varphi \rightrightarrows X$ such that
    \begin{enumerate}
    \item $\varphi' \le \varphi$ implies $R_{\varphi'} \supseteq R_\varphi$.
    \item $R_\varphi$ is a bisimulation for the $H$-coalgebra
      $(\filter{\varphi}_X)_\varphi\cdot \alpha_\varphi$.
    \end{enumerate}
  \item a \emph{conditional congruence} is a $\Phi$-indexed family
    of relations $R_\varphi \rightrightarrows X$ such that
    \begin{enumerate}
    \item $\varphi' \le \varphi$ implies $R_{\varphi'} \supseteq R_\varphi$.
    \item the projections $R_\varphi\rightrightarrows X$ are made equal by the
      morphism:
      \[
        X \xrightarrow{\alpha_\varphi} HX \xrightarrow{(\filter{\varphi}_X)_\varphi} HX
        \xrightarrow{H \kappa_{\varphi}} H\nicefrac{X}{R_\varphi}\enspace,
      \]
      where $\kappa_{\varphi}$ is the coequaliser of the parallel arrows $R_\varphi \rightrightarrows X$.
    \end{enumerate}
  \end{enumerate}
\end{defi}
\noindent
Just like in the traditional coalgebraic setup, the notions conditional
congruence and conditional bisimulation coincide if the underlying endofunctor $H$
preserves weak pullbacks. This is the case for both $\pow(\arg)^A$ and
$\pow(\arg\times \Phi)^A$. Furthermore, it should be noted that the above notion of conditional bisimilarity for both the cases $\pow(\arg)^A$ and
$\pow(\arg\times \Phi)^A$ coincides with the concrete definition conditional bisimilarity (cf. Definition~\ref{def:condbisim1}).

\begin{defi}
  We say that a $\hat H$-coalgebra $\alpha\colon X\kleislito \hat HX$ \emph{preserves
    upgrades} if
  \begin{align}
    (\filter{\psi}_{X}\circ \alpha)_\varphi &=
    (\filter{\psi}_{X}\circ \alpha)_\psi
    &&\text{for all }\psi\le \varphi\text{ in }\Phi
    \label{coalgebraUp}
    \\
    (\filter{\psi}_{X}\circ \alpha)_\varphi
    &\ \text{is constant}
      &&\text{for all }\psi\not\le \varphi\text{ in }\Phi.
        \label{coalgebraDown}
  \end{align}
\end{defi}
\noindent Intuitively, \eqref{coalgebraUp} says that a state always has the same $\psi$ successors in a version $\psi$, no matter whether the state is already in the version $\psi$ or can upgrade to the version $\psi$.
The second property in \eqref{coalgebraDown} asserts that the successors of a state in two different version $\psi,\psi'$ (which cannot be upgraded from $\phi$) remain the same. In our working examples, we have $(\filter{\psi}_X\circ \alpha)_\varphi = \emptyset = (\filter{\psi'}_X\circ \alpha)_\varphi$ for any $\psi,\psi'\not\le \varphi$.
\begin{exa}
  The coalgebras modelling CTS in Remark~\ref{exaCTSCoalgebra} indeed preserve
  upgrades:
  \begin{enumerate}
  \item For $\pow(\arg)^A$ and discrete $\Phi$, all coalgebras satisfy the
    conditions: $\psi\le \varphi$ in \eqref{coalgebraUp} boils down to
    $\psi=\varphi$ and \eqref{coalgebraUp} becomes trivial; similarly,
    $\psi\not\le \varphi$ boils down to $\psi\neq\varphi$, and indeed
    $(\filter{\psi}_X\circ \alpha)_\varphi = (\filter{\psi}_X)_\varphi \cdot
    \alpha_\varphi$ is constantly $\emptyset$ by the definition of
    $\filter{\psi}\!$.
  \item For the coalgebra modelling a CTS with upgrades
    \[
      \alpha(x)(\phi)(a) =
          \{(x',\phi')\mid x\xrightarrow{\smash{a,\phi'}}x' \land \phi'\leq\phi\}.
      \tag{cf.~\ref{eq:downclosedCTS}}
    \]
    for any $\psi, \varphi\in \Phi$, we have
    \begin{align*}
      (\filter{\psi}_X)_\varphi\cdot \alpha_\varphi(x)(a)
       &= \{(x',\psi) \mid x\xrightarrow{\psi,a} x'\}
       = (\filter{\psi}_X)_\psi\cdot \alpha_\psi(x)(a)
         &&\text{ if }\psi\le \varphi
      \\
      (\filter{\psi}_X)_\varphi\cdot \alpha_\varphi(x)(a)
      &= \emptyset
         &&\text{ if }\psi\not\le \varphi.
    \end{align*}
  \end{enumerate}
\end{exa}
\noindent

For upgrade preserving coalgebras, we can show that the concrete
notion of behavioural equivalence coincides with the coalgebraic
notion. We first need the following lemma.

\begin{lem}
  \label{conditionalEquivalence}
  Given a functor $H$ with a version filter and a coalgebra $\alpha\colon X\kleislito \hat HX$ which preserves
  upgrades, then for any $f\colon X\kleislito Y$, $\varphi \in
  \Phi$, and $x_1,x_2\in X$, we have
  \begin{equation*}
    \begin{array}{c}
    x_1,x_2\text{ are merged by }
    \begin{tikzcd}[ampersand replacement = \&]
      X
      \arrow{r}{\alpha_\varphi}
      \& HX
      \arrow{r}{(\hat H f)_\varphi}
      \& HY
    \end{tikzcd}
        \\
      \Longleftrightarrow
      \\
    x_1,x_2\text{ are merged by }
    \begin{tikzcd}[ampersand replacement = \&]
      X
      \arrow[overlay]{r}{\alpha_\psi}
      \& HX
      \arrow{r}{(\filter{\psi}_X)_\psi}
      \& HX
      \arrow[overlay]{r}{H f_\psi}
      \& HY
    \end{tikzcd}
      \text{ for all }\psi\le\varphi.
    \end{array}
  \end{equation*}
\end{lem}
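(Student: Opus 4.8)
The plan is to strip both ``merged by'' statements down to elementwise equalities in $HY$: writing $z_i := \alpha_\varphi(x_i) \in HX$, the left-hand side reads $(\hat H f)_\varphi(z_1) = (\hat H f)_\varphi(z_2)$, and I will work throughout with the pointwise maps $(\hat H f)_\varphi$, $(\filter{\psi}_X)_\varphi$, $Hf_\psi$ instead of Kleisli composites, using $(g\circ f)_\varphi = g_\varphi\cdot f_\varphi$. The heart of the argument is a single functor-level identity comparing the filtered action of $\hat H f$ with the pure map $Hf_\psi$; joint monicity then upgrades pointwise comparisons, and the two upgrade-preservation equations \eqref{coalgebraUp}--\eqref{coalgebraDown} handle the bookkeeping on the quantifier over $\psi$.

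First I would establish, for all $\psi,\varphi\in\Phi$, the identity $(\hat H f)_\varphi \cdot (\filter{\psi}_X)_\varphi = Hf_\psi \cdot (\filter{\psi}_X)_\varphi$. The trick is to decompose the Kleisli arrow as $f = I\check f \circ \overline{\id_X \times \id_\Phi}$, which one checks directly from the pure-composition rule $Ip\circ g = Tp\cdot g$. Functoriality of $\hat H$ together with $\hat H I = I H$ then gives $(\hat H f)_\varphi = H\check f \cdot (\hat H\overline{\id_X \times \id_\Phi})_\varphi$. Precomposing with $(\filter{\psi}_X)_\varphi$ and invoking the commuting triangle in \eqref{eq:filterAx} for the filter $\filter{\psi}$ (in its $\varphi$-component) lets me replace $\overline{\id_X \times \id_\Phi}$ by the \emph{pure} arrow $\overline{\id_X \times \psi!} = I\fpair{\id_X, \psi!}$; since $\check f \cdot \fpair{\id_X, \psi!} = f_\psi$, the identity falls out. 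Combining it with the naturality of $\filter{\psi}$ from Proposition~\ref{kleisliNatural}, which reads $(\filter{\psi}_Y)_\varphi \cdot (\hat H f)_\varphi = (\hat H f)_\varphi \cdot (\filter{\psi}_X)_\varphi$, yields the form I actually want:
\[
  (\filter{\psi}_Y)_\varphi \cdot (\hat H f)_\varphi = Hf_\psi \cdot (\filter{\psi}_X)_\varphi .
\]

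Next I would prove, for arbitrary $z_1,z_2\in HX$, that $(\hat H f)_\varphi(z_1) = (\hat H f)_\varphi(z_2)$ holds iff $Hf_\psi \cdot (\filter{\psi}_X)_\varphi(z_1) = Hf_\psi \cdot (\filter{\psi}_X)_\varphi(z_2)$ for all $\psi\in\Phi$. The forward direction is immediate: apply $(\filter{\psi}_Y)_\varphi$ to the assumed equality and use the displayed identity. The backward direction is exactly where joint monicity of $\big((\filter{\psi}_Y)_\varphi\big)_{\psi\in\Phi}$ (the second half of \eqref{eq:filterAx}) is used, since the hypotheses say precisely that $(\hat H f)_\varphi(z_1)$ and $(\hat H f)_\varphi(z_2)$ agree after every $(\filter{\psi}_Y)_\varphi$.

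Finally I would set $z_i = \alpha_\varphi(x_i)$ and use that $\alpha$ preserves upgrades to recover the statement. By \eqref{coalgebraDown} the map $(\filter{\psi}_X)_\varphi \cdot \alpha_\varphi$ is constant for $\psi\not\le\varphi$, so those indices are trivially satisfied and the quantifier ``$\forall\,\psi\in\Phi$'' collapses to ``$\forall\,\psi\le\varphi$''; and by \eqref{coalgebraUp} we have $(\filter{\psi}_X)_\varphi \cdot \alpha_\varphi = (\filter{\psi}_X)_\psi \cdot \alpha_\psi$ for $\psi\le\varphi$, which rewrites each surviving condition into the claimed $Hf_\psi \cdot (\filter{\psi}_X)_\psi \cdot \alpha_\psi$. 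I expect the main obstacle to be the key identity of the second paragraph: getting the decomposition of $f$ and the correct $\varphi$-component of the filter axiom right, since that is the only place the defining property of a version filter, beyond naturality and joint monicity, is genuinely used. Once it is in hand, the equivalence and the upgrade-preservation rewriting are routine.
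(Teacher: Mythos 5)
Your proposal is correct and follows essentially the same route as the paper's proof: joint monicity of $((\filter{\psi}_Y)_\varphi)_{\psi\in\Phi}$ to reduce to filtered components, Kleisli-naturality of $\filter{\psi}$ to move the filter past $\hat Hf$, the decomposition $f = I\check f\circ\overline{\id_X\times\id_\Phi}$ together with axiom \eqref{eq:filterAx} to replace $(\hat Hf)_\varphi$ by $Hf_\psi$ after filtering (this is exactly the content of Figure~\ref{fig:DiaFilter}), and \eqref{coalgebraUp}/\eqref{coalgebraDown} to restrict the quantifier to $\psi\le\varphi$ and shift $\varphi$-components to $\psi$-components. The only difference is organizational: you isolate the composite identity as a standalone step before running the chain of equivalences, whereas the paper interleaves them.
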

\begin{proof}
  Since the $(\filter{\psi}_{Y})_{\varphi}$, ${\psi\in\Phi}$, are
  jointly-monic we have:
  \begin{enumerate}[label={$\Leftrightarrow$},leftmargin=15mm,itemsep=1mm]
    \item[] $x_1,x_2$ are merged by $(\hat Hf)_\varphi\cdot \alpha_\varphi$
    \item for all $\psi \in \Phi$, $x_1,x_2$ are merged by
      $(\filter{\psi}_{Y})_\varphi \cdot (\hat Hf)_\varphi\cdot \alpha_\varphi$
      = $(\filter{\psi}_{Y}\circ \hat Hf \circ \alpha )_\varphi$
  \end{enumerate}
  By the naturality of $\filter{\psi}\colon \hat H\kleislito \hat H$ we have:
  \begin{enumerate}[resume*]
  \item for all $\psi \in \Phi$, $x_1,x_2$ are merged by
      $(\hat Hf)_\varphi\cdot (\filter{\psi}_{X})_\varphi \cdot \alpha_\varphi$
      = $(\hat Hf \circ \filter{\psi}_{X}\circ \alpha )_\varphi$
  \item for all $\psi \le \varphi$, $x_1,x_2$ are merged by
      $(\hat Hf)_\varphi\cdot (\filter{\psi}_{X})_\varphi \cdot \alpha_\varphi$
       and\\
       for all $\psi \not\le \varphi$, $x_1,x_2$ are merged by
      $(\hat Hf)_\varphi\cdot (\filter{\psi}_{X})_\varphi \cdot \alpha_\varphi$
  \end{enumerate}
  By \eqref{coalgebraDown}, $(\filter{\psi}_{X})_\varphi\cdot
  \alpha_\varphi$ is constant for $\psi\not\le\varphi$ and thus the second
  conjunct is vacuous.
  \begin{enumerate}[resume*]
  \item for all $\psi \le \varphi$, $x_1,x_2$ are merged by
      $(\hat Hf)_\varphi\cdot (\filter{\psi}_{X})_\varphi \cdot \alpha_\varphi$
  \end{enumerate}
  By the commutativity of Figure~\ref{fig:DiaFilter}, we finally have the
  desired equivalence:
  \begin{enumerate}[resume*]
  \item for all $\psi \le \varphi$, $x_1,x_2$ are merged by
    $H(f_\psi)\cdot (\filter{\psi}_X)_\psi \cdot \alpha_\psi$
      \qedhere
  \end{enumerate}
  \begin{figure}[t]
    \begin{tikzcd}[column sep=20mm,row sep=13mm]
      X
      \arrow{r}{\alpha_\varphi}
      \arrow{dr}[swap]{\alpha_\psi}
      & HX
      \arrow{r}{(\filter{\psi}_X)_\varphi}
      \arrow{dr}[sloped,above]{(\filter{\psi}_X)_\varphi}
      \descto{d}{\eqref{coalgebraUp}}
      & HX
      \arrow{r}{(\hat H\overline{\id_X\times \id_\Phi})_\varphi}
      \descto{d}{\eqref{eq:filterAx}}
      \arrow[rounded corners, to path={
        {[rounded corners] -- ([yshift=6mm]\tikztostart.center)}
        -- ([yshift=6mm]HXPhi.west)
        -| (\tikztotarget.north) \tikztonodes
      }]{rdr}[right,pos=0.80]{(\hat Hf)_\varphi}
      & |[alias=HXPhi]| H(X\times \Phi)
      \arrow{dr}[sloped,above,pos=0.6]{(\hat HI\check{f})_\varphi = (IH\check{f})_\varphi}
      [sloped,below,pos=0.6]{= H\check{f}}
      \descto[xshift=2mm]{d}{\ensuremath{f_\psi = } \\ \ensuremath{\check{f}\cdot \fpair{\id_X,\psi!}}}
      \descto{r}{\ensuremath{f = I\check{f}\circ
          \bar \id_{X\times\Phi}}}
      &[3mm]
      {}
      \\
      & HX
      \arrow{r}[swap]{(\filter{\psi}_X)_\psi}
      & HX
      \arrow[shift left=1.5]{ur}[pos=0.4,sloped,above]{(\hat
        H\overline{\id_X\times (\psi!)})_\varphi}[name=hatH,anchor=center]{}
      \arrow[shift right=1.5]{ur}[pos=0.6,sloped,below]{H\fpair{\id_X, \psi!}}[name=justH,anchor=center]{}
      \arrow[from=hatH.center,to=justH.center,draw=none,-]{}[anchor=center,sloped]{=}
      \arrow{rr}[swap]{H(f_\psi)}
      & {} & HY
    \end{tikzcd}
    \caption{Commutative diagram showing the connection between $\alpha_\varphi$
      and $\alpha_\psi$, $\psi\le \varphi$, when uncurrying $f$ to
      $\check{f}\colon X\times \Phi \to Y$}
    \label{fig:DiaFilter}
  \end{figure}
\end{proof}
\noindent
The above lemma highlights an important property of upgrade preserving coalgebra; namely that two states $x_1,x_2$ have the same set of successors for a condition $\phi$ under the image of $\hat H(f)$ (where $f$ is an arrow in $\klpos$) if and only if they have the same set of successors for every upgrade $\psi \leq\phi$.
With this we can finally prove the main statement:
\begin{thm}
  \label{thm:conditionalMain}
  Let $H\colon \pos\to\pos$ preserve monos and have a version filter. Then for an
  upgrade preserving $\hat H$-coalgebra $\alpha\colon X\kleislito \hat HX$, states
  $x_1,x_2\in X$ are conditionally congruent in $\varphi$ iff there is a $\hat
  H$-coalgebra homomorphism $h$ with $h(x_1)(\varphi) = h(x_2)(\varphi)$.
\end{thm}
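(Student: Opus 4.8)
The plan is to use Lemma~\ref{conditionalEquivalence} as a dictionary between the coalgebraic and the concrete side. Writing $c_\psi := (\filter{\psi}_X)_\psi\cdot\alpha_\psi\colon X\to HX$ for the ordinary $H$-coalgebra that the filter carves out at version $\psi$, conditional congruence (Definition~\ref{def:conditional}(2)) amounts to a $\Phi$-indexed family $(R_\psi)$ in which each $R_\psi$ is an ordinary $H$-congruence for $c_\psi$ and which is \emph{monotone}, i.e.\ $R_{\psi'}\supseteq R_\psi$ whenever $\psi'\le\psi$. The Lemma rephrases ``$x_1,x_2$ are merged by $(\hat Hf)_\varphi\cdot\alpha_\varphi$'' as ``for all $\psi\le\varphi$, $x_1,x_2$ are merged by $H f_\psi\cdot c_\psi$'', and this is the identity I invoke repeatedly. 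I prove the two implications separately.

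For the direction from a conditional congruence to a homomorphism I start from $(R_\psi)$ with $(x_1,x_2)\in R_\varphi$. Since each $R_\psi$ is a congruence for $c_\psi$, the coequaliser $\kappa_\psi\colon X\to\nicefrac{X}{R_\psi}$ lifts to an $H$-coalgebra homomorphism into a quotient coalgebra $(\nicefrac{X}{R_\psi},\bar c_\psi)$. The monotonicity $R_{\psi'}\supseteq R_\psi$ yields connecting surjections $q_{\psi'\psi}\colon\nicefrac{X}{R_\psi}\twoheadrightarrow\nicefrac{X}{R_{\psi'}}$ for $\psi'\le\psi$, which I use to order the disjoint union $Y:=\coprod_{\psi}\nicefrac{X}{R_\psi}$ by declaring $a\le_Y b$ (for $a$ at level $\psi_1$, $b$ at level $\psi_2$) iff $\psi_1\le\psi_2$ and $q_{\psi_1\psi_2}(b)\ge a$; this is a partial order precisely because the $q$'s compose. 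Then $h(x)(\psi):=[x]_{R_\psi}$ is monotone in both $x$ and $\psi$, so it is a Kleisli arrow $h\colon X\kleislito Y$. I define the target coalgebra $\beta$ level-wise from the $\bar c_\psi$ and verify, via the Lemma together with the upgrade-preservation identities~\eqref{coalgebraUp}--\eqref{coalgebraDown}, that $h$ is a $\hat H$-coalgebra homomorphism. Finally $(x_1,x_2)\in R_\varphi$ gives $h(x_1)(\varphi)=[x_1]_{R_\varphi}=[x_2]_{R_\varphi}=h(x_2)(\varphi)$.

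For the converse I am given a homomorphism $h$ with $h_\varphi(x_1)=h_\varphi(x_2)$ and must exhibit \emph{some} conditional congruence relating $x_1,x_2$ at $\varphi$. The natural per-version candidate is $\ker h_\psi$: the homomorphism law rewrites $(\hat Hh)_\psi\cdot\alpha_\psi=\beta_\psi\cdot h_\psi$, so $\ker h_\psi\subseteq\ker(\beta_\psi\cdot h_\psi)\subseteq\ker(Hh_\psi\cdot c_\psi)$ by the Lemma, and since $H$ preserves monos, factoring $h_\psi$ through its image shows that $\ker h_\psi$ is indeed a congruence for $c_\psi$. To assemble these into the single \emph{monotone} family required by Definition~\ref{def:conditional}(2a) I pass to the largest conditional congruence $(R_\psi)$ and show $(x_1,x_2)\in R_\varphi$, translating the hypothesis $h_\varphi(x_1)=h_\varphi(x_2)$ through the Lemma into the per-version data for every $\sigma\le\varphi$.

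The main obstacle is exactly this monotonicity. The per-version congruences $\ker h_\psi$ need \emph{not} nest --- a homomorphism may collapse $x_1,x_2$ at a version $\psi$ while separating them at some smaller $\sigma\le\psi$ --- and one cannot repair this by intersecting congruences downwards, since for functors such as $\pow$, which does not preserve products, an intersection of congruences is no longer a congruence. What makes the family cohere across versions is upgrade preservation: the identity $(\filter{\sigma}_X)_\psi\cdot\alpha_\psi=c_\sigma$ for $\sigma\le\psi$ (a direct consequence of~\eqref{coalgebraUp}) together with the naturality of $\filter{\sigma}$ lets me transport an identification at $\psi$ to the version-$\sigma$ coalgebra $c_\sigma$, so that the behavioural data at the smaller versions is genuinely forced by, and compatible with, that at $\psi$. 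Since $\Phi$ is finite and $\pos$ has coequalisers, the largest conditional congruence exists, and the transport argument places $(x_1,x_2)$ in its $\varphi$-component, completing the converse.
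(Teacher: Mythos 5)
Your forward direction (conditional congruence $\Rightarrow$ homomorphism) is essentially the paper's argument made concrete: the ordered coproduct $\coprod_{\psi}\nicefrac{X}{R_\psi}$ you construct is precisely the coequaliser of $E=\{((x_1,\varphi),(x_2,\varphi))\mid (x_1,x_2)\in R_\varphi\}\rightrightarrows X\times\Phi$ that the paper takes, and the verification that $h$ is a homomorphism runs through Lemma~\ref{conditionalEquivalence} in the same way. One detail you should spell out: $\beta$ must land in $(HY)^\Phi$, so ``level-wise from the $\bar c_\psi$'' has to mean $\beta([x]_\psi)(\sigma):=(\hat Hh\circ\alpha)_\sigma(x)$, whose well-definedness is again the Lemma; the paper obtains this from the universal property of the coequaliser.

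The converse, however, has a genuine gap at exactly the point you single out as the main obstacle. ``Passing to the largest conditional congruence'' does not resolve the non-nesting of the $\ker h_\psi$: the largest conditional congruence is the union of all conditional congruences, so placing $(x_1,x_2)$ in its $\varphi$-component requires exhibiting \emph{some} conditional congruence containing the pair --- which is the original problem restated. Moreover its existence is asserted rather than proved (``$\Phi$ finite and $\pos$ has coequalisers'' does not by itself yield joins of congruences for a general mono-preserving $H$, let alone joins respecting the cross-version monotonicity constraint), and finiteness of $\Phi$ is not among the theorem's hypotheses. The repair is the move you talked yourself out of: replace $\ker h_\psi$ by $R_\psi:=\ker\big((\hat Hh)_\psi\cdot\alpha_\psi\big)=\ker(\beta_\psi\cdot h_\psi)$ --- equivalently, as in the paper, take $R_\psi$ to be the pairs merged at $\psi$ by \emph{some} homomorphism, using that $\beta\circ h$ is again a homomorphism. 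By Lemma~\ref{conditionalEquivalence} and the joint monicity of the filters this kernel equals $\bigcap_{\sigma\le\psi}\ker\big(Hh_\sigma\cdot(\filter{\sigma}_X)_\sigma\cdot\alpha_\sigma\big)$, so it \emph{is} downward-nested: apply the Lemma forwards at $\psi$, restrict the quantifier to $\sigma\le\psi'\le\psi$, and apply it backwards at $\psi'$. And although an arbitrary intersection of congruences need not be a congruence, this one is, because it is the kernel of a single morphism: factor the witnessing map as a regular epi $e$ followed by a mono $m$, observe that $e$ is then the coequaliser of $R_\psi\rightrightarrows X$, and use that $H$ preserves monos to cancel $Hm$ and conclude that the projections of $R_\psi$ are merged by $He\cdot(\filter{\psi}_X)_\psi\cdot\alpha_\psi$, as Definition~\ref{def:conditional} demands. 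Finally $h_\varphi(x_1)=h_\varphi(x_2)$ immediately gives $(x_1,x_2)\in R_\varphi$. This is the content of the paper's proof of the backward direction.
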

\begin{proof} \leavevmode
  \begin{itemize}[leftmargin=8mm,itemsep=1ex]
    \item[$(\Rightarrow)$] Given a conditional behavioural equivalence
      $(R_\varphi)_{\varphi\in \Phi}$, define $E\rightrightarrows X\times \Phi$
      as the relation
      \[
        E:=\{((x_1,\varphi),(x_2,\varphi))\mid (x_1,x_2)\in R_\varphi\}
      \]
      and let $e\colon X\times \Phi \rightarrow Y$ be the coequaliser of the
      projections of $E$. By definition, $e(x_1, \varphi) = e(x_2, \varphi)$ for
      all $(x_1,x_2) \in R_\varphi$. So diagramatically speaking, the coequaliser
      $\nicefrac{X}{R_\varphi}$ induces a unique morphism with
      \begin{equation}
        \begin{tikzcd}
          X \arrow{r}{\fpair{\id_X,\varphi!}}
          \arrow[rounded corners,
          to path={
            -- ([yshift=6mm]\tikztostart.center)
            -| ([xshift=8mm]\tikztotarget.center) \tikztonodes
            -- (\tikztotarget)
          },
          ]{dr}[pos=0.75]{\bar e_\varphi}
            \arrow[->>]{d}[swap]{\kappa_\varphi}
          & X\times \Phi
            \arrow[->>]{d}{e}
          \\
          \nicefrac{X}{R_\varphi}
            \arrow[dashed]{r}{}
          & Y
        \end{tikzcd}
        \mathrlap{\quad\text{for all }\varphi \in \Phi.}
        \label{geFactor}
      \end{equation}
      It remains to show that $\bar e\colon X\kleislito Y$ is the carrier of some
      $\hat H$-coalgebra homomorphism. The necessary coalgebra structure on $Y$
      will be induced by the coequaliser $e$.
      So fix $((x_1,\varphi),(x_2,\varphi)) \in E$, hence $x_1,x_2 \in
      R_\varphi$ and we have:
      \begin{itemize}[leftmargin=15mm,itemsep=0mm,labelsep*=3mm]
      \item[]
        $x_1,x_2 \in R_\psi$ for all $\psi\le \varphi$

      \item[$\overset{\mathclap{\text{Def.~\ref{def:conditional}}}}{\Longrightarrow}$~]
        $x_1,x_2$ are merged by $H\kappa_\psi\cdot (\filter{\psi}_X)_\psi\cdot
        \alpha_\psi$ for all $\psi\le \varphi$
      \item[$\overset{\mathclap{\text{\eqref{geFactor}}}}{\Longrightarrow}$~]
        $x_1,x_2$ are merged by $H\bar e_\psi\cdot (\filter{\psi}_X)_\psi\cdot
        \alpha_\psi$ for all $\psi\le \varphi$

      \item[$\overset{\mathclap{\text{Lem.~\ref{conditionalEquivalence}}}}{\Longrightarrow}$~]
        $x_1,x_2$ are merged by $(\hat H \bar e)_\varphi \cdot \alpha_\varphi$

      \end{itemize}
      So, the projections $E\rightrightarrows X\times \Phi$ are merged by the
      uncurried version of the morphism $\hat H \bar e\circ \alpha\colon X\kleislito
      HY$, i.e., by $u(x,\varphi) := (\hat H\bar e \circ \alpha)_\varphi
      (x)$.
      Hence, the coequaliser $e$ induces a unique morphism $\beta\colon Y\to HY$ with:
      \[
        \begin{tikzcd}
          E \arrow[shift left=1]{r}
            \arrow[shift right=1]{r}
          &X\times \Phi
          \arrow[->>]{d}[swap]{e}
          \arrow{dr}{u}
          \\
          &Y \arrow[dashed]{r}[below]{\beta}
          & HY
        \end{tikzcd}
        \hspace{-2mm} \Longrightarrow\quad
        \begin{tikzcd}
          X
          \arrow{d}[swap]{\bar e}
          \arrow{dr}[sloped,above]{\hat H \bar e\circ \alpha}
          \\
          Y^\Phi \arrow{r}[below]{\beta^\Phi}
          & HY^\Phi
        \end{tikzcd}
        \hspace{-2mm} \Longrightarrow\quad
        \begin{tikzcd}
          X
          \arrow[kleisli]{d}[swap]{\bar e}
          \arrow[kleisli]{r}{\alpha}
          & HX
          \arrow[kleisli]{d}{\hat H \bar e}
          \\
          Y \arrow[kleisli]{r}[below]{I\beta}
          & HY
        \end{tikzcd}
      \]

    \item[$(\Leftarrow)$]
      We prove directly that the family
      \[
        R_\varphi := \{
        (x_1,x_2)\in X\times X
        \mid \text{there is some $\hat H$-coalgebra $h$ with }
          h_\varphi(x_1) = h_\varphi(x_2)
        \}
      \]
      is a conditional congruence:
      \begin{enumerate}[label=(\alph*)]
      \item Let $\varphi' \le \varphi$ and let $(x_1,x_2) \in R_\varphi$ be
        witnessed by $h\colon (X,\alpha)\kleislito (Y,\beta)$. Then $x_1,x_2$ are
        merged by $\beta_\varphi \cdot h_\varphi = (\hat Hh)_\varphi\cdot
        \alpha_\varphi$. Applying Lemma~\ref{conditionalEquivalence} first
        forward, restricting to $\varphi'\le \varphi$ and then applying
        Lemma~\ref{conditionalEquivalence} backwards again, shows that $x_1,x_2$
        are merged by $(\hat Hh)_{\varphi'}\cdot \alpha_{\varphi'}$. Since $\hat
        Hh\circ \alpha = \beta \circ h$ is the composition of the $\hat H$-coalgebra
        homomorphisms $h$ and $\beta$, we have the witness for $(x_1,x_2) \in
        R_{\varphi'}$.

      \item Let $(x_1,x_2) \in R_\varphi$, witnessed by $h\colon (X,\alpha)\kleislito
        (Y,\beta)$. So we have $(p_1,p_2) \in R_\varphi$ for all $p_1,p_2$ with
        $h_\varphi(p_1) = h_\varphi(p_2)$. Then the regular epi part $e$ of
        $h_\varphi$ induces a unique $u$ such that:
        \[
          \begin{tikzcd}
            X
            \arrow[->>]{dr}[swap]{\kappa_{\varphi}}
            \arrow[->>]{r}{e}
            & K
            \arrow[dashed]{d}[swap]{u}
            \arrow[>->]{r}{m}
            & Y
            \\
            & \nicefrac{X}{R_\varphi}
          \end{tikzcd}
        \]
        Denote the mono part of $h_\varphi$ by $m$, and so $Hm$ is monic too.
        Finally:
      \begin{itemize}[leftmargin=15mm,itemsep=0mm,labelsep*=3mm,topsep=2mm]
        \item[]
          $(x_1,x_2) \in R_\varphi$
      \item[$\Longrightarrow$~]
        $x_1,x_2$ are merged by $\beta_\varphi \cdot h_\varphi = (\beta\circ
        h)_\varphi = (\hat Hh\circ \alpha)_\varphi$
      \item[$\overset{\mathclap{\text{Lem.~\ref{conditionalEquivalence}}}}{\Longrightarrow}$~]
        $x_1,x_2$ are merged by $Hh_\psi\cdot (\filter{\psi}_X)_\psi\cdot \alpha_\psi$
        for all $\psi\le \varphi$
      \item[$\overset{\mathclap{Hm\text{ monic}}}{\Longrightarrow}$~]
        $x_1,x_2$ are merged by $He\cdot (\filter{\psi}_X)_\psi\cdot \alpha_\psi$
        for all $\psi\le \varphi$
      \item[$\overset{\mathclap{\kappa_\varphi = u\cdot e}}{\Longrightarrow}$~]
        $x_1,x_2$ are merged by $H\kappa_\varphi \cdot (\filter{\psi}_X)_\psi\cdot \alpha_\psi$
        for all $\psi\le \varphi$
      \item[$\overset{}{\Longrightarrow}$~]
        $x_1,x_2$ are merged by $H\kappa_\varphi \cdot (\filter{\varphi}_X)_\varphi\cdot \alpha_\varphi$.
      \qedhere
      \end{itemize}
      \end{enumerate}
  \end{itemize}
\end{proof}

Note that both $\pow$ and $\pow(\arg\times\Phi)$ preserve monos, i.e.~monotone
maps with injective carrier, in $\pos$. Hence, Theorem~\ref{thm:conditionalMain} holds for both functors and so coalgebraic behavioural equivalence coincides with conditional bisimilarity.

\section{Computing Behavioural Equivalence}
\label{sec:computing-beheq}

In this section, we concentrate on algorithms to obtain a minimal CTS from a
given CTS up to conditional bisimilarity. Therefore, the final chain algorithm
for minimisation from \cite{ABHKMS12} is applied to the CTS functors $\Dual\pow$ and $\pow(\arg\times\Phi)$. The algorithm performs minimisation and
determinisation for coalgebras on a Kleisli category, in which the pure arrows
form a reflective subcategory:

\begin{defi}
  A subcategory $\S$ of $\A$ is called \emph{reflective}, if the inclusion
  functor $I\colon \S\hookrightarrow \A$ has a left-adjoint $R$. 
  The spelled out adjunction means: For each $X\in \A$ there is an
  $\S$-object $RX$ and an $\A$-arrow $\rho_X\colon X \rightarrow IRX$
  such that for any $\A$-arrow $f\colon X \to IY$ into some object $Y$
  of $\S$, there exists a \emph{unique} $\S$-arrow $f'\colon RX\to Y$
  (called $\rho$-reflection of $f$) such that:
  \[
    \begin{tikzcd}
      IRX
      \arrow{r}{If'}
      & IY
      \\
      X \arrow{u}{\rho_X}
      \arrow{ur}[swap]{f}
    \end{tikzcd}
  \]
  Note that for such a mapping $R\colon \obj \A\to \obj \S$ on objects, $R$ uniquely
  extends to a functor $R\colon \A\to \S$, and is called \emph{reflector}.
\end{defi}
\begin{rem}
  Here, the definition of \cite{joyofcats} is followed and thus the subcategory
  $\S\hookrightarrow \A$ is not required to be full. This is important because
  the pure arrows need to form a reflective subcategory of the Kleisli category,
  i.e., we have a reflective subcategory $\C\hookrightarrow \Kl(T)$ for a monad $T\colon\C\to
  \C$. And this subcategory is full if and only if $T$ is the identity
  monad.
\end{rem}
\noindent
For the reader monad $\_^\Phi$ on $\pos$, we have a non-full reflective subcategory:
\begin{lem}\label{thm:pos-ref}
  For a monad $(T\colon \C\to \C,\eta,\mu)$, the base category $\C \hookrightarrow
  \Kl(T)$ is a reflective subcategory iff $T$ has a left-adjoint $L\colon \C\to
  \C$. Furthermore, the
  unit $\rho_X\colon X\kleislito LX$ of the adjunction $L\dashv T$ is the universal
  arrow of the reflection.
\end{lem}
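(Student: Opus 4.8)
The plan is to reduce both the reflectivity of $\C\hookrightarrow\Kl(T)$ and the existence of a left adjoint to $T$ to the \emph{same} family of hom-set bijections, via one key computation. For a $\C$-arrow $f'\colon RX\to Y$ and a Kleisli arrow $\rho_X\colon X\kleislito RX$, the Kleisli composite with the pure arrow $If'$ simplifies, using the pure-composition rule $Ip\circ g = Tp\cdot g$, to
\[
  If'\circ \rho_X \;=\; Tf'\cdot \rho_X .
\]
Since the inclusion is the identity on objects, every object of $\Kl(T)$ already lies in $\C$ and every Kleisli arrow $f\colon X\kleislito Y$ is an arrow of $\Kl(T)$ into an object of $\C$; hence the universal property of a reflection at $X$ says precisely that $f'\mapsto If'\circ\rho_X = Tf'\cdot\rho_X$ is a bijection $\C(RX,Y)\xrightarrow{\cong}\Kl(T)(X,Y)=\C(X,TY)$, natural in $Y$. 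This is exactly the shape of the adjunction isomorphism for $L\dashv T$ with $LX := RX$ and $\rho_X$ the unit, because the transpose of $g\colon LX\to Y$ under $L\dashv T$ is $Tg\cdot\rho_X$.

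For the direction ``$T$ has a left adjoint $\Rightarrow$ reflective'', I would take $L\dashv T$ with unit $\rho\colon\Id\to TL$, set $RX := LX$ on objects, and read $\rho_X\colon X\to TLX$ as a Kleisli arrow $\rho_X\colon X\kleislito LX$. Given any Kleisli arrow $f\colon X\kleislito Y$, i.e.\ a $\C$-arrow $f\colon X\to TY$, the adjunction provides a unique $\C$-arrow $f'\colon LX\to Y$ with $Tf'\cdot\rho_X = f$; by the computation above this is the unique pure arrow with $If'\circ\rho_X = f$, which is exactly the reflection universal property. Thus $\rho_X$ is the universal arrow of the reflection and coincides with the adjunction unit.

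For the converse, I would start from a reflector $R$ and universal arrows $\rho_X\colon X\kleislito RX$, which assemble into an adjunction $R\dashv I$ with $R\colon\Kl(T)\to\C$ a functor and $\rho\colon\Id\to IR$ natural. Define $L := R\cdot I\colon\C\to\C$; on objects $L$ agrees with $R$, and $\C(LX,Y)=\C(RX,Y)\cong\Kl(T)(X,Y)=\C(X,TY)$ by the universal property, naturally in $Y$. The main obstacle I expect is the naturality in $X$: the reflection only supplies naturality over $\Kl(T)$, whereas $L\dashv T$ requires naturality over $\C$. This is resolved by restricting the $\Kl(T)$-naturality square along $I$: for a $\C$-arrow $g\colon X_2\to X_1$ the Kleisli naturality square for $Ig$ becomes the $\C$-naturality square for $g$ once one uses $Lg = R(Ig)$ together with the identity $If'\circ\rho = Tf'\cdot\rho$. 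Hence $\C(LX,Y)\cong\C(X,TY)$ naturally in both variables, i.e.\ $L\dashv T$ with $\rho$ as unit. The only genuinely delicate point is this translation between the object-wise universal property (quantified over \emph{all} Kleisli arrows, since every target object lies in $\C$) and the two-variable hom-isomorphism of $L\dashv T$; everything else is the single composition identity plus bookkeeping of which composition ($\cdot$ versus $\circ$) is in play.
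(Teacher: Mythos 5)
Your proposal is correct and follows essentially the same route as the paper: the same key identity $If'\circ\rho_X = Tf'\cdot\rho_X$ (which the paper derives from the monad law $\mu_Y\cdot T\eta_Y=\id$) turns the reflection's universal property into the adjunction's universal property and vice versa, and your converse direction with $L:=R\cdot I$ is exactly the paper's composition of the adjunctions $R\dashv I$ and $I\dashv U$, which also disposes of the naturality-in-$X$ point you flag.
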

\begin{proof}\leavevmode
  \begin{itemize}[leftmargin=9mm,itemsep=1ex,beginpenalty=99]
\item[$(\Leftarrow)$]
  For $L \dashv T$ with unit $\rho$, for all
  $\bar f\colon X\kleislito Y$, and $g\colon LX\to Y$, we have
  $Ig\circ \rho_X = Tg\cdot \rho_X$, and so
  \[
    \begin{tikzcd}
      LX
      \arrow[kleisli]{r}{Ig}
      &[4mm] Y
      \\
      X \arrow[kleisli]{u}{\rho_X}
      \arrow[kleisli]{ur}[swap]{\bar f}
    \end{tikzcd}
    \Longleftrightarrow\quad
    \begin{tikzcd}
      TLX
      \arrow{r}{Tg}
      & TY
      \\
      X \arrow{u}{\rho_X}
      \arrow{ur}[swap]{\bar f}
    \end{tikzcd}
  \]
  The first diagram is the universal property of the reflection,
  the last is that of $L\dashv T$; so the direction from right to left proves
  existence of an reflection of $\bar f$ and the direction from left to right
  proves its uniqueness.

\item[$(\Rightarrow)$] Let $R\dashv I$ and note that with the forgetful $U\colon
  \Kl(T)\to \C$ from the Kleisli adjunction $I \dashv U$, we have $T = UI$.
  Define $L := RI$; then $L\dashv T$ by following natural isomophisms
  between hom-sets:
  \begin{equation}
      \inferrule*[Right={$I \dashv U$}]{
        \inferrule*[Right={$R\dashv I$}]{
          RI\,X \longrightarrow Y
          \text{ in }\C
        }{
            I X \kleislito IY
          \text{ in }\Kl(T)
        }
      }{
          X \longrightarrow U{\!}I\,Y
        \text{ in }\C
      }
      \tag*{\qEd}
  \end{equation}
\end{itemize}
  \def\popQED{}
\end{proof}
\noindent
Note that for $R\dashv I$ and $L\dashv T$, $IR\colon\Kl(T) \to \Kl(T)$ is an
extension of $L$, because $IL = IRI$.

\begin{exa}
  \label{readerReflection}
  In case of the reader monad $\arg^\Phi$
  on $\C = \pos$ or $\C = \Set$ (resp.~$\Phi := \J(\L)$, and
  $\arg^{\J(\L)} \cong T$ the lattice monad from
  Section~\ref{sec:latmon}), consider a Kleisli arrow
  $f\colon X\kleislito Y$.  Then its reflection is the uncurried $\check{f}$:
\[
  \check{f}\colon \overbrace{X\times \Phi}^{LX}\to Y
  \quad
  \check{f}(x, \varphi) = f(x)(\varphi).
\]
The reflector $R\colon \Kl(\arg^\Phi)\to \C$ maps $f\colon X\kleislito Y$ to the pure map
\[
  Rf\colon X\times \Phi \longrightarrow Y\times \Phi,
  \quad
  Rf(x,\varphi) = (f(x)(\varphi), \varphi)
\]
which is the reflection of $\rho_Y\circ f\colon X\kleislito Y\times \Phi$.
\end{exa}



Following \cite{ABHKMS12}, a reflective subcategory with an
$(\E,\M)$-factorisation structure gives rise to a pseudo-factorisation structure
in the base category, which in turn can be used to compute behavioural
equivalence, provided the functor meets some conditions.
\begin{defi}
  Let $\E$ and $\M$ be any two classes of morphisms in a category $\S$. Then the
  tuple $(\mathcal E,\mathcal M)$ is called a \emph{factorisation structure} for
  $\S$ if
  \begin{itemize}
  \item The classes $\E$ and $\M$ are closed under composition with isomorphisms;
  \item Every arrow $f$ of $\S$ has a factorisation $f=m\cdot e$, where $m\in\M$ and $e\in\E$;
  \item \emph{Unique diagonal property}: For all arrows $f,g$, $e\in\mathcal E$,
    and $m\in\mathcal M$, if $g\cdot e=m\cdot f$, then there exists a unique
    arrow $d$ such that:
    \begin{equation}
      \begin{tikzcd}
        \bullet
        \arrow[->>]{r}{e}
        \arrow[->]{d}[swap]{f}
        & \bullet
        \arrow{d}{g}
        \arrow[dashed]{dl}[description]{\exists ! d}
        \\
        \bullet
        \arrow[>->]{r}[swap]{m}
        & \bullet
      \end{tikzcd}
      \label{diagonalisation}
    \end{equation}
  \end{itemize}
\end{defi}
\begin{rem}
  In case of $\E= $ regular epimorphisms and $\M = $ monomorphisms the
  diagonalisation property \eqref{diagonalisation} holds automatically. If $e$
  is the coequaliser of $p_1,p_2$, then $e$ merges $p_1$ and $p_2$ and so does
  $g\cdot e = m\cdot f$. Since $m$ is monic, $f\cdot p_1 = f\cdot p_2$ and the
  coequaliser $e$ induces a unique diagonal with $d\cdot e = f$ and thus also
  $m\cdot d = g$.
\end{rem}
\begin{exa}
  $\pos$ has a (RegEpi,Mono)-factorisation structure
  \cite[14.23~Examples]{joyofcats}.
  \begin{itemize}
    \item Regular epimorphisms in $\pos$ are monotone functions $e\colon X\to Y$
      where $e$ is surjective and $\le_Y$ is the smallest order on $Y$ making
      $e$ monotone. Regular epimorphisms are by definition coequalisers. In
      complete categories such as $\pos$, a regular epimorphism is the
      coequaliser of its kernel pair $\ker e \rightrightarrows X$.
    \item Monos in $\pos$ are monotone maps with an injective carrier map. In
      other words, $U\colon \pos\to\Set$ creates monos.
  \end{itemize}
\end{exa}
One transfers the factorisation structure from $\pos$ to $\Kl(\arg^\Phi)$ using
the reflection:
\begin{nota}
  From now on, the application of the inclusion functor $\S\hookrightarrow \A$
  is made implicit. For clarity, morphisms in $\S$ are denoted by
  $\longrightarrow$, and morphisms in $\A$ by $\kleislito$.
  $\S$-arrows in $\M$ are indicated by $\rightarrowtail$.
\end{nota}

\noindent
\begin{minipage}[b]{.75\textwidth}
\begin{defi}
  Consider a reflective subcategory $\S \hookrightarrow \A$ with the
  $\A$-reflection $\rho$ and with an $(\E,\M)$-factorisation structure
  on $\S$.  For an $\A$-morphism $f\colon X\kleislito Y$, take its
  reflection $f'\colon LX\to Y$ and construct its $(\E,\M)$-factorisation
  $f' = m\cdot e'$ with $e'\in \E$, $m\in \M$. Then $(m, e'\circ \rho_X)$
  is called the $(\E,\M)$-\emph{pseudo factorisation} of~$f$.
\end{defi}
\end{minipage}\hfill%
\begin{minipage}[b]{.23\textwidth}
  \hfill
  \begin{tikzcd}[row sep=2mm,column sep=3mm,baseline=(X.base)]
    & Z
    \arrow[bend left=10,>->]{dr}{m}
    \\
    |[alias=LX]|
    LX
    \arrow{rr}{f'}
    \arrow[bend left=10]{ur}[pos=0.2,inner sep=0]{e'}
    & & Y
    \\[4mm]
    |[alias=X]|
    X
    \arrow[kleisli,
    rounded corners,
    to path={
      -- ([xshift=-8mm]\tikztostart.center)
      -- ([xshift=-8mm]LX.center) \tikztonodes
      |- (\tikztotarget)
    },
    ]{uur}[]{e}
    \arrow[kleisli]{urr}[swap]{f}
    \arrow[kleisli]{u}{\rho_X}
  \end{tikzcd}
  \hspace{-3mm} 
\end{minipage}

For such pseudo-factorisations, we do not necessarily have a diagonal
arrow for $m\in \M$, $e' \in \E$, and $e=e'\circ \rho_X$ in
\eqref{diagonalisation}, but one can show that such an arrow exists
whenever $g$ is in~$\S$. And the diagonal will also be an $\S$-arrow.

\begin{rem}
  This applies to $\pos \hookrightarrow \Kl(\arg^\Phi)$. The
  pseudo-factorisation of $f\colon X\kleislito Y$ in $\klpos$ is as follows:
  \begin{itemize}
  \item The inclusion $m\colon Y_0 \hookrightarrow Y$, $Y_0=\{ f(x)(\phi) \mid x\in
    X, \phi\in \Phi\}\subseteq Y$.
  \item The function $e\colon X\to {Y_0}^\Phi$ defined as $e(x)(\phi)=f(x)(\phi)$.
  \item The relation $\leq_{Y_0^\Phi}$ is the smallest order such that $e'$ is
    order preserving.
  \end{itemize}
\end{rem}

We now recall the algorithm from  \cite{ABHKMS12} in its
entirety.
\begin{algo}
  \label{algo:minimisation}
	Let $\A$ be a category with a final object $1$ and let $\S$ be a complete and
  reflective subcategory of $\A$ that has an $(\E,\M)$-factorisation structure
  where
  \begin{itemize}
  \item all arrows in $\E$ are epimorphism and
  \item for all objects $X$ the class of $\mathcal E$ morphisms with domain $X$
    is a set.
  \end{itemize}
  Furthermore, let $\hat H$ be an
  endofunctor on $\A$ preserving $\S$ and $\M$. Then,
  given an $\hat H$-coalgebra $\alpha\colon X\kleislito \hat HX$ we can compute the
  minimisation of $\alpha$ in the following way:
	\begin{enumerate}
		\item Let $d_0\colon X\kleislito 1$ be the final morphism.
		\item
      \begin{minipage}[t]{.55\textwidth}
      Given a $d_i\colon X\kleislito Y$, pseudo-factorise $d_i=m_i\circ  e_i$, where $m_i\in \mathcal M$, $e_i=e_i'\circ\rho_X$, $e_i'\in\mathcal E$.
      \end{minipage}
      \hfill
        \begin{tikzcd}[baseline=-7pt]
          X
          \arrow[kleisli]{r}{\rho_X}
          \arrow[kleisli,shiftarr={yshift=6mm},overlay]{rr}{\smash{e_i}}
          \arrow[kleisli,shiftarr={yshift=-6mm},overlay]{rrr}[swap]{d_i}
          & LX
          \arrow{r}{e_i'}
          & Z_i
          \arrow{r}{m}
          & Y_i
        \end{tikzcd}

		\item Compute $d_{i+1}=\hat H e_i\circ\alpha$.
		\item The algorithm terminates if the diagonal $u$
                  with $e_n=u\circ e_{n+1}$ is an isomorphism in $\S$
                  and yields $e_n$ as its output.
	\end{enumerate}
\begin{center}
  \begin{tikzcd}
    &&& X
    \arrow[kleisli,bend right=15]{dlll}[swap]{e_0}
    \arrow[kleisli,bend right=10]{dll}[swap]{e_1}
    \arrow[kleisli,bend right=05]{dl}[swap]{e_2}
    \arrow[kleisli,bend left=05]{dr}{e_n}
    \arrow[kleisli,bend left=15]{drr}{e_{n+1}}
    \\[3mm]
    Z_0 \arrow[>->]{dd}[swap]{m_0}
    & Z_1 \arrow[>->]{d}[swap]{m_1}
    \arrow[dashed]{l}
    & Z_2 \arrow[>->]{d}[swap]{m_2}
    \arrow[dashed]{l}
    & \cdots
    \arrow[dashed]{l}
    & Z_n
    \arrow[-,dashed]{l}
    \arrow[>->]{d}[swap]{\hat Hm_{n}}
    & Z_{n+1}
    \arrow[dashed]{l}[swap]{u}{\cong}
    \arrow[>->]{d}{\hat Hm_{n+1}}
    \\
    & \hat H Z_0
    \arrow[>->]{d}
    & \hat H Z_1
    \arrow[>->]{d}
    & & \hat H Z_{n-1}
    \arrow[>->]{d}
    & \hat H Z_n
    \arrow[>->]{d}
    \\
    1
    & \hat H 1
    \arrow{l}[swap]{!}
    & \hat H^2 1
    \arrow{l}[swap]{\hat H!}
    & \cdots
    \arrow{l}{}
    & \hat H^n 1
    \arrow[-]{l}
    & \hat H^{n+1} 1
    \arrow{l}[swap]{\hat H^{n}!}
  \end{tikzcd}
\end{center}
\end{algo}
The dashed arrows in the diagram above are obtained by diagonalisation.

Termination is guaranteed whenever the state set $X$ is finite.
Whenever the algorithm terminates we obtain a coalgebra homomorphism
$e_n$ from $\alpha$ to
$m_{n+1}\cdot u^{-1}\colon Z_n\to \hat HZ_n$.

The Algorithm~\ref{algo:minimisation} is correct in the following sense:
\medskip

\noindent
\begin{minipage}{.7\textwidth}
\begin{thmC}[{\cite[Theorem 4.9, Theorem 3.8]{ABHKMS12}}]
  \label{algo:correct}
  \ \\
  Let $\alpha'\colon LX\to HLX$ be the reflection of
  \[
    X \overset{\alpha}{\kleislito}
    HX
    \overset{H\rho_X }{\kleislito}
    HLX,
  \]
  then the uncurrying of $e_n$, $\check{e_n}\colon LX\to Z_n$, is the greatest
  $\E$-quotient of $\alpha'$.
\end{thmC}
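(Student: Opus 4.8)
The plan is to instantiate the abstract framework of Algorithm~\ref{algo:minimisation} and invoke the cited correctness statement \cite[Theorem 4.9, Theorem 3.8]{ABHKMS12} directly; the only work is to verify its standing hypotheses in our concrete situation, where $\A := \Kl(\arg^\Phi)$ and $\S := \pos$. First I would check the structural requirements. The category $\pos$ is complete, and it is a (non-full) reflective subcategory of $\Kl(\arg^\Phi)$ by Lemma~\ref{thm:pos-ref}, since the reader monad $\arg^\Phi$ has the left adjoint $\arg\times\Phi$ (Definition~\ref{def:readermonad}); the reflection $\rho_X$ is the unit of this adjunction and the reflector sends $f$ to its uncurrying, as in Example~\ref{readerReflection}. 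The terminal object $1$ of $\pos$ is also final in $\Kl(\arg^\Phi)$, because a Kleisli arrow $X\kleislito 1$ is a map $X\to 1^\Phi\cong 1$, of which there is exactly one. For the factorisation structure I take $(\E,\M)=(\text{RegEpi},\text{Mono})$ on $\pos$; every regular epimorphism is in particular an epimorphism, and for each poset $X$ the regular-epi quotients of $X$ form a set, so both side conditions on $\E$ hold.

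It remains to verify the two conditions on the functor $\hat H$. Preservation of $\S$ is immediate, since $\hat H$ is by construction an extension of $H$ (Theorem~\ref{remExtension}) and therefore maps pure arrows to pure arrows, restricting to $H$ on $\pos$. Preservation of $\M$ reduces to $H$ preserving monos in $\pos$, i.e.\ monotone maps with injective carrier; this holds for both functors under consideration, $\pow(\arg)^A$ and $\pow(\arg\times\Phi)^A$, as recorded after Theorem~\ref{thm:conditionalMain}. With all hypotheses in place, the cited theorem applies and identifies the uncurried output $\check{e_n}\colon LX\to Z_n$ with the greatest $\E$-quotient of the reflected coalgebra $\alpha'$.

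The only genuinely delicate point, and the one I would spend most care on, is the bookkeeping around the reflection: the algorithm manipulates pseudo-factorisations in $\A=\Kl(\arg^\Phi)$, whereas the assertion concerns the honest $(\E,\M)$-factorisation of the \emph{pure} coalgebra $\alpha'\colon LX\to HLX$ in $\S=\pos$. The bridge is precisely the content of the general theorem: by definition a pseudo-factorisation of $d_i$ is obtained by reflecting $d_i$ to $\check{d_i}\colon LX\to Y$ and factorising it in $\S$, so the final-chain iteration carried out in $\A$ is, after reflection, literally the iteration that computes the greatest $\E$-quotient of $\alpha'$ in $\pos$. Since this translation and the resulting fixpoint characterisation are established in \cite{ABHKMS12} at the required level of generality, no argument beyond the hypothesis check above is needed.
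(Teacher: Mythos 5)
Your proposal is correct and matches the paper's treatment: the paper does not reprove this statement but imports it verbatim from \cite{ABHKMS12} and, in the surrounding text, verifies exactly the hypotheses you list (reflectivity of $\pos$ in $\Kl(\arg^\Phi)$ via Lemma~\ref{thm:pos-ref}, the (RegEpi,Mono)-factorisation structure on $\pos$, and preservation of $\pos$ and of $\M$ by the extended functors). Your additional remark on how the pseudo-factorisation in $\Kl(\arg^\Phi)$ reflects to the honest factorisation of $\alpha'$ in $\pos$ is the right reading of the cited result.
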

\end{minipage}
\hfill
\begin{tikzcd}
  LX
  \arrow{r}{\alpha'}
  \arrow[->>]{d}[swap]{\check{e_n}}
  &[4mm] HLX
  \arrow{d}{H\check{e_n}}
  \\
  Z_n
  \arrow[>->]{r}{m_{n+1}\cdot u^{-1}}
  & HZ_n
\end{tikzcd}
\\
\medskip

\noindent
\begin{minipage}{.7\textwidth}
\begin{rem} \label{Equotient}
  We call $e\colon (LX,\alpha')\twoheadrightarrow (Z,z)$ the greatest
  $\E$-quotient if $e\in \E$ and for any $H$-coalgebra homomorphism $q\colon
  (LX,\alpha') \twoheadrightarrow (W,w)$ with $q\in \E$, there is a unique
  homomorphism $(W,w) \to (Z,z)$.
  In $\pos$ this means that any two elements $x_1,x_2 \in LX$ are merged by $e$
  if and only if they are merged by a coalgebra homomorphism in $\E$.
\end{rem}
\end{minipage}
\hfill
\begin{tikzcd}[row sep = 10mm,baseline=-10pt]
  (LX,\alpha')
  \arrow[->>]{d}[swap]{e}
  \arrow[->>]{r}{q}
  &
  (W,w)
  \arrow[dashed]{dl}
  \\
  (Z,z)
\end{tikzcd}
\\
\medskip

\noindent
\begin{minipage}{.7\textwidth}
\begin{rem} \label{coalgebraFactor}
  If $H$ preserves $\M$, then the $(\E,\M)$-factorisation system lifts to
coalgebras, i.e.~any coalgebra homomorphism $h$ factorises into $h = m\cdot q$
where $m \in \M$ and $q\in E$ are coalgebra homomorphisms. By the
diagonalisation, the coalgebra structure on the image is defined uniquely.
So in $\pos$ for monos $\M$, $x,y \in V$ are
merged by some coalgebra homomorphism if and only if they are merged by
some $\E$-carried coalgebra homomorphism.
\end{rem}
\end{minipage}
\hfill
\begin{tikzcd}[row sep=8mm,column sep=6mm,baseline=-5pt]
  V
  \arrow{d}[swap]{v}
  \arrow[->>]{r}{q}
  \arrow[shiftarr={yshift=6mm}]{rr}{h}
  & W
  \arrow[>->]{r}{m}
  \arrow[dashed]{d}[swap]{w}
  & S
  \arrow{d}{s}
  \\
  HV
  \arrow{r}{Hq}
  \arrow[shiftarr={yshift=-6mm}]{rr}[swap]{Hh}
  & HW
  \arrow[>->]{r}{Hm}
  & HS
\end{tikzcd}
\\
\medskip

The algorithm's
output $e_n\colon X\kleislito Z_n$ characterises conditional bisimilarity (in the
general sense of Definition~\ref{def:conditional}) in
the following way: for two elements $x_1,x_2\in X$ and
$\varphi \in \Phi$ we have $x_1\sim_\varphi x_2$ if and only if $(x,\varphi)$ and $(y,\varphi)$ are
merged by the uncurried $\check{e}_n\colon LX \to Z_n$.  This
characterisation is sound and complete whenever the endofunctor
$\hat H\colon\klpos\to\klpos$ preserves the subcategory $\pos$ and the
class $\mathcal{M}$:

\begin{thm}
  Using the terminology of Algorithm~\ref{algo:minimisation} it holds
  that for $x_1,x_2\in X$ $x_1\sim_\phi x_2$ iff $\check{e}_n$ merges
  $(x_1,\phi), (x_2,\phi)$.
\end{thm}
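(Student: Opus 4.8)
The plan is to prove the theorem as a chain of equivalences bridging the two worlds: the Kleisli category $\klpos$, where conditional bisimilarity lives, and the base category $\pos$, where the algorithm actually computes. On the $\klpos$ side I would use Theorem~\ref{thm:conditionalMain}, which (since the CTS functors $\pow(\arg)^A$ and $\pow(\arg\times\Phi)^A$ preserve weak pullbacks, so that conditional congruence and conditional bisimilarity coincide) tells me that $x_1\sim_\phi x_2$ holds iff there is some $\hat H$-coalgebra homomorphism $h$ out of $(X,\alpha)$ with $h_\phi(x_1)=h_\phi(x_2)$. On the $\pos$ side I would use Theorem~\ref{algo:correct}: the uncurried output $\check{e}_n\colon LX\to Z_n$ is the greatest $\E$-quotient of the reflected coalgebra $(LX,\alpha')$. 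Together with Remark~\ref{Equotient} and Remark~\ref{coalgebraFactor} this yields that $\check{e}_n$ merges $(x_1,\phi),(x_2,\phi)$ iff these two points are merged by some $H$-coalgebra homomorphism out of $(LX,\alpha')$ in $\pos$. Thus the theorem reduces to matching ``$\hat H$-coalgebra homomorphism on $(X,\alpha)$ identifying $x_1,x_2$ at $\phi$'' with ``$H$-coalgebra homomorphism on $(LX,\alpha')$ merging $(x_1,\phi),(x_2,\phi)$''.

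For the direction ``$\check{e}_n$ merges $\Rightarrow x_1\sim_\phi x_2$'', I would simply take $h:=e_n$: upon termination the algorithm delivers $e_n$ as a genuine $\hat H$-coalgebra homomorphism from $\alpha$ to $m_{n+1}\cdot u^{-1}$, and by construction $\check{e}_n(x,\phi)=e_n(x)(\phi)=(e_n)_\phi(x)$. Hence if $\check{e}_n$ merges $(x_1,\phi)$ and $(x_2,\phi)$ then $(e_n)_\phi(x_1)=(e_n)_\phi(x_2)$, and Theorem~\ref{thm:conditionalMain} immediately gives $x_1\sim_\phi x_2$.

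For the converse, starting from $x_1\sim_\phi x_2$, Theorem~\ref{thm:conditionalMain} supplies a $\hat H$-coalgebra homomorphism $h\colon(X,\alpha)\kleislito(Y,\beta)$ with $h_\phi(x_1)=h_\phi(x_2)$. I would transport $h$ to the base category by the reflector $R\colon\klpos\to\pos$ of Lemma~\ref{thm:pos-ref}, whose action is $Rh(x,\phi)=(h_\phi(x),\phi)$ (Example~\ref{readerReflection}). The point is that $Rh$ is an $H$-coalgebra homomorphism from $(LX,\alpha')$ to $(LY,\beta')$, where $\alpha',\beta'$ are the reflected structures of Theorem~\ref{algo:correct}; since $Rh(x_1,\phi)=(h_\phi(x_1),\phi)=(h_\phi(x_2),\phi)=Rh(x_2,\phi)$, it merges the pair. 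Remark~\ref{coalgebraFactor} then lets me replace $Rh$ by an $\E$-carried homomorphism merging the same pair, and Remark~\ref{Equotient} shows that the greatest $\E$-quotient $\check{e}_n$ must merge it as well.

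The main obstacle is the lifting of the reflection to the coalgebra level that I used in the converse direction: verifying that applying $R$ to a $\hat H$-coalgebra homomorphism in $\klpos$ yields an $H$-coalgebra homomorphism in $\pos$ between the reflected coalgebras, coherently with the definition of $\alpha'$ as the reflection of $H\rho_X\circ\alpha$. This is a diagram chase that rests on the adjunction $L\dashv \arg^\Phi$, the naturality of its unit $\rho$, and the fact (noted after Lemma~\ref{thm:pos-ref}) that $IR$ is the Kleisli extension of $L$; it is exactly the mechanism by which the algorithm of \cite{ABHKMS12} reduces Kleisli behavioural equivalence to behavioural equivalence of the determinised coalgebra $(LX,\alpha')$, so I would either carry out this chase or invoke the correspondence underlying Theorem~\ref{algo:correct}. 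A secondary point to record explicitly is that $\sim_\phi$ denotes conditional bisimilarity, which coincides with the conditional congruence of Definition~\ref{def:conditional} appearing in Theorem~\ref{thm:conditionalMain} precisely because the relevant functors preserve weak pullbacks.
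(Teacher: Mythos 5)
Your proposal is correct and follows essentially the same route as the paper: the backward direction takes $h:=e_n$ and applies Theorem~\ref{thm:conditionalMain}, while the forward direction obtains $h$ from Theorem~\ref{thm:conditionalMain}, pushes it through the reflector $R$ (the paper justifies that $Rh$ is an $H$-coalgebra homomorphism by citing \cite[Prop.~4.4]{ABHKMS12}, which is exactly the lifting of the reflection to coalgebras that you flag as the remaining obstacle), and concludes via Remarks~\ref{coalgebraFactor} and~\ref{Equotient}. Your explicit remark that conditional congruence and conditional bisimilarity coincide because the functors preserve weak pullbacks is a point the paper leaves implicit but is consistent with its earlier discussion.
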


\begin{proof}\leavevmode
  \begin{itemize}[leftmargin=9mm,itemsep=1ex]
\item[$(\Leftarrow)$]
  By Theorem~\ref{thm:conditionalMain} we know that $x_1\sim_\phi x_2$
  iff there exists a coalgebra homomorphism $h\colon X\kleislito Y$ with
  $h(x_1)(\phi) = h(x_2)(\phi)$.
  Hence if $\check{e}_n$ merges $(x_1,\phi), (x_2,\phi)$, we can infer
  that $e_n(x_1)(\phi) = e_n(x_2)(\phi)$ and since $e_n$ is a
  coalgebra homomorphism we have $x_1\sim_\phi x_2$.

\item[$(\Rightarrow)$] By Theorem~\ref{thm:conditionalMain}, $x_1\sim_\phi x_2$
  implies the existence of some $h\colon (X,\alpha) \kleislito (Y,\beta)$ with
  $h(x_1)(\varphi) = h(x_2)(\varphi)$. Recall from \cite[Prop.~4.4]{ABHKMS12}
  that since $\hat H$ is an extension and $\Kl(\arg^\Phi)\hookrightarrow \pos$ is
  a reflective subcategory, the category of $H$-coalgebras is a reflective
  subcategory of the $\hat H$-coalgebras. So applying the reflector $R\colon
  \Kl(\arg^T)\to \pos$ to the square of the $\hat H$-coalgebra homomorphism $h$ results in an $H$-coalgebra homomorphism:
  \[
  \begin{tikzcd}
    LX \arrow{r}{\alpha'}
    \arrow{d}[swap]{Rh}
    & HLX
    \arrow{d}{HRh}
    \\
    LY \arrow{r}{\beta'}
    & HLY
  \end{tikzcd}
  \qquad\text{in }\pos
  \]
  $R$ acts on objects as $L$ and we have $Rh(x_i,\varphi) =
  (h(x_i)(\varphi),\varphi)$, for both $i\in \{1,2\}$
  (cf.~Example~\ref{readerReflection}), and so $Rh(x_1,\varphi) =
  Rh(x_2,\varphi)$. By Remark~\ref{coalgebraFactor} and \ref{Equotient}, the
  greatest $\E$-quotient $\check{e}_n\colon LX\rightarrow Z_n$ merges $(x_1,\varphi)$
  and $(x_2,\varphi)$.
  \qedhere
  \end{itemize}
\end{proof}
Recall from Theorem~\ref{remExtension}, that the functors $\widehat{\Dual\pow}$ and
$\widehat{\pow(\_\times \Phi)}$ (cf. Remark~\ref{exaCTSCoalgebra}) preserve the
subcategory $\pos$.
Furthermore, they preserve $\mathcal M$, i.e., the class of (pure) order preserving
injections, because the underlying endofunctors
$\Dual$, $\pow$ and $\arg\times\Phi$ do.

Thus, the algorithm from \cite{ABHKMS12} is applicable using the derived pseudo-factorisation structure.
We now discuss a small example for the application of the minimisation algorithm from \cite{ABHKMS12} using this pseudo-factorisation structure on $\mathsf{Kl}(\_^\Phi)$ for $\pow(\arg\times\Phi)$.
\begin{exa}
	Let $X=\{x,y,z,x',y',z'\}$, $|A|=1$ and $\Phi=\{\phi',\phi\}$, with $\phi'\leq_\Phi\phi$. Let $\alpha\colon X\to \widehat{\mathcal V}X$ (note that $\mathcal V=\mathcal P(\_\times\Phi)^A$) be the coalgebra modelling the CTS depicted below.

\begin{center}
  \raisebox{2cm}{$\alpha$:} \quad
  \begin{tikzpicture}[x={(1.2cm,-.7cm)},y={(0,2cm)},z={(1.8cm,.7cm)},every state/.style={draw,circle,minimum size=1.5em,inner sep=1}]
\node[state] (q1) {$x$} ;
\node[right=of q1] (anker) {} ;
\node[state,above= of anker] (q3) {$z$} ;
\node[state,below= of anker] (q2) {$y$} ;
\node[state,right= of anker] (q4) {$x'$} ;
\node[right=of q4] (anker2) {} ;
\node[state,above= of anker2] (q6) {$z'$} ;
\node[state,below= of anker2] (q5) {$y'$} ;

\begin{scope}[->]
\draw[bend right] (q1) edge node [left]{$\phi,\phi'$} (q2) ;
\draw (q1) edge node [left]{$\phi,\phi'$} (q3) ;
\draw[bend right] (q2) edge node [right] {$\phi'$} (q1) ;
\draw[bend right] (q4) edge node [left]{$\phi'$} (q5) ;
\draw (q4) edge node [left]{$\phi,\phi'$} (q6) ;
\draw[bend right] (q5) edge node [right] {$\phi'$} (q4) ;
\end{scope}
\end{tikzpicture}
\end{center}
To compute behavioural equivalence, we start by taking the unique
morphism $d_0\colon X\rightarrow1$ into the final object of
$\mathsf{Kl}(\_^\Phi)$ that is $1=\{\bullet\}$. At the
$i$\textsuperscript{th} iteration, we obtain $e_i$ via the
pseudo-factorisation of $d_i=m_i\circ e_i$ and then we build
$d_{i+1}=\widehat{\mathcal V}e_i\circ\alpha$. These iterations are shown in the following
tables. Note that each table represents both, $d_i$ and $e_i$, because
the pseudo-factorisation just yields simple injections as
monomorphisms, so $d_i$ and $e_i$ in each step only differ by their
codomain.
	
\begin{center}
	\begin{tabular}{|c|cccccc|}
		\hline
		$d_0,e_0$&$x$&$y$&$z$&$x'$&$y'$&$z'$\\\hline
		$\phi$&$\bullet$&$\bullet$&$\bullet$&$\bullet$&$\bullet$&$\bullet$\\\hline
		$\phi'$&$\bullet$&$\bullet$&$\bullet$&$\bullet$&$\bullet$&$\bullet$\\\hline
	\end{tabular}
\end{center}
\begin{center}
	\begin{tabular}{|c|cccccc|}
		\hline
		$d_1,e_1$&$x$&$y$&$z$&$x'$&$y'$&$z'$\\\hline
		$\phi$&$ \{(\bullet,\phi),(\bullet,\phi')\}$&$\{(\bullet,\phi')\}$&$\emptyset$&$\{(\bullet,\phi),(\bullet,\phi')\}$&$\{(\bullet,\phi')\}$&$\emptyset$\\\hline
		$\phi'$&$\{(\bullet,\phi')\}$&$\{(\bullet,\phi')\}$&$\emptyset$&$\{(\bullet,\phi')\}$&$\{(\bullet,\phi')\}$&$\emptyset$\\\hline
	\end{tabular}
\end{center}
\begin{center}
  \def\arraystretch{1.2}
	\begin{tabular}{|c|cccccc|}
		\hline
		$d_2,e_2$&$x$&$y$&$z$&$x'$&$y'$&$z'$\\\hline
		$\phi$&\cellcolor{blue}\color{white}{\raisebox{.5pt}{\textcircled{\raisebox{-.9pt} {5}}}}&\cellcolor{ctsgreen}\color{white}{\raisebox{.5pt}{\textcircled{\raisebox{-.9pt} {3}}}}&\cellcolor{orange}\color{white}{\raisebox{.5pt}{\textcircled{\raisebox{-.9pt} {4}}}}&\cellcolor{black}\color{white}{\raisebox{.5pt}{\textcircled{\raisebox{-.9pt} {1}}}}&\cellcolor{ctsgreen}\color{white}{\raisebox{.5pt}{\textcircled{\raisebox{-.9pt} {3}}}}&\cellcolor{orange}\color{white}{\raisebox{.5pt}{\textcircled{\raisebox{-.9pt} {4}}}}\\\hline
		$\phi'$&\cellcolor{red}\color{white}{\raisebox{.5pt}{\textcircled{\raisebox{-.9pt} {2}}}}&\cellcolor{ctsgreen}\color{white}{\raisebox{.5pt}{\textcircled{\raisebox{-.9pt} {3}}}}&\cellcolor{orange}\color{white}{\raisebox{.5pt}{\textcircled{\raisebox{-.9pt} {4}}}}&\cellcolor{red}\color{white}{\raisebox{.5pt}{\textcircled{\raisebox{-.9pt} {2}}}}&\cellcolor{ctsgreen}\color{white}{\raisebox{.5pt}{\textcircled{\raisebox{-.9pt} {3}}}}&\cellcolor{orange}\color{white}{\raisebox{.5pt}{\textcircled{\raisebox{-.9pt} {4}}}}\\\hline
	\end{tabular}
\hspace{2cm}
	\begin{tabular}{|c|cccccc|}
		\hline
		$d_3,e_3$&$x$&$y$&$z$&$x'$&$y'$&$z'$\\\hline
		$\phi$&\cellcolor{blue}\color{white}{\raisebox{.5pt}{\textcircled{\raisebox{-.9pt} {5}}}}&\cellcolor{ctsgreen}\color{white}{\raisebox{.5pt}{\textcircled{\raisebox{-.9pt} {3}}}}&\cellcolor{orange}\color{white}{\raisebox{.5pt}{\textcircled{\raisebox{-.9pt} {4}}}}&\cellcolor{black}\color{white}{\raisebox{.5pt}{\textcircled{\raisebox{-.9pt} {1}}}}&\cellcolor{ctsgreen}\color{white}{\raisebox{.5pt}{\textcircled{\raisebox{-.9pt} {3}}}}&\cellcolor{orange}\color{white}{\raisebox{.5pt}{\textcircled{\raisebox{-.9pt} {4}}}}\\\hline
		$\phi'$&\cellcolor{red}\color{white}{\raisebox{.5pt}{\textcircled{\raisebox{-.9pt} {2}}}}&\cellcolor{ctsgreen}\color{white}{\raisebox{.5pt}{\textcircled{\raisebox{-.9pt} {3}}}}&\cellcolor{orange}\color{white}{\raisebox{.5pt}{\textcircled{\raisebox{-.9pt} {4}}}}&\cellcolor{red}\color{white}{\raisebox{.5pt}{\textcircled{\raisebox{-.9pt} {2}}}}&\cellcolor{ctsgreen}\color{white}{\raisebox{.5pt}{\textcircled{\raisebox{-.9pt} {3}}}}&\cellcolor{orange}\color{white}{\raisebox{.5pt}{\textcircled{\raisebox{-.9pt} {4}}}}\\\hline
	\end{tabular}
\end{center}

In the tables for $d_2 / e_2$ and $d_3 / e_3$ we have used colours to code the entries, because the full notation for the entries would be too large to fit in the tables.

The codomains $C_0, C_1, C_2$, and $C_3$ of $e_0, e_1, e_2$, and $e_3$ (resp.)
are given below (note that the colours in $C_2$ and $C_3$ indicate the colours in the tables above):
{\allowdisplaybreaks
\begin{equation*}\begin{aligned}
C_0=&\{\bullet\}\\
C_1=&\{\emptyset, \{(\bullet,\phi')\}, \{(\bullet, \phi),(\bullet,\phi')\}\}\\
C_2=&\{\underbracket{\textcolor{black}{\{(\emptyset,\phi),(\emptyset,\phi'),(\{(\bullet,\phi')\},\phi')\}}}_{\raisebox{.5pt}{\textcircled{\raisebox{-.9pt} {1}}}},\underbracket{\textcolor{red}{\{\{(\emptyset,\phi'),(\{(\bullet,\phi')\},\phi')\}\}}}_{\raisebox{.5pt}{\textcircled{\raisebox{-.9pt} {2}}}},\underbracket{\textcolor{ctsgreen}{\{(\{(\bullet,\phi')\},\phi')\}}}_{\raisebox{.5pt}{\textcircled{\raisebox{-.9pt} {3}}}},\underbracket{\textcolor{orange}{\emptyset}}_{\raisebox{.5pt}{\textcircled{\raisebox{-.9pt} {4}}}},\\&
\underbracket{\textcolor{blue}{\{(\{(\bullet,\phi')\},\phi),(\{(\bullet,\phi')\},\phi'),(\emptyset,\phi),(\emptyset,\phi')\}}}_{\raisebox{.5pt}{\textcircled{\raisebox{-.9pt} {5}}}}\}\\
C_3=&\{\underbracket{\textcolor{black}{\{\{(\emptyset,\phi),(\{(\{(\bullet,\phi')\},\phi')\},\phi'),(\emptyset,\phi')\}\}}}_{\raisebox{.5pt}{\textcircled{\raisebox{-.9pt} {1}}}},\underbracket{\textcolor{red}{\{\{(\emptyset,\phi'),(\{(\{(\bullet,\phi')\},\phi')\},\phi')\}\}}}_{\raisebox{.5pt}{\textcircled{\raisebox{-.9pt} {2}}}},\\
&\underbracket{\textcolor{ctsgreen}{\{(\{(\{(\bullet,\phi')\},\phi'),(\emptyset,\phi')\},\phi')\}}}_{\raisebox{.5pt}{\textcircled{\raisebox{-.9pt} {3}}}},\underbracket{\textcolor{orange}{\emptyset}}_{\raisebox{.5pt}{\textcircled{\raisebox{-.9pt} {4}}}}\\
&\underbracket{\textcolor{blue}{\{(\{(\{(\bullet,\phi')\},\phi')\},\phi),(\{(\{(\bullet,\phi')\},\phi')\},\phi'),(\emptyset,\phi),(\emptyset,\phi')\}}}_{\raisebox{.5pt}{\textcircled{\raisebox{-.9pt} {5}}}}\}
\end{aligned}\end{equation*}
}
each ordered by inclusion. By contrast, the codomain $C'_i$ of $d_i$ is defined as $C'_0=C_0$, $C'_i=\mathcal P(C_i\times\Phi)$ for $i=1,2,3$.

By comparing the columns for each state we can determine which states are bisimilar. The partitions are divided as follows (where $X_i$ denote the entries at the $i$\textsuperscript{th} iteration):
\begin{align*}
  X_0= \{\{x,y,z,& x',y',z'\}\} \quad X_1 =\{\{x,x'\},\{y,y'\},\{z,z'\}\} \\ &X_2=X_3=\{\{x\},\{x'\},\{y,y'\},\{z,z'\}\}.
\end{align*}
To obtain the greatest conditional bisimulation from $e_2$ (or $e_3$), we need to compare individual entries of each table. We can identify the greatest bisimulation as $\{R_\phi,R_{\phi'}\}$, where (written as equivalence classes) $$R_\phi=\{\{z,z'\},\{x\},\{x'\},\{y,y'\}\}\quad R_{\phi'}=\{\{x,x'\},\{y,y'\},\{z,z'\}\}.$$

Additionally, it is possible to derive the minimal coalgebra that was
identified using the minimisation algorithm, which is of the form
$(E_2, m_3\circ\iota)$ where $\iota\colon E_2\rightarrow \widehat{\mathcal V}(E_2)$ is the
arrow witnessing termination of the algorithm. The minimisation has
the following form:

\tikzset{elliptic state/.style={draw,ellipse}}
\begin{center}
\scalebox{0.8}{\begin{tikzpicture}[x={(1.2cm,-.7cm)},y={(0,2cm)},z={(1.8cm,.7cm)},every state/.style={draw,circle,minimum size=1.5em,inner sep=1}]
\node[elliptic state, ] (q1) {$\textcolor{blue}{\{(\{(\bullet,\phi')\},\phi),(\{(\bullet,\phi')\},\phi'),(\emptyset,\phi),(\emptyset,\phi')\}}$} ;
\node[below=of q1] (anker) {} ;
\node[elliptic state,left= of anker] (q3) {$\textcolor{ctsgreen}{\{(\{(\bullet,\phi')\},\phi')\}}$} ;
\node[elliptic state,right= of anker] (q2) {$\textcolor{black}{\{(\emptyset,\phi),(\emptyset,\phi'),(\{(\bullet,\phi')\},\phi')\}}$} ;
\node[elliptic state,below= of anker] (q4) {$\textcolor{red}{\{\{(\emptyset,\phi'),(\{(\bullet,\phi')\},\phi')\}\}}$} ;
\node[elliptic state,below= of q4] (q5) {$\textcolor{orange}{\emptyset}$} ;

\begin{scope}[->]
\draw(q1) edge node [left]{$\phi,\phi'$\ \ \ \ } (q3) ;
\draw[bend angle=90, bend right] (q1) edge node [left]{$\phi,\phi'$\ \ \ } (q5) ;
\draw[bend angle=10, bend right](q2) edge node [above] {$\phi'$} (q3) ;
\draw[bend left](q2) edge node [right] {$\phi,\phi'$} (q5) ;
\draw[bend left] (q3) edge node [left]{$\phi'$\ \ \ } (q4) ;
\draw[bend angle=10, bend left] (q4) edge node [left]{$\phi'$\ \ \ } (q3) ;
\draw (q4) edge node [right] {$\phi'$} (q5) ;
\end{scope}
\end{tikzpicture}}
\end{center}
Note that, if there was no order on $\Phi$, $x$ and $x'$ would be
found equivalent under $\phi$, because without upgrading, $x$ and $x'$ behave the
same for $\phi$: Both can do exactly one step, reaching either of $y, z$ or $z'$, respectively,  but in none of these states any additional steps are possible in the condition $\phi$.

One can observe that both $x$ and $x'$ get mapped under $\phi'$ to the red state (second from bottom of the diagram), but under $\phi$, the state $x$ gets mapped to the blue state (top state in the diagram), whereas $x'$ gets mapped to the black state (right-most state in the diagram).

\end{exa}

\begin{rem}
  In \cite{CTS:Tase2017} we have also given a matrix multiplication
  algorithm for minimising CTSs. This algorithm is similar to applying
  Algorithm~\ref{algo:minimisation}, but working conceptually in
  $\mathsf{Kl}(T)$ rather than $\mathsf{Kl}(\_^\Phi)$.

  Since we have seen that both categories are isomorphic, the coalgebraic representation of a LaTS can be determined by applying the given isomorphism to the representation of a CTS. We obtain the following arrow for any state $x\in X$, action $a\in A$ and set of pairs $Y\subseteq (X\times\Phi)^A$:
	$$f(x)(Y)(a)=\{\psi\in\Phi\mid\forall \phi'\leq\psi,x'\in X: x\xrightarrow{a,\phi'}x'\Rightarrow(x',\phi')\subseteq Y(a)\}.$$
	
	Similarly, we can also characterise \hbox{(pseudo-)}factorisation in $\mathsf{Kl}(T)$. We could
  factorise an arrow by converting it to a
  $\mathsf{Kl}(\_^\Phi)$-arrow and factorising that arrow, then
  translating it back to $\mathsf{Kl}(T)$. Since we have already seen
  that factorising in $\mathsf{Kl}(\_^\Phi)$ basically means to
  exclude all states from the codomain of the arrow that are not in
  the image of any pair of states and alphabet symbol, this boils down
  to finding out when a state in a $\mathsf{Kl}(T)$-arrow will be
  identified as redundant in $\mathsf{Kl}(\_^\Phi)$. So let
  $f\colon X \kleislito Y$ be a $\mathsf{Kl}(T)$-arrow, then
  $f(x) = b\in (Y\rightarrow\mathbb L)^*$. An element $y\in Y$ will
  occur in the image of $f(x)$ if there is an irreducible element
  $\phi\in \mathcal J(\mathbb L) = \Phi$ such that $y$ is the smallest
  element of $Y$ with $b(y)\geq \phi$. This is the case if
  $\bigsqcup\{b(y')\mid y' < y\}\neq b(y)$. So, by factorising an
  arrow in $\mathsf{Kl}(T)$ we eliminate all states $y$ such that
  $\bigsqcup\{f(x)(y')\mid y'<y\}=b(y)$ for all $x\in X$. Hence,
  $f = e\circ m$ where $e\colon X \kleislito Y'$ and
  $Y' \subseteq Y$ is the subset of $Y$ that remains after the
  elimination.

  This enables us to execute the algorithm. The relation to the matrix
  multiplication method is discussed in more detail in
  \cite{k:behaviour-weights-conditions}. In particular it can be shown
  that both variants terminate after the same number of iterations.
\end{rem}

\section{Conclusion, Related and Future Work}
\label{sec:conclusion}

In retrospect, the Kleisli categories for the lattice monad and the
reader monad are equivalent, providing an analogue to the Birkhoff
duality between lattices and partially ordered sets. This duality also
reflects the duality between a CTS and a LaTS. We investigated two
different functors which can be used to model CTSs without upgrades
and general CTSs, respectively, in such a way that behavioural
equivalence is conditional bisimulation. Though CTSs without upgrades
can be modelled using just $\pow(\arg)^A$, this functor can not be
employed for non-discrete orders, i.e., in the case where upgrades are
present. When considering upgrades, the individual versions cannot be
considered purely a side effect and must instead be observed, which
leads to the requirement of making the versions explicit in a way and
to our choice of the functor $\pow(\arg\times\Phi)^A$.

The Kleisli category for the reader monad has a pseudo-factorisation
structure that makes it possible to use a result from \cite{ABHKMS12}
to compute the greatest conditional bisimulation using a final
chain-based algorithm for both functors.


Our work obviously stands in the tradition of the work in
\cite{ABHKMS12} and \cite{KK14}. In a broader sense, the modelling
technique of using Kleisli categories to obtain the ``right'' notion
of behavioural equivalence goes back to previous work in
\cite{hjs:generic-trace-coinduction,PowerTuri99}, where
non-deterministic branching of NFA was masked by the use of a Kleisli
category (over $\mathsf{Set}$ in this case) to obtain language
equivalence as behavioural equivalence rather than bisimulation.

Modelling new types of systems and their behaviour coalgebraically is an
ongoing field of research, as evident by recent work for instance by
Bonchi et al. on decorated traces \cite{bonchi2016}, Hermanns et
al.~on probabilistic bisimulation
\cite{DBLP:journals/corr/HermannsKK14} or Latella et al.~on labelled
state-to-function transition systems
\cite{DBLP:journals/corr/LatellaMV15a}.

System models that can handle various software products derived from a
common base are of particular interest in the field of software
product lines. Featured transition systems (FTSs) are conceptually the
closest to CTSs and can in fact be simulated by CTSs in a rather
straightforward way. A featured transition system is defined as a
labelled transition system where each transition is guarded by a
feature from a common set of features. A given FTS evolves at follows:
first, a set of features (which corresponds to a condition in a CTS)
is chosen and transitions are activated or deactivated accordingly,
then, the FTS evolves just like a labelled transition system.  By
choosing for the set of conditions the powerset of all features,
ordered discretely, one can simulate FTSs via CTSs (cf.
\cite{CTS:Tase2017}). 
Due to the upgrading aspect of CTSs, the same does not hold the other
way around. Similar systems to CTSs have been studied for instance by
Cordy et al. \cite{DBLP:conf/icse/CordyCPSHL12} and Kupferman
\cite{doi:10.1142/S0129054110007192}. FTSs in particular have been an
active field of study in the past years, with various similar, yet not
identical definitions being conceived in various lines of
work. Classen et al. \cite{Classen:2010:MCL:1806799.1806850}, as well
as Atlee et al. \cite{Atlee:2015:MBI:2820126.2820133} and Cordy et
al. \cite{Cordy2013:adaptivefts} have worked, among many others, on
FTSs and the accompanying feature diagrams.

In the future, we want to characterise conditional bisimulation via
operational semantics and an appropiate logic. Furthermore, we are
interested in analysing different properties of CTSs rather than
bisimulation, in particular we are interested in a notion of weak
bisimilarity. For this purpose, we will consider adapting a path-based
approach similar to the one present in \cite{BK17} to the Kleisli
category of the reader monad.

In this paper, we have already taken steps to adapt the notion of
conditional bisimilarity to a coalgebraic setting, making it
independent of the concrete model and functor under investigation. We
plan to investigate whether the notion of conditions (or software products) can
be introduced for various state-based system models, for instance for
probabilistic systems. That is, we are interested in combining the
(sub)distribution functor with our monads, in order to coalgebraically
model and analyse families of probabilistic systems in a unified
way. From the point of view of software product lines, this could be
an entry point to a quantitative analysis of software product lines,
rather than a purely qualitative one.

\subsection*{Acknowledgements} The authors thank Stefan Milius for
fruitful discussions.

\bibliographystyle{alpha}
\bibliography{cts-lmcs}

\newcommand{\etalchar}[1]{$^{#1}$}
\begin{thebibliography}{ABH{\etalchar{+}}12}

\bibitem[ABH{\etalchar{+}}12]{ABHKMS12}
J.~Ad\'amek, F.~Bonchi, M.~H\"{u}lsbusch, B.~K\"{o}nig, S.~Milius, and
  A.~Silva.
\newblock A coalgebraic perspective on minimization and determinization.
\newblock In {\em Proc. of FOSSACS '12}, pages 58--73. Springer, 2012.
\newblock {LNCS/ARCoSS} 7213.

\bibitem[AFL15]{Atlee:2015:MBI:2820126.2820133}
J.~M. Atlee, U.~Fahrenberg, and A.~Legay.
\newblock Measuring behaviour interactions between product-line features.
\newblock In {\em Proc. of Formalise '15}, pages 20--25, Piscataway, NJ, USA,
  2015. IEEE Press.

\bibitem[AHS90]{joyofcats}
J.~Ad\'{a}mek, H.~Herrlich, and G.E. Strecker.
\newblock {\em Abstract and Concrete Categories -- The Joy of Cats}.
\newblock Wiley, 1990.

\bibitem[AK95]{ak:fixed-point-set-functor}
Ji\v{r}\'{\i} Ad\'{a}mek and V\'{a}clav Koubek.
\newblock On the greatest fixed point of a set functor.
\newblock {\em Theoretical Computer Science}, 150:57--75, 1995.

\bibitem[BBC{\etalchar{+}}16]{bonchi2016}
F.~Bonchi, M.~Bonsangue, G.~Caltais, J.J.M.M. Rutten, and A.~Silva.
\newblock A coalgebraic view on decorated traces.
\newblock {\em Mathematical Structures in Computer Science}, 26(7):1234--1268,
  2016.

\bibitem[Bir37]{b:rings-of-sets}
G.~Birkhoff.
\newblock Rings of sets.
\newblock {\em Duke Mathematical Journal}, 3(3):443--454, 1937.

\bibitem[BK17]{BK17}
H.~Beohar and S.~K\"{u}pper.
\newblock On path-based coalgebras and weak notions of bisimulation.
\newblock In {\em Proc. of CALCO '17}, 2017.
\newblock to appear.

\bibitem[BKKS17]{CTS:Tase2017}
H.~Beohar, B.~K{\"{o}}nig, S.~K{\"{u}}pper, and A.~Silva.
\newblock Conditional transition systems with upgrades.
\newblock {\em 11th International Symposium on Theoretical Aspects of Software
  Engineering ({TASE} 2017)}, 2017.
\newblock Full version available at \url{https://arxiv.org/abs/1706.02526}.

\bibitem[CCH{\etalchar{+}}13]{Cordy2013:adaptivefts}
M.~Cordy, A.~Classen, P.~Heymans, A.~Legay, and P.-Y. Schobbens.
\newblock Model checking adaptive software with featured transition systems.
\newblock In {\em Assurances for Self-Adaptive Systems}, volume 7740 of {\em
  LNCS}, pages 1--29. Springer, 2013.

\bibitem[CCP{\etalchar{+}}12]{DBLP:conf/icse/CordyCPSHL12}
M.~Cordy, A.~Classen, G.~Perrouin, P.{-}Y. Schobbens, P.~Heymans, and A.~Legay.
\newblock Simulation-based abstractions for software product-line model
  checking.
\newblock In {\em 34th International Conference on Software Engineering, {ICSE}
  2012, June 2-9, 2012, Zurich, Switzerland}, pages 672--682, 2012.

\bibitem[CCS{\etalchar{+}}13]{Classen:2013:FTS}
A.~Classen, M.~Cordy, P.-Y. Schobbens, P.~Heymans, A.~Legay, and J.-F. Raskin.
\newblock Featured transition systems: Foundations for verifying
  variability-intensive systems and their application to {LTL} model checking.
\newblock {\em IEEE Trans. Softw. Eng.}, 39(8):1069--1089, August 2013.

\bibitem[CHS{\etalchar{+}}10]{Classen:2010:MCL:1806799.1806850}
A.~Classen, P.~Heymans, P.-Y. Schobbens, A.~Legay, and J.-F. Raskin.
\newblock Model checking lots of systems: Efficient verification of temporal
  properties in software product lines.
\newblock In {\em Proc. of ICSE'10}, pages 335--344, NY, USA, 2010. ACM.

\bibitem[CN01]{2001:SPL:501065}
P.~Clements and L.~M. Northrop.
\newblock {\em Software Product Lines: Practices and Patterns}.
\newblock Addison-Wesley Longman Publishing Co., Inc., Boston, MA, USA, 2001.

\bibitem[DP02]{dp:lattices-order}
B.~A. Davey and H.~A. Priestley.
\newblock {\em Introduction to lattices and order}.
\newblock Cambridge University Press, 2002.

\bibitem[HJS07]{hjs:generic-trace-coinduction}
I.~Hasuo, B.~Jacobs, and A.~Sokolova.
\newblock Generic trace semantics via coinduction.
\newblock {\em {Logical Methods in Computer Science}}, {Volume 3, Issue 4},
  November 2007.

\bibitem[HKK14]{DBLP:journals/corr/HermannsKK14}
H.~Hermanns, J.~Krc{\'{a}}l, and J.~Kret{\'{\i}}nsk{\'{y}}.
\newblock Probabilistic bisimulation: Naturally on distributions.
\newblock {\em CoRR}, abs/1404.5084, 2014.

\bibitem[KK14]{KK14}
B.~K{\"o}nig and S.~K{\"u}pper.
\newblock Generic partition refinement algorithms for coalgebras and an
  instantiation to weighted automata.
\newblock In {\em Proc. of TCS '14}, IFIP AICT, pages 311--325. Springer, 2014.
\newblock {LNCS} 8705.

\bibitem[KL10]{doi:10.1142/S0129054110007192}
O.~Kupferman and Y.~Lustig.
\newblock Latticed simulation relations and games.
\newblock {\em International Journal of Foundations of Computer Science},
  21(02):167--189, 2010.

\bibitem[K{\"u}p17]{k:behaviour-weights-conditions}
S.~K{\"u}pper.
\newblock {\em Behavioural Analysis of Systems with Weights and Conditions --
  Coalgebraic and Algorithmic Perspectives on Behavioural Analysis}.
\newblock PhD thesis, Universit{\"a}t Duisburg-Essen, 2017.

\bibitem[LMdV15]{DBLP:journals/corr/LatellaMV15a}
D.~Latella, M.~Massink, and E.P. de~Vink.
\newblock Bisimulation of labelled state-to-function transition systems
  coalgebraically.
\newblock {\em Logical Methods in Computer Science}, 11(4), 2015.

\bibitem[MLM92]{MM:sheafbook}
S.~Mac~Lane and I.~Moerdijk.
\newblock {\em Sheaves in geometry and logic: a first introduction to topos
  theory}.
\newblock Universitext. Springer, 1992.

\bibitem[Mog89]{moggi1989}
E.~Moggi.
\newblock An abstract view of programming languages.
\newblock {\em Edinburgh Univ., Dept. of Comp. Sci (1989) Lecture Notes for
  course CS 359, Stanford Univ.}, 1989.

\bibitem[Mog91]{MOGGI1991:monads}
E.~Moggi.
\newblock Notions of computation and monads.
\newblock {\em Information and Computation}, 93(1):55 -- 92, 1991.

\bibitem[Mul94]{Mulry1994}
P.S. Mulry.
\newblock Lifting theorems for {Kleisli} categories.
\newblock In Stephen Brookes, Michael Main, Austin Melton, Michael Mislove, and
  David Schmidt, editors, {\em Mathematical Foundations of Programming
  Semantics: 9th International Conference New Orleans, LA, USA, April 7--10,
  1993 Proceedings}, pages 304--319, Berlin, Heidelberg, 1994. Springer Berlin
  Heidelberg.

\bibitem[Par81]{park81:bisimulation}
D.~Park.
\newblock Concurrency and automata on infinite sequences.
\newblock In P.~Deussen, editor, {\em Theoretical Computer Science}, volume 104
  of {\em LNCS}, pages 167--183. Springer, 1981.

\bibitem[PG14]{PIROG2014:readermonad}
M.~Piróg and J.~Gibbons.
\newblock The coinductive resumption monad.
\newblock {\em Electronic Notes in Theoretical Computer Science}, 308:273 --
  288, 2014.
\newblock Proceedings of the 30th Conference on the Mathematical Foundations of
  Programming Semantics (MFPS XXX).

\bibitem[PT99]{PowerTuri99}
J.~Power and D.~Turi.
\newblock A coalgebraic foundation for linear time semantics.
\newblock In M.~Hofmann, D.~Pavlovi\'{c}, and G.~Rosolini, editors, {\em Proc.\
  8$^{\rm th}$ CTCS Conf.}, volume~29 of {\em Electronic Notes in Theoretical
  Computer Science}. Elsevier, 1999.

\bibitem[Rut00]{r:universal-coalgebra}
J.J.M.M. Rutten.
\newblock Universal coalgebra: a theory of systems.
\newblock {\em Theoretical Computer Science}, 249(1):3 -- 80, 2000.
\newblock Modern Algebra.

\end{thebibliography}

\end{document}